\newcommand{\readd}[1]{{#1}}
\newcommand{\est}{\text{CEE}}
\newcommand{\readdBiostat}[1]{#1}
\newcommand{\gammaplaceholder}{0.5}
\newtheorem{thm}{Theorem}
\newtheorem{lem}[thm]{Lemma}
\newtheorem{asu}{Assumption}
\newtheorem{rmk}{Remark}
\newtheorem{proof}{Proof}
\begin{document}
\doublespacing

\title{Estimating Causal Effects for Binary Outcomes Using Per-Decision Inverse Probability Weighting}

\author{YIHAN BAO \\
\textit{Department of Statistics and Data Science, Yale University, New Haven, Connecticut, USA} \\
\texttt{yihan.bao@yale.edu} \\
\AND
LAUREN BELL \\
\textit{Leeds Institute of Clinical Trials Research, University of Leeds, Leeds, UK} \\
\texttt{lauren.bell@mrc-bsu.cam.ac.uk}\\
\AND
ELIZABETH WILLIAMSON \\
\textit{Department of Medical Statistics, London School of Hygiene and Tropical Medicine, London, United Kingdom} \\
\texttt{elizabeth.williamson@lshtm.ac.uk} \\
\AND
CLAIRE GARNETT\\
\textit{Department of Behavioural Science and Health, University College London, London, United Kingdom} \\
\texttt{c.garnett@ucl.ac.uk} \\
\AND
TIANCHEN QIAN$^*$\\ \textit{Department of Statistics, University of California, Irvine, California, USA}\\
\texttt{t.qian@uci.edu}}


\markboth%
{Y. Bao and others}
{Estimating Causal Effects for Binary Outcomes Using Per-Decision Inverse Probability Weighting}

\maketitle

\footnotetext{* To whom correspondence should be addressed.}

\begin{abstract}
{Micro-randomized trials are commonly conducted for optimizing mobile health interventions such as push notifications for behavior change. In analyzing such trials, causal excursion effects are often of primary interest, and their estimation typically involves inverse probability weighting (IPW). However, in a micro-randomized trial, additional treatments can often occur during the time window over which an outcome is defined, and this can greatly inflate the variance of the causal effect estimator because IPW would involve a product of numerous weights. To reduce variance and improve estimation efficiency, we propose \readdBiostat{two new estimators} using a modified version of IPW, which we call ``per-decision IPW''. \readdBiostat{The second estimator further improves efficiency using the projection idea from the semiparametric efficiency theory.} These estimators are applicable when the outcome is binary and can be expressed as the maximum of a series of sub-outcomes defined over sub-intervals of time. We establish the estimators' consistency and asymptotic normality. Through simulation studies and real data applications, we demonstrate substantial efficiency improvement of the proposed estimator over existing estimators. The new estimators can be used to improve the precision of primary and secondary analyses for micro-randomized trials with binary outcomes.}
\end{abstract}

\section{Introduction} \label{sec:introduction}

Mobile health interventions that provide individual-level support have been undergoing fast growth \citep{silva2015mobile}. Smartphone- or wearable-based messages and push notifications are used for promoting antiretroviral adherence, smoking cessation, and weight maintenance \citep{free2013effectiveness, battalio2021sense2stop, klasnja2021quality}. A novel experimental design, micro-randomized trial (MRT), was proposed in recent years for developing such interventions and has been increasingly used by domain scientists \citep{dempsey2015, liao2016sample, klasnja2015microrandomized, rabbi2018sara}. ``Micro-randomization'' refers to the repeated randomization among treatment options (e.g., delivering a message or not) in the trial. Such randomization can occur 100s or 1000s of times within each participant. For example, in Sense2Stop, a smoking cessation MRT, randomization occurs every minute for 10 days. In HeartSteps II \citep{spruijt2022advancing}, a physical activity MRT, randomization occurs every 5 minutes for 90 days. We describe a typical MRT design in Section \ref{subsec:data-structure}.

As a result, MRT data consist of longitudinal treatments and outcomes. These outcomes are called ``proximal outcomes'', because they are measured in the near term after each randomization and are often the direct target of the interventions. The primary and secondary analyses of an MRT focus on the causal excursion effect of the treatment on the proximal outcome and the effect moderation by baseline variables and time-varying contexts. Causal excursion effects provide information on whether the intervention is worth further investigation (through assessing the fully marginal effect) and, if so, how one might improve the intervention (through assessing effect moderation by contextual information). We formally introduce the causal excursion effect in Section \ref{subsec:cee-definition}.

Because of the repeated randomization, multiple treatments can occur during the time window over which the proximal outcome is defined. This time window is referred to as the ``proximal outcome window.'' For example, in the Sense2Stop MRT, the proximal outcome is whether the individual experienced a stress episode during the 120 minutes following a randomization, whereas randomizations occur every minute. Thus, in the 120-minute proximal outcome window, there are 119 chances for additional treatments to occur. In the HeartSteps II MRT, the proximal outcome is whether the individual takes enough steps to constitute an activity bout during the 30 minutes following a randomization, whereas randomizations occur every 5 minutes. Existing estimators use inverse probability weighting (IPW) to account for the additional treatments in the proximal outcome window \citep{dempsey2020stratified, qian2021estimating}. These estimators effectively discard the data from the time points where additional treatments do occur during the proximal outcome window, and they upweight the data from the time points where additional treatments do not occur during the proximal outcome window. Such estimators can have high variance because (i) any time point where additional treatments occur in the proximal outcome window will not contribute to the estimation; (ii) the product of the IP weights can be large for the time points where no additional treatments occur in the proximal outcome window. Consequently, a large sample may be necessary to identify promising interventions with statistical confidence.

Our main contribution is \readdBiostat{two more efficient estimators} for such settings---MRTs where additional treatments may occur during the proximal outcome window. We focused on MRTs with binary proximal outcomes, which are common in practice \citep{rabbi2018sara, klasnja2021quality}. For such MRTs, the current state-of-the-art estimator was developed by \citet{qian2021estimating}. We proposed \readdBiostat{two more efficient estimators} for the causal excursion effect and established its consistency and asymptotic normality. The relative efficiency of the proposed estimators over that of \citet{qian2021estimating} in most simulation scenarios ranges between 1.08 and 1.45 or even higher, which roughly translates to savings in sample size between 7\% and 31\% or even larger. \readdBiostat{The efficiency improvement is greater when there is a higher chance of one or more additional treatments occurring in the proximal outcome window; for example, when the proximal outcome window consists of more decision points or when the randomization probability is larger. When the proximal outcome window doesn't contain additional decision points (i.e., when the proximal outcome is measured immediately), the proposed method is equivalent to the EMEE in \citet{qian2021estimating}. }

Our estimators are applicable as long as the binary proximal outcome can be expressed as the maximum of a series of sub-outcomes defined over sub-intervals of time. This is often the case because a binary proximal outcome is often defined as whether a specific event occurred during a pre-specified time window. Examples of such events include a stress episode during a 120-minute window \citep{battalio2021sense2stop}, a physical activity bout during a 30-minute window \citep{liao2020personalized}, or an interaction with a smartphone app during a 3-day window \citep{bell2020notifications}.
The key intuition for the efficiency improvement is to realize that the value of a binary proximal outcome is determined as soon as the event of interest occurs, and there is no need to wait till the end of the proximal outcome window to assign this value. Thus, \readdBiostat{when performing estimation,} one does not need to multiply together all the IP weights but only those weights before the occurrence of the event.
Consequently, (i) some data points that would otherwise have been removed (due to being multiplied with a weight 0) are retained; and (ii) for the time points where no additional treatments occur in the proximal outcome window, the product of the IP weights becomes smaller (due to fewer terms in the product). This leads to the efficiency gain.
We refer to this modification of IPW as ``per-decision IPW'', because the inclusion of each weight is determined by whether the event has occurred by each decision point (i.e., a time point where randomization occurs). A related modification idea of the IP weights for efficiency improvement was proposed by \citep{precup2000eligibility} in the reinforcement learning literature for the setting with a single continuous outcome (i.e., the cumulative reward) that is the sum of sub-outcomes. Our method applies to a different setting where the outcome itself is longitudinal binary. \readdBiostat{Furthermore, our second estimator further improves efficiency by leveraging the projection idea from the semiparametric efficiency theory \citep{bickel1993efficient,murphy2001marginal}.}

The rest of the paper is organized as follows. In Section \ref{subsec:drinkless} we introduce the Drink Less MRT, our motivating example. In Section \ref{sec:definition} we define the causal excursion effect using the potential outcomes framework and provide identifiability results. In Section \ref{sec:method} we present the first new estimator, pd-EMEE, which capitalizes on the per-decision IPW idea, and the asymptotic properties of this estimator. \readdBiostat{In Section 4, we present the second new estimator, pd-EMEE2, which further improves efficiency using the projection idea from the semiparametric efficiency theory, and derive the asymptotic properties of this estimator.} In Section \ref{sec:simulation} we present results from simulation studies on the consistency and efficiency of the two proposed estimators. In Section \ref{sec:application} we illustrate the proposed method using the Drink Less MRT \readdBiostat{and HeartSteps I MRT datasets}. Section \ref{sec:discussion} concludes with a discussion.

\subsection{Data Example: Drink Less MRT}
\label{subsec:drinkless}

Drink Less is a smartphone app to help users reduce harmful alcohol consumption \citep{garnett2019development, garnett2021refining}. An MRT was conducted (embedded as one arm in a three-arm RCT) to assess and improve the effectiveness of a push notification intervention component that targets user engagement \citep{bell2020notifications}. Each of the 349 participants in the MRT was randomized every day at 8 pm for 30 days, each time with a probability of 0.6 to receive an engagement intervention in the form of a push notification and a probability of 0.4 to receive nothing. The content of each push notification was randomly selected from a pool of messages that encouraged the user to keep track of their drinking behavior using the app. The researchers have identified a large causal effect on app engagement within 1 hour of the intervention delivery and a smaller effect within 24 hours of the intervention delivery \citep{bell2023notifications}. \readdBiostat{In this illustration of our method, we aim to explore how this effect depends on the choice of the time window, from 1 day to 5 days. We define the binary proximal outcome as whether the user opens the app in the subsequent $\Delta$ days after each randomization at 8 pm. For example, when $\Delta = 3$, during the 3-day proximal outcome window there are two additional occurrences of randomization (at 8 pm in the next two days)}, and the probability of at least one additional treatment being delivered in the proximal outcome window is 0.84 ($=1-0.4^2$). Therefore, on average 84\% of the data points would be discarded if one were to use the estimator by \citet{qian2021estimating}. Our proposed methods will substantially improve the estimation efficiency.

\subsection{Data Example: HeartSteps I MRT}
\label{subsec:heartsteps}

\readdBiostat{HeartSteps I MRT aimed at developing mHealth interventions for increasing physical activity among sedentary adults \citep{klasnja2015microrandomized}. In illustrating our method, we focus on one of the intervention components in the MRT, the activity suggestions. Each of the 37 participants was randomized five times a day for 42 days, each time with a probability of 0.6 to receive a contextually tailored activity suggestion in the form of a push notification, and a probability of 0.4 to receive nothing. The researchers have identified a substantial causal effect on steps taken within 30 minutes of the intervention delivery \citep{klasnja2018efficacy}. We aim to explore whether the activity suggestion has a prolonged effect in the next 24 hours. We define the binary proximal outcome as whether the participant takes more steps than a pre-determined threshold (e.g., 8,000 steps) in the 24 hours following an intervention delivery. During the 24-hour proximal outcome window, there are four additional occurrences of randomization, and thus our proposed method can improve the estimation efficiency.}

\section{Definitions and Assumptions}
\label{sec:definition}

\subsection{Notation and MRT Data Structure}
\label{subsec:data-structure}

Assume there are $n$ individuals in an MRT. Each individual is enrolled for $T$ time points at which treatments can be delivered. We will refer to these time points as decision points. In the presentation below we assume that $T$ is the same for all individuals, but our method generalizes easily to settings where $T$ varies across individuals. We use capital letters without subscript $i$ to denote observations of a generic individual. Denote the treatment assignment at decision point $t$ by $A_t$. For simplicity, we assume that the treatment assignment is binary, with $A_t = 1$ indicating treatment and 0 indicating no treatment. Denote observations collected after decision point $t-1$ and before decision point $t$ by vector $X_t$. Let $X_{T+1}$ denote the information observed after decision point $T$. As will be discussed shortly, the $X_t$'s include both covariates and outcomes. For each individual, the entire trajectory of observed information is $O = (X_{1},A_{1},X_{2},A_{2},\ldots,X_{T},A_{T},X_{T+1})$. We assume that data from different individuals are independently and identically distributed draws from an unknown distribution $P_0$. Unless noted otherwise, all expectations are taken with respect to $P_0$.

When designing the intervention (before the MRT starts), the researchers may deem it unsafe or unethical to deliver a push notification at certain decision points---for example, when the individual is driving. This is captured by the notion of ``being (un)available to be randomized'' \citep{boruvka2018assessing}. Formally, $X_t$ contains an indicator $I_t$, where $I_t = 1$ if the individual is available to be randomized at decision point $t$ and $I_t = 0$ otherwise. If $I_t$ = 0, $A_t = 0$ deterministically. 

We use overbar to denote a sequence of variables: for example, $\bar{A}_{t}=(A_{1},A_{2},\ldots,A_{t})$.
Information accrued from an individual up to decision point $t$ is denoted by $H_t = (X_1,A_1,X_2,A_2,\ldots,$ $X_{t-1},A_{t-1},X_t) = (\bar{X}_t,\bar{A}_{t-1})$. At each $t$, $A_t$ is randomized with randomization probability depending on $H_t$, denoted by $p_t(H_t) = P(A_t = 1 \mid H_t)$.

We consider MRTs where the proximal outcome takes a binary value and is defined over a window of length $\Delta$ (of decision points) following each decision point; the same proximal outcome was considered by \citet{qian2021estimating}. Here $\Delta \geq 1$ is a fixed positive integer. This window is the ``proximal outcome window'' referred to in Section \ref{sec:introduction}. Formally, suppose $Y_{t,\Delta}$ is a function of $H_{t + \Delta}$; i.e., the definition of the proximal outcome for decision point $t$ can depend on anything observed up to  \readdBiostat{time $t+\Delta$ and} including $X_{t+\Delta}$. For example, if $\Delta = 1$, then $Y_{t,\Delta}$ is a function of $H_{t + 1}$, i.e., is observed before $A_{t+1}$. If $\Delta = 2$, then $Y_{t,\Delta}$ is a function of $H_{t+2}$, in which case additional treatments (precisely $A_{t+1}$ in this case) can occur before the value of $Y_{t,\Delta}$ is observed. Figure \ref{fig:data-structure} depicts for the case of $\Delta = 3$ a few consecutive decision points and the variables we defined, except for the $R$-variables which will be defined in Section \ref{subsec:maximum-property}. 

\begin{figure}[htbp]
    \begin{center}
        \caption{Illustration of the data structure and the maximum property with $\Delta = 3$.}
        \includegraphics[width=\textwidth]{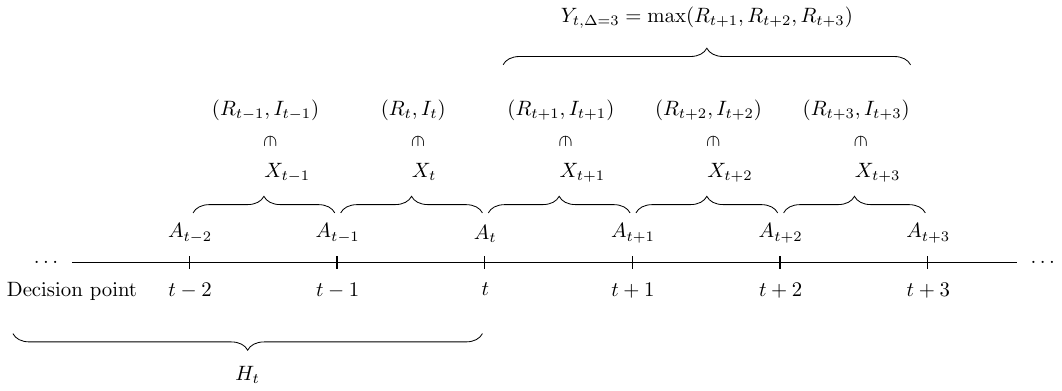}
        \label{fig:data-structure}
    \end{center}
\end{figure}

In the Drink Less MRT, the decision point is every day at 8 pm. Each participant is in the study for $T = 30$ days. $A_t$ is whether a push notification is delivered at 8 pm on day $t$. $X_t$ includes baseline variables and time-varying variables such as the number of days in the study and the user's previous interaction with the app. The researchers deemed that it is always appropriate to deliver push notifications, so $I_t = 1$ for all $t$. The proximal outcome, $Y_{t,\Delta}$, is whether the participant logged in to the app in the subsequent $\Delta$ days, where we consider $\Delta = 1,\ldots, 5$ in separate analyses.

\readdBiostat{In the HeartSteps I MRT, the decision point occurs five times a day during the morning commute, lunchtime, mid-afternoon, evening commute, and after dinner. Each participant is in the study for $T = 210$ decision points (which amounts to 42 days). $A_t$ is whether an activity suggestion is delivered at decision point $t$. $X_t$ includes baseline variables and time-varying variables such as the participant's previous step count. $I_t = 0$ if the participant is driving, already walking, or with no internet connection. The proximal outcome, $Y_{t,\Delta}$, is whether the participant takes more steps than a given threshold in the next 24 hours (so $\Delta = 5$). We consider a few thresholds from 5,000 to 12,000 steps in separate analyses.}

\subsection{Potential Outcomes and Causal Excursion Effects}
\label{subsec:cee-definition}

For a mobile health intervention, causal quantities of interest include whether the intervention has an impact on an individual's behavior and how this impact varies with contextual information such as the number of days in the study. 
To define the causal effects precisely we use the potential outcomes framework \citep{rubin1974estimating, robins1986new}.
Lowercase letters represent instantiations (non-random values) of the corresponding capital letter random variable. For example, $a_t$ is an instantiation of treatment $A_t$. For every individual, denote by $X_t(\bar{a}_{t-1})$ the $X_t$ that would have been observed at decision point $t$ if the individual had been assigned a treatment sequence of $\bar{a}_{t-1}$ prior to $t$. The potential outcome of $H_t$ under $\bar{a}_{t-1}$ is
\begin{equation}
    H_{t}(\bar{a}_{t-1}) = \{X_1, a_1, X_2(a_1), a_2,  X_3(\bar{a}_2), a_3, \ldots, X_{t-1}(\bar{a}_{t-2}), a_{t-1}, X_{t}(\bar{a}_{t-1})\}.
\end{equation}
The potential outcome of $Y_{t,\Delta} = y(H_{t+\Delta})$ is $Y_{t, \Delta}(\bar{a}_{t+\Delta-1})$.


We consider the causal excursion effect (CEE) of $A_t$ on $Y_{t,\Delta}$ on the log relative risk scale \citep{qian2021estimating}, defined as
\begin{equation}
    \est \left\{S_t(\bar{A}_{t-1}) \right\}=\log \frac{E\left\{Y_{t, \Delta}(\bar{A}_{t-1}, 1, \bar{0}_{\Delta - 1}) \mid S_t(\bar{A}_{t-1}), I_t(\bar{A}_{t-1})=1\right\}}{E\left\{Y_{t, \Delta}(\bar{A}_{t-1}, 0, \bar{0}_{\Delta - 1}) \mid S_t(\bar{A}_{t-1}), I_t(\bar{A}_{t-1})=1\right\}}.
    \label{eq:causal-effect-def}
\end{equation}
Here $\bar{0}_{\Delta - 1}$ is a vector of 0's of length $\Delta-1$; $(\bar{A}_{t-1}, 1, \bar{0}_{\Delta - 1})$ and $(\bar{A}_{t-1}, 0, \bar{0}_{\Delta - 1})$ are sequences of treatment assignments of length $t+\Delta-1$. $S_t(\bar{A}_{t-1})$ is a length-$p$ vector of summary variables from $H_t(\bar{A}_{t-1})$, usually chosen by the domain scientist based on the scientific question regarding effect modification. In Drink Less MRT, we will first set $S_t(\bar{A}_{t-1}) = 1$ to obtain the fully marginal effect (i.e., averaged over all participants and all days in the study) of a push notification on the app use in the subsequent $\Delta$ days. We will then set $S_t(\bar{A}_{t-1}) = (1, \text{Day}_t)$ to assess how the effect is moderated by days in the study.

The causal excursion effect \eqref{eq:causal-effect-def} is commonly used in the primary and secondary analyses for MRTs \citep{nahum2021translating, klasnja2021quality}. It differs from most causal effects under time-varying treatments such as marginal structural models and structural nested mean models \citep{robins1994correcting, robins2000marginal}. Instead of contrasting fixed treatment trajectories, \eqref{eq:causal-effect-def} contrasts between two stochastic treatment trajectories, $(\bar{A}_{t-1}, 1, \bar{0}_{\Delta - 1})$ and $(\bar{A}_{t-1}, 0, \bar{0}_{\Delta - 1})$. That is, \eqref{eq:causal-effect-def} is a causal contrast between two excursions from the treatment policy: following the treatment policy up to time $t - 1$ and then deviating from the policy to assign $A_t = 1$ and no treatment for the next $\Delta-1$ decision points, and following the treatment policy up to time $t-1$ and then deviating to assign $A_t = 0$ and no treatment for the next $\Delta-1$ decision points. \eqref{eq:causal-effect-def} is conditional on $I_t(\bar{A}_{t-1})=1$ and $S_t(\bar{A}_{t-1})$, meaning that it concerns the subpopulation who, after following the treatment policy up to time $t - 1$, are available to be randomized at $t$ and within a stratum defined by $S_t(\bar{A}_{t-1})$. In practice we often impose a model on how $\est \left\{S_t(\bar{A}_{t-1}) \right\}$ depends on $S_t(\bar{A}_{t-1})$ to pool across $S_t$-strata.

$\est \left\{S_t(\bar{A}_{t-1}) \right\}$ depends on the treatment policy in the MRT, because the variables in $H_t(\bar{A}_{t-1}) \setminus S_t(\bar{A}_{t-1})$ are marginalized over in the conditional expectations in \eqref{eq:causal-effect-def}. Such dependence is scientifically desirable for two reasons. First, the causal excursion effect approximates the treatment effect in a real-world implementation. A good MRT policy would already have incorporated implementation considerations such as burden and feasibility through the choice of randomization probability and availability criteria, so a feasible policy will not deviate too far from the MRT policy. Second, the causal excursion effect indicates the effective deviations from the MRT policy and how the policy might be improved. A fully marginal effect (by setting $S_t(\bar{A}_{t-1}) = 1$ in \eqref{eq:causal-effect-def}) indicates whether a treatment is worth further investigation, and an effect modification analysis (by setting $S_t(\bar{A}_{t-1})$ to be certain time-varying covariates in \eqref{eq:causal-effect-def}) indicates whether the MRT policy should be modified to depend on time-varying covariates. In addition, the dependence of $\est \left\{S_t(\bar{A}_{t-1}) \right\}$ on the MRT policy resembles the primary analysis in factorial designs and allows one to design trials with a higher power to detect meaningful effects. Related marginalization ideas were considered by \citep{robins2004optimal, neugebauer2007causal}. See \citet{boruvka2018assessing}, \citet{qian2021estimating}, and \citet{qian2021micro} for more discussion on causal excursion effect and its use in MRT. See \citet{guo2021discussion} for a comprehensive comparison of various causal estimands under time-varying treatments.


\subsection{Maximum Property of the Binary Proximal Outcome}
\label{subsec:maximum-property}

We consider settings where the binary proximal outcome $Y_{t,\Delta}$ can be expressed as the maximum of a series of sub-outcomes defined over sub-intervals of time. Formally, let $R_{t+1}$ denote a sub-outcome---that is, a binary response observable after decision point $t$ and before $t+1$ and thus is a function of $X_{t+1}$. We assume that $Y_{t,\Delta}$ satisfies the \textit{maximum property}:
\begin{align}
    Y_{t,\Delta} = \max(R_{t+1}, R_{t+2}, \ldots, R_{t+\Delta}). \label{eq:maximum-property}_{}
\end{align}
That is, $Y_{t, \Delta} = 1$ if and only if any of the sub-outcomes between $t$ and $t+\Delta$ is 1. Figure \ref{fig:data-structure} shows how $Y_{t,\Delta}$ is related to the sub-outcomes when $\Delta = 3$. We assume that $R_{t+\Delta}$ and $Y_{t, \Delta}$ are well-defined even when $t+\Delta > T+1$; this is satisfied when the researchers continue to follow up trial participants for $\Delta$ time points after decision point $T$.

Equation \eqref{eq:maximum-property} holds if the binary outcome $Y_{t,\Delta}$ is whether an event of interest occurs between $t$ and $t+\Delta$ and such events are ``instantaneous''. In the Drink Less MRT example with $\Delta = 3$, the event is a user logging into the app. Logging in between $t$ and $t+3$ is equivalent to logging in between $t$ and $t+1$, or between $t+1$ and $t+2$, or between $t+2$ and $t+3$. Thus, in this example \eqref{eq:maximum-property} is satisfied with $R_{t+1}$ denoting whether the individual logging into the app between 8 pm of day $t$ and 7:59 pm of day $t+1$, in which case we have $Y_{t,\Delta = 3} = \max(R_{t+1}, R_{t+2}, R_{t+3})$.

\readdBiostat{If, however, $Y_{t,\Delta}$ is based on a cumulative event, then \eqref{eq:maximum-property} may not hold. For example, in the HeartSteps I MRT, the binary proximal outcome is whether a participant takes more than, say, 10,000 steps in the next 24 hours (a proximal outcome window spanning over $\Delta = 5$ decision points). In this case, no sub-outcomes $R_{t+1}$ can satisfy \eqref{eq:maximum-property} because the total step count is cumulative instead of instantaneous. We generalize the maximum property to accommodate such settings in Supplementary Material \ref{appen:sec:generalized-max-property} by introducing a double-subscript version of the sub-outcome $R$. However, this generalization would make the notation too cumbersome without adding substantial novelty, so we focus on the current version of the maximum property \eqref{eq:maximum-property} in the sequel when we introduce the methods.}

\subsection{Identification of Parameters}
\label{subsec:identification}

To express the causal excursion effect using observed data, we assume the following.

\begin{asu}[Stable unit treatment value assumption (SUTVA)]
  \label{assumption1}
  The data observed are equal to the set of potential outcomes corresponding to the treatment assignment actually received, and one's potential outcomes are not affected by others' treatment assignments. That is, $X_t = X_t(\bar{A}_{t-1})$ for any $t \leq T$. This implies $R_t(\bar{A}_{t-1}) = R_t$ and $Y_{t,\Delta}(\bar{A}_{t+\Delta-1})=Y_{t,\Delta}$.
\end{asu}


\begin{asu}[Sequential ignorability]
  \label{assumption2}
  For $1 \leq t \leq T$,
  the potential outcomes $\{R_j(\bar{a}_{j-1}):\bar{a}_{j-1}\in\{0,1\}^{\otimes j-1}, (t+1) \leq j\leq T\}$ are independent of $A_t$ given $H_t,I_t=1$. This implies that $\{Y_{j,\Delta}(\bar{a}_{j+\Delta-1}):\bar{a}_{j+\Delta-1}\in\{0,1\}^{\otimes j+\Delta-1}, t \leq j\leq T\}$ are independent of $A_t$ given $H_t,I_t=1$.
\end{asu}

\begin{asu}[Positivity]
  \label{assumption3}
  $P(A_t=1\mid H_t,I_t=1)$ is bounded away from 0 and 1.
\end{asu}

\begin{asu}[Maximum property]
  \label{assumption4}
  $Y_{t,\Delta}(\bar{a}_{t+\Delta-1}) = \max(R_{t+1}(\bar{a}_t), R_{t+2}(\bar{a}_{t+1}), \ldots, R_{t+\Delta}(\bar{a}_{t+\Delta-1}))$ for $1 \leq t \leq T$.
\end{asu}

Assumption \ref{assumption1} is standard for causal inference settings without interference. It may be violated if the mobile health intervention incorporates community features or the participants are clustered, in which case a causal inference framework that allows interference should be used instead \citep{hudgens2008toward,shi2023assessing}. (There is no community feature in the Drink Less app, our motivating example.) Assumption \ref{assumption2} is always valid for MRTs because the randomization probability of $A_t$ depends (at most) on $H_t$ by study design. Assumption \ref{assumption3} is valid as long as the researchers do not set the randomization probability to exactly 0 or 1 when $I_t=1$. Assumption \ref{assumption4} is the potential outcomes version of the maximum property \eqref{eq:maximum-property}.


Let $\mathbbm{1}(\cdot)$ denote the indicator function, and define $0^0 = 1$. In Supplementary Material \ref{appen:sec:causal-assumption-and-identifiability-proof}, we prove that under the above 4 assumptions the casual excursion effect in (\ref{eq:causal-effect-def}) can be expressed using observed data distribution:
\begin{align}
    \hspace{-2em}
    \est\left\{S_t(\bar{A}_{t-1})\right\} 
    = \log \frac{E \bigg[ E\left\{\prod_{j=t+1}^{t+\Delta-1}\left[\frac{\mathbbm{1}(A_{j}=0)}{1-p_{j}\left(H_{j}\right)}\right]^{\mathbbm{1}(\max_{1\leq s\leq j-t}R_{t+s}=0)} Y_{t,\Delta}\mid A_{t}=1,H_{t},I_{t}=1\right\}  \mid S_t, I_t = 1 \bigg]} {E \bigg[ E\left\{ \prod_{j=t+1}^{t+\Delta-1}\left[\frac{\mathbbm{1}(A_{j}=0)}{1-p_{j}\left(H_{j}\right)}\right]^{\mathbbm{1}(\max_{1\leq s\leq j-t}R_{t+s}=0)} Y_{t,\Delta}\mid A_{t}=0,H_{t},I_{t}=1\right\}  \mid S_t, I_t = 1 \bigg]}.
    \label{eq:identification}
\end{align}
We define $\prod_{j=t+1}^{t+\Delta-1}\left[\frac{\mathbbm{1}(A_{j}=0)}{1-p_{j}\left(H_{j}\right)}\right]^{\mathbbm{1}(\max_{1\leq s\leq j-t}R_{t+s}=0)} = 1$ if $\Delta = 1$. Equation \eqref{eq:identification} will be used to establish the asymptotic property of our estimator. We will slightly abuse the notation and denote the right-hand side of (\ref{eq:identification}) by $\est(S_{t})$.

\section{pd-EMEE: An Estimator Using Per-Decision Inverse Probability Weights}
\label{sec:method}

Consider a linear model for the causal excursion effect in \eqref{eq:identification}:
\begin{equation}
    \est(S_{t}) = S_t^T \beta \quad \text{for all } 1 \leq t \leq T,
    \label{eq:linear-model}
\end{equation}
where $\beta$ is $p$-dimensional. Assume the first entry of $S_t$ is 1 so that $\beta$ contains the intercept. For example, when $S_t = 1$, $\est(S_t) = \beta_0$ is the fully marginal effect. When $S_t = (1, Z_t)$ for some time-varying covariate $Z_t$, $\est(S_t) = \beta_0 + \beta_1 Z_t$, where $\beta_0$ denotes the effect when $Z_t = 0$ and $\beta_1$ denotes the increase in the effect when $Z_t$ increases by 1. Model \eqref{eq:linear-model} also accommodates flexible time-varying effects when $S_t$ contains basis functions. We assume that \eqref{eq:linear-model} is a correct model and our goal is to estimate the unknown $p$-dimensional vector $\beta$.

\readdBiostat{Our first estimator for $\beta$ is denoted by $\hat\beta$ and referred to as pd-EMEE (a name we will explain later)}. It requires a working model for $E\{Y_{t, \Delta}(\bar{A}_{t-1}, 0, \bar{0}_{\Delta - 1}) \mid H_{t}, I_{t}=1 \}$, which characterizes the expected outcome under no treatment between $t$ and $t + \Delta$ given the history $H_t$ and $I_t = 1$. We use $\exp \{g(H_t)^T \alpha \}$ as the working model, where $g(H_t)$ is a vector of features constructed from  $H_t$ and $\alpha$ is a $q$-dimensional nuisance parameter. As will be established in Theorem \ref{thm:CAN}, $\hat\beta$ is consistent and asymptotically normal regardless of whether the working model $\exp \{g(H_t)^T \alpha \}$ is a correct model for $E\{Y_{t, \Delta}(\bar{A}_{t-1}, 0, \bar{0}_{\Delta - 1}) \mid H_{t}, I_{t}=1 \}$. 

The estimator $\hat\beta$, along with the nuisance estimator $\hat\alpha$, is obtained by solving an estimating equation $\frac{1}{n}\sum_{i=1}^n U_i(\alpha,\beta) = 0$, with the estimating function $U_i(\alpha,\beta)$ defined as 
\begin{equation}
    U_i(\alpha,\beta) = \sum_{t=1}^T I_{it} e^{- A_{it} S_{it}^T \beta } \epsilon_{it} (\alpha,\beta) M_{it} W_{it} \begin{bmatrix}
        g(H_{it}) \\           
        \{A_{it} - \tilde{p}_t(S_{it})\}S_{it} 
    \end{bmatrix}.
    \label{eq:ee}
\end{equation}
The estimating function is multiplied by $I_{it}$ because the causal excursion effect only concerns the decision points where individuals are available to be randomized.
The factor $e^{- A_{it} S_{it}^T\beta}$ is analogous to the blipping-down term in multiplicative structural nested mean models \citep{robins1994correcting}.
The error term $\epsilon_{it}(\alpha,\beta)$ is defined as $\epsilon_{it}(\alpha,\beta) = Y_{it, \Delta} - e^{g(H_{it})^T \alpha + A_{it} S_{it}^T \beta}$, where $Y_{it,\Delta}$ is the proximal outcome $Y_{t,\Delta}$ for the $i$-th individual.
The $M_{it}$ is defined as 
$M_{it} = \left\{\frac{\tilde{p}_t(S_{it})}{p_t(H_{it})}\right\}^{A_{it}} \left\{\frac{1 - \tilde{p}_t(S_{it})}{1 - p_t(H_{it})}\right\}^{1 - A_{it}}$,
which is similar to the stabilized IPW \citep{robins2000marginal} and is employed by other estimators for causal excursion effects \citep{boruvka2018assessing,qian2021estimating}. Here $\tilde{p}_t(S_{it}) \in (0,1)$ is similar to the numerator of a stabilized weight and can be chosen to improve efficiency by making $\tilde{p}_t(S_{it})$ as close to $p_t(H_{it})$ as possible. For example, if $p_t(H_{it})$ is a constant or depends only on $S_{it}$, one can simply set $\tilde{p}_t(S_{it}) = p_t(H_{it})$ which makes $M_{it} \equiv 1$; if $p_t(H_{it})$ depends on variables in $H_{it}\setminus S_{it}$, one should not set $\tilde{p}_t(S_{it}) = p_t(H_{it})$ but can set $\tilde{p}_t(S_{it})$ to be the prediction from a logistic regression with $A_{it}$ being the response and $S_{it}$ being the predictor.
The $W_{it}$ is defined as
$W_{it} = \prod_{j=t+1}^{t+\Delta-1} \left\{\frac{\mathbbm{1}(A_{ij}=0)}{1-p_{j}(H_{ij})}\right\}^{\mathbbm{1}(\max_{1\leq s\leq j-t}R_{i,t+s}=0)}$
which is the per-decision inverse probability weight that we will elaborate on later.

We established the consistency and asymptotic normality of $\hat\beta$ in Theorem \ref{thm:CAN}, the proof of which is in Supplementary Material \ref{appen:sec:theorem1-proof}.

\begin{thm}
  \label{thm:CAN}
  Assume the causal effect model \eqref{eq:linear-model} and Assumptions 1-4 in Section \ref{subsec:identification} hold. Let $\beta^*$ denote the true value of $\beta$ corresponding to the data generating distribution $P_0$. Define $\partial_\beta U_i(\alpha,\beta) = \partial U_i(\alpha,\beta) / \partial (\alpha^T, \beta^T)$. Under regularity conditions, $\hat\beta$ is consistent for $\beta^*$ and $\sqrt{n}(\hat\beta-\beta^*)$ is asymptotically normal with mean zero and variance-covariance matrix $\Sigma$, and $\Sigma$ can be consistently estimated by the lower block diagonal $(p \times p)$ entry of the matrix 
  \begin{align*}
      \bigg\{\frac{1}{n}\sum_{i=1}^n \partial_\beta U_i(\hat\alpha,\hat\beta) \bigg\}^{-1} \bigg\{\frac{1}{n}\sum_{i=1}^n U_i(\hat\alpha, \hat\beta) U_i(\hat\alpha, \hat\beta)^T\bigg\} \bigg\{\big[\frac{1}{n}\sum_{i=1}^n \partial_\beta U_i(\hat\alpha, \hat\beta)\big]^{-1}\bigg\}^T.
  \end{align*}
\end{thm}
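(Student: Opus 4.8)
The plan is to treat the estimating equation \eqref{eq:estimatingeq} as a joint Z-estimating equation for $\theta = (\alpha^T, \beta^T)^T$ and apply the standard theory for M-estimators. First I would define the population parameter $\theta^* = (\alpha^{*T}, \beta^{*T})^T$ as a root of $E\{U_i(\theta)\} = 0$, with $\beta^*$ the true causal parameter in \eqref{eq:linear-model} and $\alpha^*$ defined by the first ($g(H_t)$) block of $U_i$ evaluated at $\beta = \beta^*$. The three substantive tasks are then: (i) verify $E\{U_i(\theta^*)\} = 0$ and that its $\beta$-block pins down the true $\beta^*$; (ii) establish consistency $\hat\theta \xrightarrow{p} \theta^*$; and (iii) establish $\sqrt{n}(\hat\theta - \theta^*) \xrightarrow{d} N(0,\cdot)$ with the stated sandwich form, reading off the lower $(p\times p)$ block as $\Sigma$.

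The heart of the argument is task (i), and specifically showing that the $\beta$-block has mean zero at $\beta^*$ for \emph{every} $\alpha$, which is exactly what makes $\hat\beta$ robust to misspecification of $\exp\{g(H_t)^T\alpha\}$. I would write $e^{-A_{it}S_{it}^T\beta}\epsilon_{it}(\alpha,\beta) = e^{-A_{it}S_{it}^T\beta}Y_{it,\Delta} - e^{g(H_{it})^T\alpha}$, noting that the second term no longer depends on $A_{it}$. For the first term, conditioning on $(A_t, H_t, I_t = 1)$ and applying the per-decision IPW identity behind \eqref{eq:identification}, namely $E\{W_{it}Y_{t,\Delta} \mid A_t = a, H_t, I_t = 1\} = E\{Y_{t,\Delta}(\bar A_{t-1}, a, \bar 0_{\Delta-1}) \mid H_t, I_t = 1\}$, and then summing over $A_t$ against $M_{it}\{A_{it} - \tilde p_t(S_{it})\}$, leaves $\tilde p_t(1-\tilde p_t)S_t\{e^{-S_t^T\beta}E(Y_{t,\Delta}(\bar A_{t-1},1,\bar 0_{\Delta-1}) \mid H_t, I_t=1) - E(Y_{t,\Delta}(\bar A_{t-1},0,\bar 0_{\Delta-1}) \mid H_t, I_t=1)\}$; a further iteration of the expectation from $H_t$ down to $S_t$, combined with the log-relative-risk model \eqref{eq:linear-model}, makes this vanish exactly at $\beta = \beta^*$. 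For the $\alpha$-carrying term $e^{g(H_{it})^T\alpha}$, the same summation produces the factor $E\{M_{it}W_{it}(A_{it} - \tilde p_t(S_{it})) \mid H_t, I_t = 1\}$, which I would show equals zero via a key lemma: $E\{W_{it} \mid A_t = a, H_t, I_t = 1\}$ does not depend on $a$ (indeed it equals $1$). Given the lemma, $E\{M_{it}(A_{it} - \tilde p_t) \mid H_t, I_t = 1\} = \tilde p_t(1-\tilde p_t) - \tilde p_t(1-\tilde p_t) = 0$, killing the $\alpha$-term and establishing the robustness.

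I expect this $W$-lemma to be the main obstacle and the genuinely novel part of the argument. I would prove it by backward induction over the proximal window: conditioning on $H_{t+\Delta-1}$, the last factor $[\mathbbm{1}(A_{t+\Delta-1}=0)/(1-p_{t+\Delta-1})]^{\mathbbm{1}(\max_{s} R_{t+s}=0)}$ has conditional mean $1$ (it equals $1$ when the event has already occurred, and otherwise integrates to $1$ by sequential ignorability and positivity, Assumptions \ref{assumption2} and \ref{assumption3}), so it can be peeled off, and iterating collapses the entire product to $1$. This telescoping is precisely where the maximum property (Assumption \ref{assumption4}) does the real work, and it is what separates this argument from existing EMEE proofs; the delicate point is keeping track of the indicator exponents so that a factor is removed only after the event has failed to occur through the corresponding sub-interval.

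The remaining tasks are routine. For consistency I would impose the usual regularity conditions---a compact parameter space, an integrable envelope dominating $U_i(\theta)$ over the parameter space, continuity of $U_i$ in $\theta$, and identifiability of $\theta^*$ as the unique root (the $\beta$-part using monotonicity of $e^{-S_t^T\beta}$)---so that a uniform law of large numbers yields $\sup_\theta \|n^{-1}\sum_i U_i(\theta) - E\{U(\theta)\}\| \xrightarrow{p} 0$ and hence $\hat\theta \xrightarrow{p} \theta^*$. For asymptotic normality I would Taylor-expand $0 = n^{-1}\sum_i U_i(\hat\theta)$ about $\theta^*$ to get $\sqrt{n}(\hat\theta - \theta^*) = -\{n^{-1}\sum_i \dot U_i(\bar\theta)\}^{-1} n^{-1/2}\sum_i U_i(\theta^*)$ for an intermediate $\bar\theta$; the multivariate central limit theorem gives $n^{-1/2}\sum_i U_i(\theta^*) \xrightarrow{d} N(0, E\{U_i U_i^T\})$, while consistency plus a law of large numbers gives $n^{-1}\sum_i \dot U_i(\bar\theta) \xrightarrow{p} E\{\dot U_i(\theta^*)\} =: D$, assumed invertible. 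This yields $\sqrt{n}(\hat\theta - \theta^*) \xrightarrow{d} N(0, D^{-1}E\{U_i U_i^T\}D^{-1,T})$, whose lower $(p\times p)$ block is $\Sigma$, and consistency of the plug-in sandwich estimator follows from continuity and a further law of large numbers with $\hat\theta$ in place of $\theta^*$.
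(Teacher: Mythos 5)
Your proposal is correct and takes essentially the same route as the paper: your task (i) is precisely the paper's Lemma \ref{lem:consistency} (the $\beta$-block of $E\{U_i(\alpha,\beta^*)\}$ vanishes for arbitrary $\alpha$, proved by conditioning on $H_t$, expanding over $I_t$ and $A_t$, and invoking the identification result \eqref{eq:identification} together with model \eqref{eq:linear-model}), and your tasks (ii)--(iii) match the paper's appeal to standard Z-estimator theory (van der Vaart's Theorems 5.9 and 5.21) plus a Glivenko--Cantelli/dominated-convergence argument for the plug-in sandwich estimator. One minor note: your W-lemma, $E(W_{it}\mid A_{it}=a,H_{it},I_{it}=1)=1$, usefully makes explicit a cancellation the paper compresses into its ``collecting terms'' step, but its telescoping proof needs only sequential ignorability and positivity (Assumptions \ref{assumption2} and \ref{assumption3}); the maximum property (Assumption \ref{assumption4}) does its real work inside the identification result \eqref{eq:identification} that you invoke for the $Y$-term, not in the telescoping itself.
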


For the estimand \readdBiostat{$\beta^*$}, \citet{qian2021estimating} proposed an estimator called the estimator for marginal excursion effect (EMEE). The EMEE is the same as our estimator except that $W_{it}$ is replaced by $W_{it}' = \prod_{j=t+1}^{t+\Delta-1}\left\{\frac{\mathbbm{1}(A_{ij}=0)}{1-p_{j}\left(H_{ij}\right)}\right\}$. The $W_{it}'$ is a product of $(\Delta-1)$ inverse probability weights (IPW) to account for the fact that the causal excursion effect \eqref{eq:causal-effect-def} concerns excursions under treatment trajectory $A_{i,t+1}=\cdots=A_{i,t+\Delta-1} = 0$. Let us examine how the EMEE weighs the $(i,t)$-decision point by $W_{it}'$ and later contrast with our estimator. Case 1: suppose for some $t+1 \leq j \leq t+\Delta-1$, $A_{ij} = 1$ (Table \ref{tab:weight-comparison}, Examples 1a and 1b). Because this is different from the treatment trajectory for the excursion, the $(i,t)$-decision point is assigned weight $W_{it}' = 0$ by EMEE. Case 2: suppose $A_{i,t+1}=\cdots=A_{i,t+\Delta-1} = 0$ (Table \ref{tab:weight-comparison}, Examples 2a and 2b). In this case, $W_{it}' > 1$ and the $(i,t)$-decision point is upweighted to account for itself and similar decision points that fall into Case 1 and thus have weight 0.

The key innovation in our proposed estimator is $W_{it}$, which improves efficiency over EMEE. We call $W_{it}$ the \textit{per-decision inverse probability weight}, \readdBiostat{and $\hat\beta$ the pd-EMEE method,} because the inclusion of each of the $(\Delta-1)$ IPW terms is determined by each decision point: for each $j$ from $t+1$ to \readdBiostat{$t+\Delta-1$}, if $\max_{1\leq s\leq j-t}R_{i,t+s}=0$ then $\frac{\mathbbm{1}(A_{ij}=0)}{1-p_{j}\left(H_{ij}\right)}$ is included in $W_{it}$; otherwise it is not included. Therefore, $W_{it}$ includes in its product the IPW terms from $j=t+1$ to right before the first occurrence of some $R_{j^*} = 1$. By not including the IPW terms after this $R_{j^*} = 1$, the per-decision IPW does not account for the treatment assignments after $j^*$. This is appropriate because the maximum property \eqref{eq:maximum-property} implies that the treatment assignments after $j^*$ don't affect the value of $Y_{t,\Delta}$ (which has to equal 1). To see how our estimator improves efficiency over the EMEE, let us now examine how our estimator weighs the $(i,t)$-decision point by $W_{it}$. For Case 1 described above, if some $R_j = 1$ occurs before the first $A_k = 1$ (Table \ref{tab:weight-comparison}, Example 1b), the $(i,t)$-decision point will have a positive weight. For Case 2 described above, if the first $R_{j^*} = 1$ occurs for $j^* < t+\Delta-1$ (Table \ref{tab:weight-comparison}, Example 2b), then $W_{it}$ will include fewer terms than $W_{it}'$ and thus be smaller, because $ W_{it} = \prod_{j=t+1}^{j^*}\frac{1}{1-p_{j}\left(H_{ij}\right)} < \prod_{j=t+1}^{t+\Delta-1}\frac{1}{1-p_{j}\left(H_{ij}\right)} = W_{it}'$.

To summarize, by capitalizing on the maximum property \eqref{eq:maximum-property}, the per-decision IP weight $W_{it}$ makes use of some decision points that would be otherwise discarded because $W_{it}' = 0$ (such as Example 1b) and reduces the weight of some other decision points (such as Example 2b), and this leads to the efficiency improvement. The efficiency improvement of $\hat\beta$ over EMEE is larger when a larger proportion of the decision points belong to the above two types, which happens when the proximal outcome window length $\Delta$ is large, when the randomization probability is large, or when the baseline success probability (i.e., the success probability of the proximal outcome under no treatment) is large. When the proximal outcome window is shorter than the distance between two decision points so that $\Delta = 1$, $\hat\beta$ is equivalent to EMEE. \readdBiostat{We will use ``relative efficiency'' to quantify the efficiency improvement, with relative efficiency between two estimators $\hat\theta_1$ and $\hat\theta_2$ defined as $\text{Var}(\hat\theta_2) / \text{Var}(\hat\theta_1)$.} In Supplementary Material \ref{appen:subsec:re} we provided a theoretical analysis for the relative efficiency between the proposed estimator and the EMEE under simplifying assumptions.

\begin{table}[htbp]
  \caption{Comparison between per-decision IP weight $W_{it}$ in our estimator and standard IP weight $W_{it}'$ in EMEE in a few examples for the case of $\Delta = 3$. Subscript $i$ for all variables are omitted.} \label{tab:weight-comparison}
  \centering
  \small
  \begin{tabular}{cccccccccc}
  \toprule
    & $R_{t}$ & $A_{t+1}$ & $R_{t+1}$ & $A_{t+2}$ & $R_{t+2}$ & $Y_{t,\Delta}$ & & $W_t$ & $W_t'$ \\
  \midrule
  Example 1a & 0 & 0 & 0 & 1 & 0 & 0 & & 0 & 0 \\
  Example 1b & 0 & 0 & 1 & 1 & 0 & 1 & & $\frac{1}{1-p_{t+1}\left(H_{t+1}\right)}$ & 0 \\
  Example 2a & 0 & 0 & 0 & 0 & 0 & 0 & & $\frac{1}{1-p_{t+1}\left(H_{t+1}\right)} \cdot \frac{1}{1-p_{t+2}\left(H_{t+2}\right)}$ & $\frac{1}{1-p_{t+1}\left(H_{t+1}\right)} \cdot \frac{1}{1-p_{t+2}\left(H_{t+2}\right)}$ \\
  Example 2b & 0 & 0 & 1 & 0 & 0 & 1 & & $\frac{1}{1-p_{t+1}\left(H_{t+1}\right)}$ & $\frac{1}{1-p_{t+1}\left(H_{t+1}\right)} \cdot \frac{1}{1-p_{t+2}\left(H_{t+2}\right)}$ \\
  \bottomrule
  \end{tabular}

\end{table}


Related techniques of reducing IP weights to improve efficiency were developed for other settings. In off-policy reinforcement learning, the per-decision importance sampling technique was proposed to construct IP weights related to ours by leveraging the fact that earlier outcomes cannot depend on later treatments \citep{precup2000eligibility}. In Cox marginal structural model, to determine a subject's contribution to a risk-set at time $t$, a stabilized weight was proposed which involves IP weights not through the end of the follow-up but only up till time $t$ \citep{robins2000marginal}. Those settings are analogous to a special case in our setting ($T=1$), i.e., when focusing on a single proximal outcome that decomposes into a sequence of sub-outcomes. Our contribution is to extend the per-decision IPW technique to estimate causal excursion effects on longitudinal proximal outcomes. In Supplementary Material  \ref{appen:sec:literature}, we provide a detailed explication on how our method connects to and differs from the literature.

Additional remarks about the property of the proposed estimator are as follows.

\begin{rmk} \label{rmk:robustness}
    \textit{(Robustness against a misspecified working model.)}
    Consistency of $\hat\beta$ only requires that $S_t^T \beta$ be a correct model for the causal excursion effect $\est(S_{t})$. It does not require a correct working model $g(H_t)^T\alpha$; i.e., it does not require $E\{Y_{t, \Delta}(\bar{A}_{t-1}, 0, \bar{0}_{\Delta - 1}) \mid H_{t}, I_{t}=1 \} = \exp \{g(H_t)^T \alpha\}$ to hold for some $\alpha$. This robustness property is desired because it is virtually impossible to correctly specify a model for $E\{Y_{t, \Delta}(\bar{A}_{t-1}, 0, \bar{0}_{\Delta - 1}) \mid H_{t}, I_{t}=1 \}$ when the number of decision points for an MRT is moderate to large. Different choices of $g(H_t)$ affect the variance of $\hat\beta$. \readdBiostat{See Section \ref{sec:projection-estimator} and Remark \ref{rmk:efficiency-improvement} on how the projection idea leads to a more efficient choice of $g(H_t)^T\alpha$.}
\end{rmk}

\begin{rmk} \label{rmk:small-ss}
    \textit{(Small-sample correction for inference.)}
    The sandwich estimator for the variance in Theorem \ref{thm:CAN} can be anti-conservative with a small sample size. To correct this, we adopt two small-sample correction techniques: (i) Similar to the correction to the sandwich variance estimator by \citet{mancl2001covariance}, we pre-multiply the vector of each individual’s residual, $\epsilon_t(\hat\alpha, \hat\beta)$, by the inverse of the identity matrix minus the leverage for that individual; (ii) Similar to the correction to the critical value for inference by \citet{liao2016sample}, we use as the critical value the quantile of a $t$-distribution, $t^{-1}_{n-p-q}(1-\eta/2)$, where $\eta$ is the significance level and $p$ and $q$ are the dimensions of $\beta$ and $\alpha$, respectively.
\end{rmk}

\section{\readdBiostat{pd-EMEE2: Further Improving Efficiency Using Projection}}

\label{sec:projection-estimator}

\readdBiostat{
Following suggestions by a reviewer, we developed a variation of the pd-EMEE $\hat\beta$ using the projection idea from semiparametric efficiency theory. We refer to this variation as pd-EMEE2 and denote it $\tilde\beta$. In particular, let $\xi_i(\beta) = \sum_{t=1}^T I_{it} e^{- A_{it} S_{it}^T \beta} Y_{it, \Delta} M_{it} W_{it} \{A_{it} - \tilde{p}_t(S_{it}) \} S_{it}$ denote the bottom portion of $U_i(\alpha,\beta)$ in \eqref{eq:ee} with $\epsilon_{it}(\alpha,\beta)$ replaced by $Y_{it,\Delta}$. Equation \eqref{eq:identification} implies the moment condition $E\{\xi_i(\beta^*)\} = 0$ (proven in Lemma \ref{lem:consistency} in the Supplementary Material). The efficiency of the estimating function $\xi_i(\beta)$ can be improved by subtracting from $\xi_i(\beta)$ its projection on the score functions for the treatment assignment probability \citep{bickel1993efficient,murphy2001marginal,shi2023meta}. The projection is given by $\sum_{u=1}^T [E\{\xi_i(\beta) \mid H_{iu}, A_{iu}\} - E\{\xi_i(\beta) \mid H_{iu}\}]$ \citep{robins1999testing}. We show in Supplementary Material \ref{appen:sec:projection} that under an additional working assumption (Assumption \ref{appen:est2_asu} in Supplementary Material \ref{appen:sec:projection}), we have
\begin{align}
  & ~~~~ \xi_i(\beta) - \sum_{u=1}^T [E\{\xi_i(\beta) \mid H_{iu}, A_{iu}\} - E\{\xi_i(\beta) \mid H_{iu}\}] \nonumber \\
  & = \sum_{t=1}^T I_{it} e^{- A_{it} S_{it}^T \beta}  M_{it}\{A_{it} - \tilde{p}_t(S_{it})\} S_{it} \bigg[Y_{it, \Delta} W_{it} - E(Y_{it, \Delta}W_{it}|H_{it}, A_{it}) \nonumber \\
  & ~~~~ - \sum_{u = t+1}^{t+\Delta-1} \left\{ E(Y_{it, \Delta}W_{it}|H_{iu}, A_{iu}) - E(Y_{it, \Delta}W_{it}|H_{iu}) \right\} \bigg] \nonumber \\
    &+\sum_{t=1}^T I_{it} \tilde{p}_t(S_{it}) \{1 - \tilde{p}_t(S_{it})\} S_{it} \left\{e^{- S_{it}^T \beta} E(Y_{it, \Delta}W_{it}|H_{it}, A_{it} = 1) - E(Y_{it, \Delta}W_{it}|H_{it}, A_{it} = 0) \right\}. \label{eq:eqs2_star}
\end{align}
This motivates the following estimating function
\begin{align}
  \tilde{U}_i(\beta, \mu) & = \sum_{t=1}^T I_{it} e^{- A_{it} S_{it}^T \beta}  M_{it}\{A_{it} - \tilde{p}_t(S_{it})\} S_{it} \bigg[Y_{it, \Delta} W_{it} - \mu_0(H_{it}, A_{it}) \nonumber \\
  & ~~~~- \sum_{u = t+1}^{t+\Delta-1} \left\{ \mu_{u-t}(H_{iu}, A_{iu}) - p_u(H_{iu})\mu_{u-t}(H_{iu}, 1) - (1-p_u(H_{iu}))\mu_{u-t}(H_{iu}, 0) \right\} \bigg] \nonumber \\
    &+\sum_{t=1}^T I_{it} \tilde{p}_t(S_{it}) \{1 - \tilde{p}_t(S_{it})\} S_{it} \left\{e^{- S_{it}^T \beta} \mu_0(H_{it}, 1) - \mu_0(H_{it}, 0) \right\}, \label{eq:ee-2}
\end{align}
where $\mu = (\mu_0, \mu_1,\ldots, \mu_{\Delta-1})$ is a collection of nuisance parameters with $\mu_s(H_{iu}, A_{iu})$ denoting a working model for $E(Y_{i,u-s, \Delta}W_{i,u-s}|H_{iu}, A_{iu})$. We propose to compute the estimator $\tilde\beta$ based on the estimating function $\tilde{U}$ using a two-step approach depicted in Algorithm \ref{alg:alg1}. The asymptotic property of $\tilde\beta$ is described in Theorem \ref{thm:CAN2}, with proof in Supplementary Material \ref{subsec:appen:proof-ee2}.
}

\begin{algorithm}[htbp]
\doublespacing
\readdBiostat{
  \caption{The pd-EMEE2 estimator $\tilde\beta$} \label{alg:alg1}
  \textbf{Step 1}: For each $s = 0,1,\ldots, \Delta-1$, fit $E(Y_{i,u-s, \Delta}W_{i,u-s}|H_{iu}, A_{iu})$ by pooling across all $1 \leq u \leq T$ and denote the fitted models by $\hat\mu_s(H_{iu},A_{iu})$. Denote by $\hat\mu$ the collection of $\hat\mu_s$ for $0 \leq s \leq \Delta-1$. \\
  \textbf{Step 2}: Obtain $\tilde\beta$ by solving $\frac{1}{n} \sum_{i = 1}^n \tilde{U}_i(\beta; \hat\mu) = 0$ with $\tilde{U}_i(\beta, \mu)$ defined in \eqref{eq:ee-2}.
  }
\end{algorithm}

\readdBiostat{
\begin{thm}
  \label{thm:CAN2}
  Assume the causal effect model \eqref{eq:linear-model} and Assumptions 1-4 in Section \ref{subsec:identification}. Let $\beta^*$ denote the true value of $\beta$ corresponding to the data generating distribution $P_0$. Define $\partial_\beta \tilde{U}_i(\beta; \hat\mu) = \partial \tilde{U}_i(\beta; \hat\mu) / \partial \beta^T$. Under regularity conditions, $\tilde\beta$ is consistent for $\beta^*$ and $\sqrt{n}(\tilde\beta-\beta^*)$ is asymptotically normal with mean zero and variance-covariance matrix $\Sigma$, and $\Sigma$ can be consistently estimated by the lower block diagonal $(p \times p)$ entry of the matrix 
  \begin{align*}
      \bigg\{\frac{1}{n}\sum_{i=1}^n \partial_\beta \tilde{U}_i(\tilde\beta; \hat\mu) \bigg\}^{-1} \bigg\{\frac{1}{n}\sum_{i=1}^n \tilde{U}_i(\tilde\beta; \hat\mu) \tilde{U}_i(\tilde\beta; \hat\mu)^T \bigg\} \bigg\{\big[\frac{1}{n}\sum_{i=1}^n \partial_\beta \tilde{U}_i(\tilde\beta; \hat\mu)\big]^{-1}\bigg\}^{T}.
  \end{align*}
\end{thm}
}

\readdBiostat{
\begin{rmk}
    \label{rmk:efficiency-improvement}
    \textit{(Robustness and efficiency improvement.)}
    Similar to Remark \ref{rmk:robustness}, the consistency of $\tilde\beta$ is robust against misspecified regression models $\hat\mu$. Equation \eqref{eq:eqs2_star} is rooted in the semiparametric efficiency theory, and one can view \eqref{eq:eqs2_star} as identifying a collection of working models that can potentially lead to more efficient $\beta$ estimators than the working model $g(H_{it})^T\alpha$ in \eqref{eq:ee}. In other words, the pd-EMEE2 $\tilde\beta$ can be more efficient than the pd-EMEE $\hat\beta$. Note that Assumption \ref{appen:est2_asu} was made to facilitate a closed-form \eqref{eq:eqs2_star}, and this working assumption is not required by Theorem \ref{thm:CAN2}. Furthermore, the efficiency gain by $\tilde\beta$ does not require this working assumption, as is supported by numerical evidence (Sections \ref{sec:simulation} and \ref{sec:application}).
\end{rmk}
}

\readdBiostat{
\begin{rmk} 
    \textit{(Flexible modeling choice for $\mu$.)}
    The choice of models for the conditional expectations $E(Y_{i,u-s, \Delta}W_{i,u-s}|H_{iu}, A_{iu})$ in Algorithm \ref{alg:alg1} can be very flexible, and one can take advantage of the maximum property to further improve efficiency. For example, one can use Poisson regression or logistic regression with rescaled $Y_{i,u-s, \Delta}W_{i,u-s}$ that are within $[0,1]$ (and scale them back after the logistic regression fit). One can also include functions and spline basis of the decision point index, $u$, in the model. In addition, the $(H_{iu}, A_{iu})$-measurable part of $W_{i,u-s}$, i.e., $\prod_{j=u-s+1}^u \left\{\frac{\mathbbm{1}(A_{ij}=0)}{1-p_{j}(H_{ij})}\right\}^{\mathbbm{1}(\max_{u-s+1\leq v\leq j}R_{i,v}=0)}$, can be taken outside of the conditional expectation $E(Y_{i,u-s, \Delta}W_{i,u-s}|H_{iu}, A_{iu})$, before a model is fitted to the conditional expectation, and doing this has the potential to improve efficiency. Furthermore, due to the maximum property, $Y_{i,u-s, \Delta}$ will be deterministically 1 if one of its sub-outcomes $R$ is included in $H_{iu}$ and equals 1, and taking advantage of this fact can further improve efficiency. Finally, due to the estimating function $\tilde{U}_i(\beta, \mu)$ being globally robust (i.e., it has expectation 0 at the true $\beta^*$ for any $\mu$; see Supplementary Material \ref{subsec:appen:proof-ee2}), the nuisance parameter estimators $\hat\mu$ can converge at an arbitrarily slow rate and the asymptotic variance of $\tilde\beta$ does not need to account for the fact that $\mu$ is estimated \citep{cheng2023efficient}. This means that a wide range of machine learning and nonparametric models can be used to fit $\mu$. 
\end{rmk}
}

\section{Simulation}
\label{sec:simulation}


\subsection{Generative Model}
\label{subsec:simulation-gm}

We set $I_t \equiv 1$ and $T = 100$. The generative model depends on two parameters, $\Delta$ (length of the proximal outcome window) and $p_a$ (constant randomization probability). $A_t \in \{0,1\}$ was generated from $\text{Bernoulli}(p_a)$. A scalar time-varying covariate, $Z_t \in \{0, 1, 2\}$, was generated independently \readdBiostat{of} all previous variables (i.e., $Z_t \perp \{Z_s,A_s,R_{s+1}: 1 \leq s \leq t-1\}$) with
\begin{align}
    P(Z_t = 0) = \gammaplaceholder^{-1/2\Delta}/C, \quad
    P(Z_t = 1) = 1/C, \quad
    P(Z_t = 2) = \gammaplaceholder^{1/2\Delta}/C, \label{eq:dgm-Zt}
\end{align}
where $C = \gammaplaceholder^{-1/2\Delta} + \gammaplaceholder^{1/2\Delta} + 1$.
The sub-outcome $R_{t+1}$ given $A_t$ and $H_t$ was generated as
\begin{align}
    P(R_{t+1} = 0 \mid A_t = 0, H_t) & = \gammaplaceholder^{(1.5-0.5Z_t)/\Delta}, \label{eq:dgm-R-A0}\\
    P(R_{t+1} = 0 \mid A_t = 1, H_t) & = \frac{1- \{1-\gammaplaceholder^{(1.5-0.5Z_t)/\Delta}\cdot (3/C\cdot \gammaplaceholder^{1/\Delta})^{\Delta-1}\} \cdot e^{0.1+0.2Z_t}}{(3/C \cdot \gammaplaceholder^{1/\Delta})^{\Delta-1}}. \label{eq:dgm-R-A1}
\end{align}
We set $R_{t+1} = 0$ for $ T < t \leq T + \Delta$.
The proximal outcome was generated as $Y_{t,\Delta} = \max (R_{t+1}, \ldots, R_{t+\Delta})$. 

In Supplementary Material  \ref{appen:sec:dgm} we showed that for $1 \leq t \leq T - \Delta$, 
\begin{align}
    & E\{Y_{t,\Delta}(\bar{A}_{t-1}, a_t, \bar{0}_{\Delta-1}) \mid H_t(\bar{A}_{t-1}), I_t(\bar{A}_{t-1}) = 1\} = \{1 - \gammaplaceholder^{(\Delta+0.5-0.5Z_{t})/\Delta}\cdot(3/C)^{\Delta-1}\} \cdot e^{a_t(0.1+0.2Z_t)}. \label{eq:dgm-Y}
\end{align}
Therefore, $\est(Z_t) = 0.1 + 0.2 Z_t$ for $1 \leq t \leq T - \Delta$ under the generative model, and this holds for all $\Delta \geq 1$ and $p_a \in (0,1)$. 
For $T - \Delta < t < T$, $\est(Z_t)$ is slightly different but its influence on the simulation results is negligible because in the simulations $\Delta \ll T$.

There is substantial variability in $R_{t+1}$ and $Y_{t,\Delta}$. When we vary $\Delta$ from 2 to 10 and vary $Z_t \in \{0,1,2\}$,
\eqref{eq:dgm-R-A0} varies from 0.59 to 0.97, \eqref{eq:dgm-R-A1} varies from 0.38 to 0.80, and \eqref{eq:dgm-Y} varies from 0.41 to 0.58 when setting $a_t = 0$ and from 0.57 to 0.80 when setting $a_t = 1$. 

We constructed this generative model based on three considerations.
First, in order to fairly assess how the efficiency improvement is affected by $\Delta$ and $p_a$, we needed $\est(Z_t)$ to not depend on $\Delta$ or $p_a$. 
Thus, the distribution of the sub-outcome $R_t$ would depend on $\Delta$ because there are $\Delta$ sub-outcomes in $Y_{t,\Delta}$.
Second, we wanted to construct a complicated $E\{Y_{t,\Delta}(\bar{A}_{t-1}, 0, \bar{0}_{\Delta-1}) \mid H_t, I_t = 1\}$ to demonstrate the robustness of the estimator against misspecified working models.
Third, we wanted $Z_t$ to depend on $\Delta$ as little as possible. In a real application, $Z_t$ (a covariate) would not depend on $\Delta$ (a choice in the analysis). However, we found it difficult to achieve while keeping the form of $\est(Z_t)$ simple. The current generating distribution of $Z_t$ \eqref{eq:dgm-Zt} is the least dependent on $\Delta$ that we came up with, and the three probabilities in \eqref{eq:dgm-Zt} are all very close to 1/3 for $\Delta \geq 2$.

\subsection{Simulation Result} 
\label{subsec:simulation-result}


\readdBiostat{We studied the finite-sample performance of the proposed pd-EMEE, pd-EMEE2, and three competitors: the EMEE by \citet{qian2021estimating}, the log-linear generalized estimating equations with independent correlation structure (GEE.ind), and the log-linear generalized estimating equations with exchangeable correlation structure (GEE.exch) \citep{liang1986longitudinal}.}

We considered two estimands with different $S_t$: the first estimand is $\est(S_t) = \beta_0$ with $S_t = 1$, and the second estimand is $\est(S_t) = \beta_1 + \beta_2 Z_t$ with $S_t = (1, Z_t)$. They represent common primary and secondary analyses of MRTs: $\beta_0$ is the fully marginal causal excursion effect, and $\beta_1 + \beta_2 Z_t$ is the effect moderated by $Z_t$. For pd-EMEE and EMEE, we used the correct model of $\est$ for both estimands and the incorrect working model $\exp\{g(H_t)^T \alpha\} = \exp(\alpha_0 + \alpha_1 Z_t)$ for $E(Y_{t,\Delta}(\bar{A}_{t-1}, 0, \bar{0}_{\Delta-1}) \mid H_t, I_t = 1)$. \readdBiostat{For pd-EMEE2, we used the correct model of $\est$ and included only $(Z_{iu},A_{iu})$ in a linear regression fit for $\hat\mu_s(H_{iu},A_{iu})$ and thus the latter is misspecified.} For GEE.ind and GEE.exch, we use the marginal mean model $\exp(\alpha_0 + \alpha_1 Z_t + \beta_0 A_t)$ for the first estimand and $\exp\{\alpha_0 + \alpha_1 Z_t + A_t (\beta_1 + \beta_2 Z_t)\}$ for the second estimand. The marginal mean models used by the GEEs are misspecified because the part involving the $\alpha$'s is incorrect. See Supplementary Material \ref{appen:sec:gee} for the exact form of the log-linear GEEs used.

We first conducted simulations with two different $\Delta$: $\Delta = 3$ and $\Delta = 10$. We used a constant randomization probability $p_a = 0.2$. Results for $\Delta = 3$ are shown in top halves of Tables \ref{tab:delta3and10-1} and \ref{tab:delta3and10-2}, where the true parameter values are $\beta_0^* = 0.283$ (obtained numerically by plugging \eqref{eq:dgm-Y} into \eqref{eq:causal-effect-def}), $\beta_1^* = 0.1$, and $\beta_2^* = 0.2$. pd-EMEE, \readdBiostat{pd-EMEE2,} and EMEE consistently estimate all parameters of interest with close to nominal confidence interval coverage. The confidence interval coverage is improved by the small sample correction (Remark \ref{rmk:small-ss}). pd-EMEE is more efficient than EMEE: for example, when $n=100$, the relative efficiency between pd-EMEE and EMEE for estimating $\beta_0$, $\beta_1$, and $\beta_2$ is $0.026^2/0.025^2 = 1.08$, $0.037^2/0.035^2 = 1.12$, and $0.029^2 / 0.027^2 = 1.15$, respectively. These values roughly translate to 7\% ($ = 1 - 1/1.08$), 11\%, and 13\% savings in sample size if one were to power an MRT using pd-EMEE instead of EMEE. \readdBiostat{Comparing pd-EMEE and pd-EMEE2, pd-EMEE2 is as or more efficient across $n = 30, 50 \text{ or } 100$, with the highest relative efficiency (against pd-EMEE) being $1.08$.} Both GEE.ind and GEE.exch are inconsistent because the marginal mean model is misspecified. The performance of $\hat\alpha_0$ and $\hat\alpha_1$ is not presented because $\alpha_0$ and $\alpha_1$ are nuisance parameters and thus not of interest.

Results for $\Delta = 10$ are shown in bottom halves of Tables \ref{tab:delta3and10-1} and \ref{tab:delta3and10-2}, where the true parameter values are $\beta_0^* = 0.304$, $\beta_1^* = 0.1$, and $\beta_2^* = 0.2$. The conclusions regarding consistency and confidence interval coverage are the same as \readdBiostat{those} when $\Delta = 3$. Here pd-EMEE is substantially more efficient than EMEE: for example, when $n=100$, the relative efficiency between pd-EMEE and EMEE for estimating $\beta_0$, $\beta_1$, and $\beta_2$ is $0.065^2/0.054^2 = 1.45$, $0.092^2/0.078^2 = 1.39$, and $0.064^2 / 0.054^2 = 1.40$, respectively, which roughly translates to 31\%, 28\%, and 29\% savings in sample size. \readdBiostat{When $\Delta = 10$, pd-EMEE2 further substantially improves the efficiency over pd-EMEE. For example, when $n=100$, the relative efficiency between pd-EMEE2 and pd-EMEE for estimating $\beta_0$, $\beta_1$, and $\beta_2$ is $0.054^2/0.053^2 = 1.04$, $0.078^2/0.073^2 = 1.14$, and $0.054^2 / 0.050^2 = 1.17$, respectively, which roughly translates to another 3\%, 12\%, and 15\% savings in sample size on top of the savings of pd-EMEE.} The efficiency improvement when $\Delta = 10$ is greater compared to when $\Delta=3$.

\begin{table}[htbp]
\caption{Simulation result: Performance of pd-EMEE2, pd-EMEE, EMEE, GEE.ind and GEE.exch when estimating $\beta_0$ ($\Delta$ = 3 and 10). Based on 1,000 simulation replications.}
\centering
\begin{threeparttable}
\begin{tabular}[t]{ccccccc}
\toprule
Estimator & $n$ & Bias & SD & RMSE & CP.unadj & CP.adj \\
\midrule
 \multicolumn{7}{c}{\textit{Setting:} $\Delta = 3$} \vspace{0.5em} \\
& 30 & 0.006 & 0.044 & 0.044 & 0.94 & 0.96\\

& 50 & 0.003 & 0.036 & 0.036 & 0.94 & 0.95\\

\multirow{-3}{*}{\centering\arraybackslash pd-EMEE2} & 100 & 0.005 & 0.025 & 0.026 & 0.94 & 0.94\\
\cmidrule{1-7}

 & 30 & 0.006 & 0.045 & 0.045 & 0.94 & 0.96\\

 & 50 & 0.004 & 0.037 & 0.037 & 0.93 & 0.94\\

\multirow{-3}{*}{\centering\arraybackslash pd-EMEE} & 100 & 0.005 & 0.025 & 0.026 & 0.94 & 0.94\\
\cmidrule{1-7}
 & 30 & 0.005 & 0.047 & 0.048 & 0.95 & 0.96\\

 & 50 & 0.004 & 0.039 & 0.039 & 0.93 & 0.95\\

\multirow{-3}{*}{\centering\arraybackslash EMEE} & 100 & 0.005 & 0.026 & 0.027 & 0.95 & 0.96\\
\cmidrule{1-7}
 & 30 & -0.061 & 0.032 & 0.069 & 0.51 & 0.54\\

 & 50 & -0.062 & 0.026 & 0.068 & 0.29 & 0.31\\

\multirow{-3}{*}{\centering\arraybackslash GEE.ind} & 100 & -0.062 & 0.018 & 0.064 & 0.05 & 0.06\\
\cmidrule{1-7}
 & 30 & -0.063 & 0.032 & 0.071 & 0.48 & 0.52\\

 & 50 & -0.065 & 0.026 & 0.069 & 0.26 & 0.28\\

\multirow{-3}{*}{\centering\arraybackslash GEE.exch} & 100 & -0.064 & 0.018 & 0.066 & 0.04 & 0.04\\

\\
\multicolumn{7}{c}{\textit{Setting:} $\Delta = 10$} \vspace{0.5em}\\

 & 30 & 0.026 & 0.095 & 0.099 & 0.93 & 0.96\\

 & 50 & 0.021 & 0.078 & 0.080 & 0.93 & 0.94\\

\multirow{-3}{*}{\centering\arraybackslash pd-EMEE2} & 100 & 0.021 & 0.053 & 0.057 & 0.93 & 0.94\\
\cmidrule{1-7}

 & 30 & 0.026 & 0.103 & 0.106 & 0.94 & 0.96\\

 & 50 & 0.022 & 0.084 & 0.087 & 0.94 & 0.95\\

\multirow{-3}{*}{\centering\arraybackslash pd-EMEE} & 100 & 0.022 & 0.054 & 0.058 & 0.95 & 0.96\\
\cmidrule{1-7}
 & 30 & 0.032 & 0.127 & 0.130 & 0.94 & 0.96\\

 & 50 & 0.026 & 0.099 & 0.103 & 0.94 & 0.96\\

\multirow{-3}{*}{\centering\arraybackslash EMEE} & 100 & 0.023 & 0.065 & 0.069 & 0.95 & 0.96\\
\cmidrule{1-7}
 & 30 & -0.168 & 0.025 & 0.169 & 0.00 & 0.00\\

 & 50 & -0.169 & 0.019 & 0.170 & 0.00 & 0.00\\

\multirow{-3}{*}{\centering\arraybackslash GEE.ind} & 100 & -0.169 & 0.014 & 0.169 & 0.00 & 0.00\\
\cmidrule{1-7}
 & 30 & -0.175 & 0.024 & 0.177 & 0.00 & 0.00\\

 & 50 & -0.176 & 0.019 & 0.177 & 0.00 & 0.00\\

\multirow{-3}{*}{\centering\arraybackslash GEE.exch} & 100 & -0.176 & 0.013 & 0.177 & 0.00 & 0.00 \\
\bottomrule
\end{tabular}
\begin{tablenotes}
\item pd-EMEE, the estimator $\hat\beta$ proposed in Section \ref{sec:method}; \readdBiostat{pd-EMEE2, the estimator $\tilde\beta$ proposed in Section \ref{sec:projection-estimator};} EMEE, the estimator in \citet{qian2021estimating}; GEE.ind, generalized estimating equations with independent working correlation structure; GEE.exch, generalized estimating equations with exchangeable working correlation structure; SD, standard deviation; RMSE, root mean squared error; CP, $95\%$ confidence interval coverage probability before (unadj) and after (adj) small-sample correction.
\end{tablenotes}
\end{threeparttable}
\label{tab:delta3and10-1}
\end{table}

\begin{sidewaystable}[htbp]
\begin{center}
\caption{Simulation result: Performance of pd-EMEE2, pd-EMEE, EMEE, GEE.ind and GEE.exch when estimating $(\beta_1,\beta_2)$ ($\Delta$ = 3 and 10). Based on 1,000 simulation replications.}
\small
\begin{threeparttable}
\begin{tabular}[t]{cccccccccccc}
\hline
\multicolumn{2}{c}{ } & \multicolumn{5}{c}{$\beta_1$} & \multicolumn{5}{c}{$\beta_2$}\\
\toprule
Estimator & Sample Size & Bias & SD & RMSE & CP(unadj) & CP(adj) & Bias & SD & RMSE & CP(unadj) & CP(adj)\\
\midrule
\multicolumn{12}{c}{\textit{Setting:} $\Delta = 3$} \vspace{0.5em} \\
 & 30 & 0.004 & 0.064 & 0.064 & 0.95 & 0.97 & 0.002 & 0.052 & 0.053 & 0.94 & 0.95\\

 & 50 & -0.001 & 0.051 & 0.051 & 0.94 & 0.96 & 0.004 & 0.041 & 0.041 & 0.94 & 0.95\\

\multirow{-3}{*}{\centering\arraybackslash pd-EMEE2} & 100 & 0.002 & 0.035 & 0.035 & 0.95 & 0.96 & 0.003 & 0.027 & 0.027 & 0.96 & 0.96\\
\cmidrule{1-12}

 & 30 & 0.004 & 0.066 & 0.066 & 0.94 & 0.96 & 0.002 & 0.053 & 0.053 & 0.94 & 0.96\\

 & 50 & 0.000 & 0.053 & 0.053 & 0.94 & 0.95 & 0.004 & 0.041 & 0.042 & 0.94 & 0.94\\

\multirow{-3}{*}{\centering\arraybackslash pd-EMEE} & 100 & 0.002 & 0.035 & 0.035 & 0.96 & 0.96 & 0.003 & 0.027 & 0.027 & 0.96 & 0.96\\
\cmidrule{1-12}
 & 30 & 0.003 & 0.070 & 0.070 & 0.94 & 0.96 & 0.002 & 0.055 & 0.055 & 0.94 & 0.96\\

 & 50 & 0.000 & 0.057 & 0.057 & 0.94 & 0.95 & 0.003 & 0.044 & 0.044 & 0.93 & 0.95\\

\multirow{-3}{*}{\centering\arraybackslash EMEE} & 100 & 0.002 & 0.037 & 0.037 & 0.96 & 0.97 & 0.003 & 0.029 & 0.029 & 0.95 & 0.96\\
\cmidrule{1-12}
 & 30 & -0.018 & 0.050 & 0.054 & 0.92 & 0.93 & -0.038 & 0.040 & 0.056 & 0.81 & 0.84\\

 & 50 & -0.020 & 0.041 & 0.045 & 0.90 & 0.91 & -0.039 & 0.032 & 0.050 & 0.74 & 0.75\\

\multirow{-3}{*}{\centering\arraybackslash GEE.ind} & 100 & -0.018 & 0.026 & 0.032 & 0.91 & 0.91 & -0.040 & 0.020 & 0.044 & 0.53 & 0.54\\
\cmidrule{1-12}
 & 30 & -0.019 & 0.050 & 0.054 & 0.92 & 0.93 & -0.040 & 0.040 & 0.057 & 0.79 & 0.82\\

 & 50 & -0.020 & 0.041 & 0.045 & 0.90 & 0.91 & -0.040 & 0.031 & 0.051 & 0.72 & 0.74\\

\multirow{-3}{*}{\centering\arraybackslash GEE.exch} & 100 & -0.019 & 0.026 & 0.032 & 0.90 & 0.91 & -0.041 & 0.020 & 0.046 & 0.50 & 0.51\\

\\
\multicolumn{12}{c}{\textit{Setting:} $\Delta = 10$} \vspace{0.5em} \\
& 30 & 0.011 & 0.134 & 0.134 & 0.92 & 0.95 & 0.015 & 0.096 & 0.097 & 0.91 & 0.94\\

 & 50 & 0.003 & 0.108 & 0.108 & 0.92 & 0.93 & 0.016 & 0.072 & 0.074 & 0.93 & 0.95\\

\multirow{-3}{*}{\centering\arraybackslash pd-EMEE2} & 100 & 0.005 & 0.073 & 0.073 & 0.95 & 0.95 & 0.015 & 0.050 & 0.052 & 0.93 & 0.94\\
\cmidrule{1-12}
 & 30 & 0.013 & 0.150 & 0.151 & 0.93 & 0.96 & 0.014 & 0.105 & 0.106 & 0.93 & 0.95\\

 & 50 & 0.003 & 0.120 & 0.120 & 0.92 & 0.93 & 0.018 & 0.082 & 0.084 & 0.93 & 0.95\\

\multirow{-3}{*}{\centering\arraybackslash pd-EMEE} & 100 & 0.007 & 0.078 & 0.078 & 0.95 & 0.96 & 0.014 & 0.054 & 0.056 & 0.94 & 0.95\\
\cmidrule{1-12}
 & 30 & 0.008 & 0.188 & 0.188 & 0.93 & 0.96 & 0.021 & 0.130 & 0.131 & 0.93 & 0.96\\

 & 50 & 0.001 & 0.141 & 0.141 & 0.95 & 0.97 & 0.022 & 0.095 & 0.097 & 0.95 & 0.96\\

\multirow{-3}{*}{\centering\arraybackslash EMEE} & 100 & 0.006 & 0.092 & 0.092 & 0.97 & 0.97 & 0.015 & 0.064 & 0.066 & 0.96 & 0.96\\
\cmidrule{1-12}
 & 30 & -0.058 & 0.040 & 0.070 & 0.68 & 0.71 & -0.109 & 0.029 & 0.112 & 0.05 & 0.06\\

 & 50 & -0.061 & 0.030 & 0.068 & 0.48 & 0.50 & -0.107 & 0.022 & 0.110 & 0.00 & 0.00\\

\multirow{-3}{*}{\centering\arraybackslash GEE.ind} & 100 & -0.060 & 0.021 & 0.063 & 0.21 & 0.22 & -0.108 & 0.015 & 0.109 & 0.00 & 0.00\\
\cmidrule{1-12}
 & 30 & -0.061 & 0.039 & 0.073 & 0.63 & 0.68 & -0.112 & 0.029 & 0.116 & 0.04 & 0.04\\

 & 50 & -0.063 & 0.030 & 0.070 & 0.42 & 0.44 & -0.111 & 0.022 & 0.113 & 0.00 & 0.00\\

\multirow{-3}{*}{\centering\arraybackslash GEE.exch} & 100 & -0.063 & 0.021 & 0.066 & 0.16 & 0.16 & -0.112 & 0.015 & 0.113 & 0.00 & 0.00\\
\bottomrule
\end{tabular}

\begin{tablenotes}
\item pd-EMEE, the estimator $\hat\beta$ proposed in Section \ref{sec:method}; \readdBiostat{pd-EMEE2, the estimator $\tilde\beta$ proposed in Section \ref{sec:projection-estimator};} EMEE, the estimator in \citet{qian2021estimating}; GEE.ind, generalized estimating equations with independent working correlation structure; GEE.exch, generalized estimating equations with exchangeable working correlation structure; SD, standard deviation; RMSE, root mean squared error; CP, $95\%$ confidence interval coverage probability before (unadj) and after (adj) small-sample correction.
\end{tablenotes}
\end{threeparttable}
\label{tab:delta3and10-2}
\end{center}
\end{sidewaystable}

\readdBiostat{
We then investigated how the relative efficiency of pd-EMEE or pd-EMEE2 against EMEE depends on $\Delta$ (proximal outcome window length) and $p_a$ (randomization probability) by varying $\Delta$ and $p_a$ separately. When $\Delta$ varies from 1 to 10 with $p_a = 0.2$, the relative efficiency increases from 1 to 1.5 for pd-EMEE and 1 to 1.75 for pd-EMEE2 (left panel of Figure \ref{fig:simulation_efficiency}). When $p_a$ varies from 0.1 to 0.8 with $\Delta = 3$, the relative efficiency increases from 1.05 to 1.75 for pd-EMEE and from 1.06 to 2.00 for pd-EMEE2  (right panel of Figure \ref{fig:simulation_efficiency}). Therefore, a larger $\Delta$ or a larger $p_a$ makes the proposed pd-EMEE and pd-EMEE2 even more efficient than the EMEE method. This phenomenon was expected and explained in Section \ref{sec:method}.
}

\begin{figure}[htbp]
\begin{center}
    \caption{Simulation result: The relative efficiency (RE) of pd-EMEE (red) or pd-EMEE2 (blue) over EMEE and how it varies with $\Delta$ (proximal outcome window length) and $p_a$ (randomization probability). The variances of EMEE, pd-EMEE, and pd-EMEE2 are illustrated by various line types in black.}
  \centering
  \includegraphics[width = 1\textwidth]{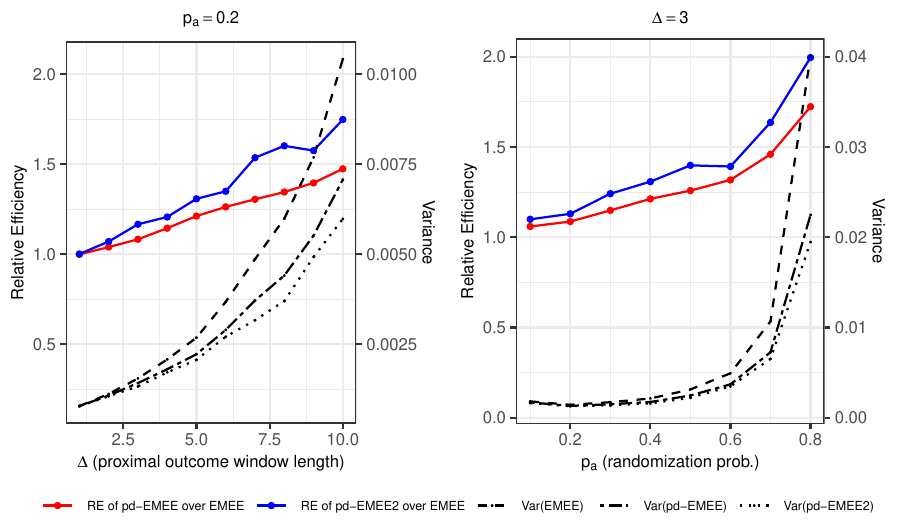}
  \label{fig:simulation_efficiency}
\end{center}
\end{figure}

\section{Application}
\label{sec:application}

\subsection{Drink Less MRT}
\label{subsec:drinkless-application}

\readdBiostat{We applied the proposed pd-EMEE and pd-EMEE2 methods to analyze the Drink Less MRT data.} Recall that there are $n=349$ participants in this MRT, each in the study for $T=30$ days. Each participant was randomized once every day at 8 pm with 0.6 probability of receiving a push notification encouraging using the Drink Less app ($A_t = 1$) and with 0.4 probability of receiving nothing ($A_t = 0$). Individuals were available to be randomized at all decision points ($I_t \equiv 1$). The proximal outcome, $Y_{t,\Delta}$, is whether the participant logged in to the app in the subsequent $\Delta$ days. The binary sub-outcome, $R_{t+1}$, is whether the participant logged in to the app between 8 pm of day $t$ and 7:59 pm of day $t+1$, and $Y_{t,\Delta} = \max(R_{t+1}, \ldots, R_{t+\Delta})$. \readdBiostat{We conducted separate analyses for $\Delta = 1,2,\ldots,5.$}


\readdBiostat{We considered four estimands: the fully marginal effect (moderator $S_t = \emptyset$), the effect moderated by the number of days in the study ($S_t = \text{decision point index } t$), the effect moderated by past treatment ($S_t = \text{receiving treatment the day before}$), and the effect moderated by past app use ($S_t = \text{using the app between 8 pm and 9 pm the day before}$). In all analyses, the $g(H_t)$ in the working model for $E\{Y_{t, \Delta}(\bar{A}_{t-1}, 0, \bar{0}_{\Delta - 1}) \mid H_{t}, I_{t}=1 \}$ includes the decision point index, gender, age, employment type, the AUDIT score (a measure of alcohol misuse), and two time-varying variables about app usage: whether the participant used the app before 8 pm that day, and whether they used the app after 9 pm the previous day. For pd-EMEE2, the same control variables corresponding to decision point $u$ is used to fit $\mu_s(H_{iu}, A_{iu})$.}


\begin{figure}[htbp]
    \begin{center}
    \caption{Estimated marginal excursion effect (Moderator: None) and effect moderation by decision point index, past treatment, or past app usage of Drink Less push notification on whether the user opens the app in different subsequent times windows from $\Delta = 1$ day to $\Delta = 5$ days. For each moderator and estimand, the relative efficiency of pd-EMEE or pd-EMEE2 against EMEE with various $\Delta$ is shown in the corresponding dot figure below.}
    \includegraphics[width = 1\textwidth]{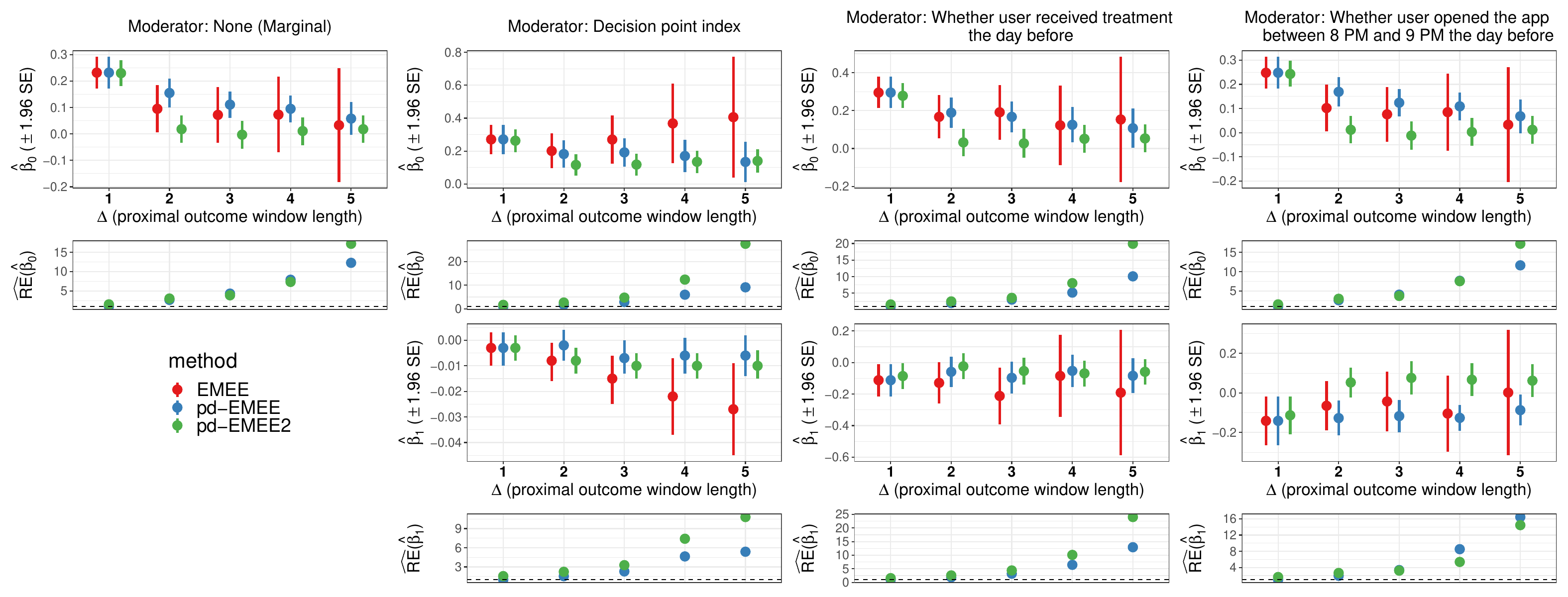}
    \label{fig:Drinkless}
    \end{center}
\end{figure}

\readdBiostat{Figure \ref{fig:Drinkless} shows the analysis results, where for each pair of plots the top one shows the point estimate and 95\% confidence interval (CI) for the three methods (EMEE, pd-EMEE, pd-EMEE2) and the bottom one shows the estimated relative efficiency ($\widehat{\text{RE}}$) of pd-EMEE or pd-EMEE2 against EMEE. $\widehat{\text{RE}}$ between two estimators is computed as the ratio between the estimated standard error of the estimators. We see that for each estimand and each $\Delta$, EMEE, pd-EMEE, and pd-EMEE2 yield estimates in the same direction with CIs mostly overlapping. Both pd-EMEE and pd-EMEE2 yield substantially narrower CIs than EMEE for $\Delta \geq 2$ and their $\widehat{\text{RE}}$ against EMEE increases with larger $\Delta$ as expected. For example, with $\Delta = 3$, the CI for EMEE is almost twice as wide as the CIs for pd-EMEE and pd-EMEE2. pd-EMEE2 is slightly more efficient than pd-EMEE in most cases, especially for the larger $\Delta$ ($\Delta = 4 \text{ or } 5$).}

\subsection{HeartSteps I MRT}
\label{subsec:heartsteps-application}

\readdBiostat{We applied the proposed pd-EMEE and pd-EMEE2 methods to analyze the HeartSteps I MRT data. Recall that each participant was in the study for 42 days and was randomized five times a day, each time with 0.6 probability of receiving an activity suggestion ($A_t = 1$) and with 0.4 probability of receiving nothing ($A_t = 0$). The data set included 37 participants with a total of 7629 decision points, and seven participants had less than 210 decision points of data (with reasons detailed in \citep{klasnja2018efficacy}). Individuals were available to be randomized ($I_t = 1$) for about 80\% of the decision points, and the reason for unavailability ($I_t = 0$) includes the person currently driving, having just had an activity bout, or having connection issues. In our analysis, $\Delta = 5$ corresponds to a 24-hour proximal outcome window. We considered five different proximal outcomes corresponding to five different step count thresholds. Given a total step count threshold $C$, the proximal outcome, $Y_{t,\Delta}$, is whether the participant takes at least $C$ total steps in the 24-hour window subsequent to decision point $t$ (we omit the notational dependence of $Y_{t,\Delta}$ on $C$). The binary sub-outcome $R_{t,t+s}$ for $1 \leq s \leq \Delta$, which is the double-subscript version explicated in the Supplementary Material \ref{appen:sec:generalized-max-property}, is whether the individual takes at least $C$ total steps between decision points $t$ and $t+s$, and $Y_{t,\Delta} = \max(R_{t, t+1}, \ldots, R_{t, t+\Delta})$. We conducted separate analyses for $C = 6000, 7000, \ldots, 10000$.}

\readdBiostat{We considered four estimands: the fully marginal effect (moderator $S_t = \emptyset$), the effect moderated by the decision point index $t$ ($S_t = t$), the effect moderated by location ($S_t = \text{whether the current location is home/work (vs. other location)}$), and the effect moderated by weekday ($S_t = \text{whether the day is a weekday}$). In all analyses, the $g(H_t)$ in the working model for $E\{Y_{t, \Delta}(\bar{A}_{t-1}, 0, \bar{0}_{\Delta - 1}) \mid H_{t}, I_{t}=1 \}$ includes the decision point index, the total step count in the 30 minutes prior to the decision point, the current location (home/work vs. others), and whether the day is a weekday. For pd-EMEE2, the same control variables corresponding to decision point $u$ is used to fit $\mu_s(H_{iu}, A_{iu})$.}

\begin{figure}[htbp]
    \begin{center}
    \caption{Estimated marginal excursion effect (Moderator: None) and effect moderation by decision point index, whether the participant is at home/work, whether the day is weekday of HeartSteps activity suggestion on whether the user takes more than a given threshold of steps in the next 24 hours. For each moderator and estimand, the relative efficiency of pd-EMEE or pd-EMEE2 against EMEE with various step count threshold is shown in the corresponding dot figure below.}
    \includegraphics[width = 1\textwidth]{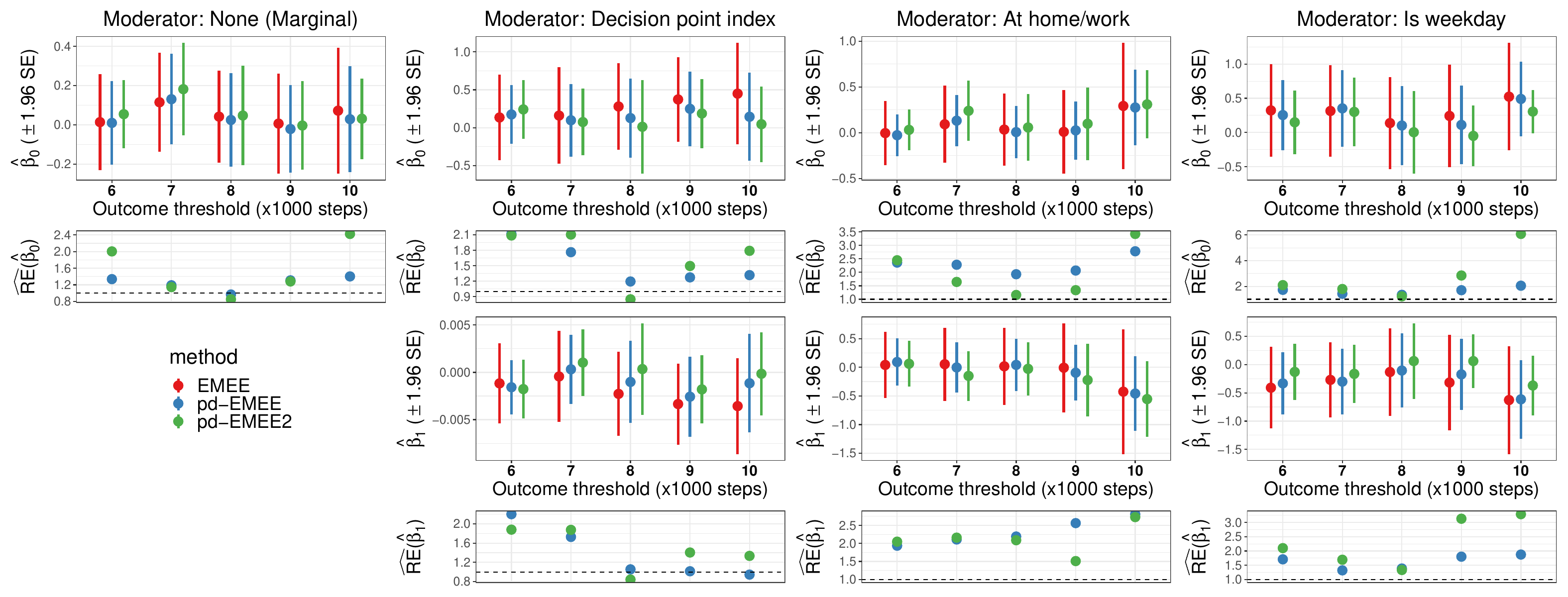}
    \label{fig:HeartSteps}
    \end{center}
\end{figure}

\readdBiostat{Figure \ref{fig:HeartSteps} shows the analysis results, and the format of the figure is the same as Figure \ref{fig:Drinkless}. We see that for each estimand and for each threshold $C$, EMEE, pd-EMEE, and pd-EMEE2 yield estimates in the same direction with CIs mostly overlapping. Both pd-EMEE and pd-EMEE2 yield substantially narrower CIs than EMEE for most threshold $C$ values. The efficiency gain of pd-EMEE and pd-EMEE2 against EMEE is not as pronounced as in the Drink Less application but is still substantial, with $\widehat{\text{RE}}$ mostly in the range of 1.2--3.}

\section{Discussion}
\label{sec:discussion}

We proposed the per-decision estimator of marginal excursion effect (pd-EMEE) and \readdBiostat{its projection-based variant (pd-EMEE2)} for the causal excursion effect of a time-varying treatment on a binary longitudinal outcome for micro-randomized trials. The proposed estimators are applicable as long as the binary outcome encodes the occurrence of an event of interest in a pre-specified time window. When this time window spans multiple decision points, the proposed estimator improves efficiency compared to the currently widely-used EMEE estimator \citep{qian2021estimating}, by using a modified form of inverse probability weighting (IPW) called per-decision IPW. \readdBiostat{The pd-EMEE2, motivated by results from the semiparametric efficiency theory, further improves efficiency. This efficiency improvement can lead to a more precise estimation of the causal effect or a smaller sample size needed when planning an MRT.} The efficiency improvement is greater when there is a higher chance of one or more additional treatments occurring in the proximal outcome window; for example, when the proximal outcome window consists of more decision points or when the randomization probability is larger.

The methods require that the binary proximal outcome satisfies a maximum property, i.e., the outcome can be expressed as the maximum of a series of sub-outcomes defined over sub-intervals of time. This holds when the outcome indicates the occurrence of an instantaneous event. 
For non-instantaneous events, if it is still possible to determine the first decision point, within the proximal outcome window, at which the binary outcome has already occurred, then one can use a similar estimator for efficiency improvement by assuming a generalized maximum property. The generalization is detailed in Supplementary Material \ref{appen:sec:generalized-max-property}, and estimators based on this generalization are illustrated in Section \ref{subsec:heartsteps-application}.

The estimator can also be generalized to settings where the number of decision points within a proximal outcome window is not fixed. Such a setting may arise if the decision points are irregularly spaced and the proximal outcome window is based on calendar time. In this setting, as long as the occurrence of the event may be determined early on in the proximal outcome window, one can still apply the per-decision IPW idea and thus discard the IPW terms after the event occurs to improve efficiency, with the necessary notational generalization.




The maximum property may resemble the structure of a discrete-time survival outcome, and may raise the question of whether the proximal outcome should be treated as a survival outcome instead of a binary one. We believe the choice of the model and outcome type should depend on the scientific question. For some interventions, the goal is so that a desirable event, such as self-monitoring in Drink Less and activity bout in HeartStep, occurs within a reasonable time window but not necessarily as soon as possible. For such settings, the causal effect on the binary outcome will contain relevant information for assessing and improving the intervention. For other settings, the goal might be to reduce the risk of a dangerous or harmful behavior as soon as possible, in which case the causal effect on the survival outcome might be more relevant. An example of the latter is problem anger prevention \citep{metcalf2022anger}, where the intervention is to reduce the risk of aggressive behavior once anger is detected.

We considered the same causal excursion effect as in \citet{qian2021estimating}, where the future $\Delta-1$ treatment assignments ($A_{t+1},\ldots,A_{t+\Delta-1}$) are all set to 0. That is, we consider excursions starting from decision point $t$ that receives treatment (or no treatment) at $t$, and then receive no treatment for the next $\Delta-1$ decision points. The choice of this reference treatment regime depends on the scientific question, and in some applications other reference regimes may be of interest. In Supplementary Material \ref{appen:sec:reference-regime} we provide an extension of the proposed method to a class of alternative reference regimes and evaluate the efficiency gain under those settings.




\newpage
\bibliographystyle{biorefs} 
\bibliography{main}

\newpage
\appendix
\section*{Supplementary Materials}

\numberwithin{equation}{section}

\appendix
\renewcommand{\thesubsection}{\Alph{section}.\arabic{subsection}}
\setcounter{section}{0}

\section{Interpretation of the Causal Excursion Effect and Its Use in MRT}
\label{appen:sec:CEE-MRT}

The causal excursion effect (\ref{subsec:cee-definition}) differs from most causal effects under time-varying treatments such as marginal structural models and structural nested mean models \citep{robins1994correcting, robins2000marginal}. Instead of contrasting fixed treatment trajectories, \eqref{eq:causal-effect-def} contrasts between two stochastic treatment trajectories, $(\bar{A}_{t-1}, 1, \bar{0}_{\Delta - 1})$ and $(\bar{A}_{t-1}, 0, \bar{0}_{\Delta - 1})$. That is, \eqref{eq:causal-effect-def} is a causal contrast between two excursions from the treatment policy: following the treatment policy up to time $t - 1$ and then deviating from the policy to assign $A_t = 1$ and no treatment for the next $\Delta-1$ decision points, and following the treatment policy up to time $t-1$ and then deviating to assign $A_t = 0$ and no treatment for the next $\Delta-1$ decision points. \eqref{eq:causal-effect-def} is conditional on $I_t(\bar{A}_{t-1})=1$ and $S_t(\bar{A}_{t-1})$, meaning that it concerns the subpopulation who, after following the treatment policy up to time $t - 1$, are available to be randomized at $t$ and within a stratum defined by $S_t(\bar{A}_{t-1})$. In practice we often impose a model on how $\est \left\{S_t(\bar{A}_{t-1}) \right\}$ depends on $S_t(\bar{A}_{t-1})$ to pool across $S_t$-strata.

$\est \left\{S_t(\bar{A}_{t-1}) \right\}$ depends on the treatment policy in the MRT, because the variables in $H_t(\bar{A}_{t-1}) \setminus S_t(\bar{A}_{t-1})$ are marginalized over in the conditional expectations in \eqref{eq:causal-effect-def}. Such dependence is scientifically desirable for two reasons. First, the causal excursion effect approximates the treatment effect in a real-world implementation. A good MRT policy would already have incorporated implementation considerations such as burden and feasibility through the choice of randomization probability and availability criteria, so a feasible policy will not deviate too far from the MRT policy. Second, the causal excursion effect indicates the effective deviations from the MRT policy and how the policy might be improved. A fully marginal effect (by setting $S_t(\bar{A}_{t-1}) = 1$ in \eqref{eq:causal-effect-def}) indicates whether a treatment is worth further investigation, and an effect modification analysis (by setting $S_t(\bar{A}_{t-1})$ to be certain time-varying covariates in \eqref{eq:causal-effect-def}) indicates whether the MRT policy should be modified to depend on time-varying covariates. In addition, the dependence of $\est \left\{S_t(\bar{A}_{t-1}) \right\}$ on the MRT policy resembles the primary analysis in factorial designs and allows one to design trials with a higher power to detect meaningful effects. Related marginalization ideas were considered by \citet{robins2004optimal, neugebauer2007causal}. See \citet{boruvka2018assessing}, \citet{qian2021estimating}, and \citet{qian2021micro} for more discussion on causal excursion effect and its use in MRT. See \citet{guo2021discussion} for a comprehensive comparison of various causal estimands under time-varying treatments.

\section{Generalized Version of Maximum Property}
\label{appen:sec:generalized-max-property}

We present a generalized version of the maximum property and a double-subscript version of the variable $R$. Using this formulation, we can apply methodology developed in the paper to a broader range of problems as long as it is possible to determine the first decision point, within the proximal outcome window, at which the binary outcome has already occurred.

Define $R_{t, t+s}$ to be the indicator of whether the event of interest occurred between decision points $t$ and $t+s$. Under this definition, we have the following generalized maximum property:
\begin{align}
    Y_{t,\Delta} = \max(R_{t,t+1}, R_{t,t+2}, \ldots, R_{t,t+\Delta}). \label{eq:appen-generalized-maximum-property}
\end{align}
The definition of $R_{t,t+s}$ and its relationship with $Y_{t,\Delta}$ is illustrated in Web Figure \ref{fig:generalized-max-property} for the case of $\Delta = 3$. The estimator and the proofs work for this more general definition with minimal modification: One simply needs to replace $R_{t+s}$ by $R_{t,t+s}$ throughout.

For example, in the HeartSteps II MRT \citep{liao2020personalized}, the proximal outcome was whether there is an activity bout in the 30 minutes following a decision point. There was a decision point every 5 minutes, so we denote by $Y_{t,\Delta = 6}$ the proximal outcome using our notation. Here $R_{t, t+1}$ is whether an activity bout occurred between $t$ and $t+1$, $R_{t, t+2}$ is whether an activity bout occurred between $t$ and $t+2$, and so on. Here \eqref{eq:appen-generalized-maximum-property} holds with $\Delta = 6$.

For the rest of the supplementary material, we assume that the original maximum property \eqref{eq:maximum-property} in the main paper holds.

\begin{figure}[htbp]
    \begin{center}
    \caption{Illustration of $R_{t,t+s}$ and its relationship with $Y_{t,\Delta}$ for the case of $\Delta = 3$.}
    \includegraphics[width = \textwidth]{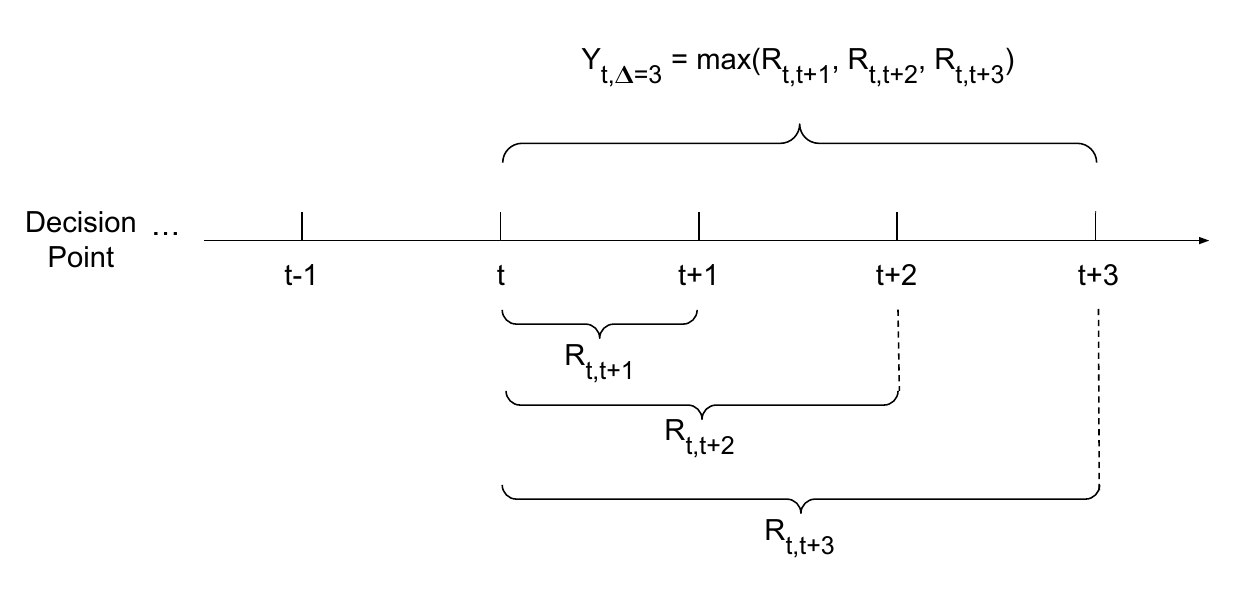}
    \label{fig:generalized-max-property}
    \end{center}
\end{figure}

\section{Causal Assumptions and Parameter Identification}
\label{appen:sec:causal-assumption-and-identifiability-proof}

We use overbar and double subscript to denote the sequence of treatments
between two time points: for example, for $t_2\geq t_1$, $\bar{A}_{t_1:t_2}:=(A_{t_1},A_{t_1+1},\ldots,A_{t_2})$. When there is no confusion, we will omit the subscript on $\bar{0}$: for example, $Y_{t,\Delta}(\bar{A}_{t-1}, a, \bar{0}) = Y_{t,\Delta}(\bar{A}_{t-1}, a, \bar{0}_{\Delta-1})$.

Under Assumptions \ref{assumption1}-\ref{assumption4}, we have 
\begin{align}
     & E\left\{ Y_{t,\Delta}(\bar{A}_{t-1},a,\bar{0})\mid A_t=a,H_t,I_t=1\right\} \\
    = & E\left\{ \prod_{j=t+1}^{t+\Delta-1}\left[\frac{1(A_j=0)}{P(A_j=0\mid H_j)}\right]^{1(\max_{1\leq s\leq j-t}R_{t+s}=0)}Y_{t,\Delta}\mid A_t=a,H_t,I_t=1\right\}, \label{eq:id1}
\end{align}
which implies the identification result \eqref{eq:identification} in the main paper. Equation \eqref{eq:id1} follows immediately from the statements of Lemma \ref{lem:id3} and Lemma \ref{lem:id4} below. We first define a set of new quantities used in the proof, then state and prove Lemmas \ref{lem:id1}-\ref{lem:id4}.

Consider any fixed $t$ with $1 \leq t \leq m$. For each $1\leq u\leq\Delta-1$,
let $C_{tu}(\bar{A}_{t-1},a,\bar{0}_{u-1})$ denote the event that the first
nonzero $R_{t+s}(\bar{A}_{t-1},a,\bar{0})$ after time $t$ occurs
at $s=u$; for notation simplicity we write $C_{tu}(\bar{A}_{t-1},a,\bar{0}_{u-1})$ as $C_{tu}$. That is, for $1\leq u\leq\Delta-1$
\begin{align*}
    C_{tu}:=\left\{ \max_{1\leq s\leq u-1}R_{t+s}(\bar{A}_{t-1},a,\bar{0})=0\text{ and }R_{t+u}(\bar{A}_{t-1},a,\bar{0})=1\right\} .
\end{align*}
Let $C_{t\Delta} := C_{t\Delta}(\bar{A}_{t-1},a,\bar{0}_{\Delta-1})$ denote
the event $\left\{ \max_{1\leq s\leq\Delta-1}R_{t+s}(\bar{A}_{t-1},a,\bar{0})=0\right\}$.
(The definition of $C_{t\Delta}$ is different from the rest $C_{tu}$. This is intentional because, as we will see shortly, there is no need to distinguish between $R_{t+\Delta}(\bar{A}_{t-1},a,\bar{0})=1$ and $R_{t+\Delta}(\bar{A}_{t-1},a,\bar{0})=0$ for the purpose of proving identification. Also, omitting the potential treatment trajectory in $C_{tu}$ and $C_{t\Delta}$ will not cause confusion because we always condider the treatment trajectory $\bar{A}_{t-1}, a, \bar{0}_{\Delta-1}$ in this section.)

\begin{lem}\label{lem:id1}
    If $\Delta \geq 2$, then for all $1\leq t\leq m$, for all $1\leq u\leq\Delta-1$, for all $a_t \in \{0,1\}$ and for all $\bar{a}_{t+u:t+\Delta-1}\in\{0,1\}^{\otimes(\Delta-u)}$,
    \begin{align}
         Y_{t,\Delta}(\bar{A}_{t-1},a_t,\bar{0}_{\Delta-1})\times \mathbbm{1}(C_{tu}) =  Y_{t,\Delta}(\bar{A}_{t-1},a_t,\bar{0}_{u-1},\bar{a}_{t+u:t+\Delta-1})\times \mathbbm{1}(C_{tu}). \label{eq:lem-id0}
    \end{align}
\end{lem}

\begin{proof}[Proof of Lemma~\ref{lem:id1}]
    When $\mathbbm{1}(C_{tu}) = 0$, \eqref{eq:lem-id0} holds trivially. When $\mathbbm{1}(C_{tu}) = 1$, it must be the case that $R_{t+u}(\bar{A}_{t-1},a_t,\bar{0}_{u-1})=1$, and Assumption \ref{assumption4} implies that $Y_{t,\Delta}(\bar{A}_{t-1},a_t,\bar{0}_{u-1},\bar{a}_{t+u:t+\Delta-1})$ and $Y_{t,\Delta}(\bar{A}_{t-1},a_t,\bar{0}_{\Delta-1})$ both equal to 1. Thus \eqref{eq:lem-id0} holds when the indicator function equals 1. This completes the proof.
\end{proof}

\begin{lem}\label{lem:id2}
    Suppose Assumptions \ref{assumption1}-\ref{assumption4} hold. Given fixed $1\leq t\leq m$ and $1\leq u\leq\Delta$, for all $1\leq k\leq u$
    we have
    \begin{align*}
         & E\left\{ Y_{t,\Delta}(\bar{A}_{t-1},a,\bar{0})\mid A_t=a,H_t,I_t=1,C_{tu}\right\} \\
        = & E\left\{ \prod_{j=t+1}^{t+k-1}\frac{1(A_j=0)}{P(A_j=0\mid H_j)}\times Y_{t,\Delta}(\bar{A}_{t-1},a,\bar{0})\mid A_t=a,H_t,I_t=1,C_{tu}\right\}.
    \end{align*}
    Note that for $k=1$, we define $\prod_{j=t+1}^{t+k-1}\frac{1(A_j=0)}{p(A_j=0\mid H_j)}=1$.
\end{lem}

\begin{proof}[Proof of Lemma~\ref{lem:id2}]
    We prove by induction. $k=1$ holds trivially. Now suppose $k=k_0$
    for some $1\leq k_0\leq u-1$ holds and we are to show $k=k_0+1$
    holds. We have
    \begin{align}
         & E\left\{ Y_{t,\Delta}(\bar{A}_{t-1},a,\bar{0})\mid A_t=a,H_t,I_t=1,C_{tu}\right\} \nonumber \\
        = & E\left\{ \prod_{j=t+1}^{t+k_0-1}\frac{\mathbbm{1}(A_j=0)}{p(A_j=0\mid H_j)}\times Y_{t,\Delta}(\bar{A}_{t-1},a,\bar{0})\mid A_t=a,H_t,I_t=1,C_{tu}\right\} \label{eq:proof5}\\
        = & E\left[E\left\{ \prod_{j=t+1}^{t+k_0-1}\frac{\mathbbm{1}(A_j=0)}{p(A_j=0\mid H_j)}\times Y_{t,\Delta}(\bar{A}_{t-1},a,\bar{0}) \mid A_t=a, H_{t+k_0}, I_t=1, C_{tu}\right\} \mid A_t=a,H_t,I_t=1,C_{tu}\right]\label{eq:proof6}\\
        = & E\left[E\left\{ \prod_{j=t+1}^{t+k_0-1}\frac{\mathbbm{1}(A_j=0)}{p(A_j=0\mid H_j)}\times Y_{t,\Delta}(\bar{A}_{t-1},a,\bar{0})\mid A_t=a, H_{t+k_0}, I_t=1, C_{tu}\right\} \right.\nonumber\\
         & \left.\times E\left\{ \frac{\mathbbm{1}(A_{t+k_0}=0)}{P(A_{t+k_0}=0\mid A_t=a, H_{t+k_0}, I_t=1, C_{tu})}\mid A_t=a, H_{t+k_0}, I_t=1, C_{tu}\right\} \mid A_t=a,H_t,I_t=1,C_{tu}\right] \label{eq:proof7} \\
        = & E\left[E\left\{ \prod_{j=t+1}^{t+k_0}\frac{\mathbbm{1}(A_j=0)}{p(A_j=0\mid H_j)}\times Y_{t,\Delta}(\bar{A}_{t-1},a,\bar{0})\mid A_t=a, H_{t+k_0}, I_t=1, C_{tu}\right\} \mid A_t=a,H_t,I_t=1,C_{tu}\right]\label{eq:proof8}\\
        = & E\left\{ \prod_{j=t+1}^{t+k_0}\frac{\mathbbm{1}(A_j=0)}{p(A_j=0\mid H_j)}\times Y_{t,\Delta}(\bar{A}_{t-1},a,\bar{0})\mid A_t=a,H_t,I_t=1,C_{tu}\right\} .\label{eq:proof9}
    \end{align}
    Here, \eqref{eq:proof5} follows from the induction assumption, \eqref{eq:proof6}
    and \eqref{eq:proof9} follow from the law of iterated expectation,
    \eqref{eq:proof7} follows from the fact that $E\left\{ \frac{\mathbbm{1}(A_{t+k_0}=0)}{P(A_{t+k_0}=0\mid A_t=a, H_{t+k_0}, I_t=1, C_{tu})}\mid A_t=a, H_{t+k_0}, I_t=1, C_{tu}\right\} =1$
    (valid under Assumptions \ref{assumption2} and \ref{assumption3}), and \eqref{eq:proof8} follows
    from Assumption \ref{assumption2} (in particular, conditional
    on $H_{t+k_0}$, $A_{t+k_0}$ is independent of all potential
    outcomes including $Y_{t,\Delta}$ and $C_{tu}$). This completes the proof.
\end{proof}

\begin{lem}\label{lem:id3}
    We have
    \begin{align}
         & E\left\{ Y_{t,\Delta}(\bar{A}_{t-1},a,\bar{0})\mid A_t=a,H_t,I_t=1\right\} \nonumber \\
        = & E\left\{ \prod_{j=t+1}^{t+\Delta-1}\left[\frac{1(A_j=0)}{P(A_j=0\mid H_j)}\right]^{1(\max_{1\leq s\leq j-t}R_{t+s}(\bar{A}_{t-1},a,\bar{0})=0)}Y_{t,\Delta}(\bar{A}_{t-1},a,\bar{0})\mid A_t=a,H_t,I_t=1\right\} \label{eq:proof2}
    \end{align}
    holds for all $t=1,\ldots,m$.
\end{lem}

\begin{proof}[Proof of Lemma~\ref{lem:id3}]
    Because $\{C_{tu}:1\leq u\leq\Delta\}$ is a partition of the whole sample space, for any random variable $Z$ defined on the same sample space we have
    \begin{align*}
        E\left\{ Z\mid A_t=a,H_t,I_t=1\right\}
        = \sum_{u=1}^{\Delta}E\left\{ Z\mid A_t=a,H_t,I_t=1,C_{tu}\right\} P\left\{ C_{tu}\mid A_t=a,H_t,I_t=1\right\} .
    \end{align*}
    Therefore, to prove \eqref{eq:proof2} it suffices to show that for any
    $1\leq u\leq\Delta$,
    \begin{align}
         & E\left\{ Y_{t,\Delta}(\bar{A}_{t-1},a,\bar{0})\mid A_t=a,H_t,I_t=1,C_{tu}\right\} \nonumber \\
        = & E\left\{ \prod_{j=t+1}^{t+\Delta-1}\left[\frac{1(A_j=0)}{P(A_j=0\mid H_j)}\right]^{1(\max_{1\leq s\leq j-t}R_{t+s}(\bar{A}_{t-1},a,\bar{0})=0)}Y_{t,\Delta}(\bar{A}_{t-1},a,\bar{0})\mid A_t=a,H_t,I_t=1,C_{tu}\right\} .\label{eq:proof3}
    \end{align}

    Conditional on $C_{tu}$ (for a fixed $1\leq u\leq\Delta$), 
    \begin{align*}
        \prod_{j=t+1}^{t+\Delta-1}\left[\frac{1(A_j=0)}{P(A_j=0\mid H_j)}\right]^{1(\max_{1\leq s\leq j-t}R_{t+s}(\bar{A}_{t-1},a,\bar{0})=0)}=\prod_{j=t+1}^{t+u-1}\frac{1(A_j=0)}{P(A_j=0\mid H_j)}.
    \end{align*}
    So right hand side of \eqref{eq:proof3} equals
    \begin{equation}
        E\left\{ \prod_{j=t+1}^{t+u-1}\frac{1(A_j=0)}{P(A_j=0\mid H_j)}\times Y_{t,\Delta}(\bar{A}_{t-1},a,\bar{0})\mid A_t=a,H_t,I_t=1,C_{tu}\right\} .\label{eq:proof4}
    \end{equation}
    Display \eqref{eq:proof4} equals $E\left\{ Y_{t,\Delta}(\bar{A}_{t-1},a,\bar{0})\mid A_t=a,H_t,I_t=1,C_{tu}\right\} $
    by setting $k=u$ in Lemma \ref{lem:id2}. This completes the proof.
\end{proof}
\begin{lem}\label{lem:id4}
    Suppose Assumptions \ref{assumption1}-\ref{assumption4} hold. We have
    \begin{align}
         & E\left\{ \prod_{j=t+1}^{t+\Delta-1}\left[\frac{\mathbbm{1}(A_j=0)}{P(A_j=0\mid H_j)}\right]^{\mathbbm{1}(\max_{1\leq s\leq j-t}R_{t+s}(\bar{A}_{t-1},a,\bar{0})=0)}Y_{t,\Delta}(\bar{A}_{t-1},a,\bar{0})\mid A_t=a,H_t,I_t=1\right\} \nonumber \\
        = & E\left\{ \prod_{j=t+1}^{t+\Delta-1}\left[\frac{\mathbbm{1}(A_j=0)}{P(A_j=0\mid H_j)}\right]^{\mathbbm{1}(\max_{1\leq s\leq j-t}R_{t+s}=0)}Y_{t,\Delta}\mid A_t=a,H_t,I_t=1\right\} .\label{eq:proof10}
    \end{align}
\end{lem}

\begin{proof}[Proof of Lemma~\ref{lem:id4}]
    We prove Lemma \ref{lem:id4} by showing that \eqref{eq:proof10} holds when conditioning both sides on $C_{tu}$ for each $1\leq u\leq\Delta$. Lemma \ref{lem:id4} then follows because $\{C_{tu}: 1 \leq u \leq \Delta\}$ forms a partition of the whole sample space.

    The left hand side of \eqref{eq:proof10} conditional on $C_{t\Delta}$ is
    \begin{align}
         & E\left\{ \prod_{j=t+1}^{t+\Delta-1}\left[\frac{\mathbbm{1}(A_j=0)}{P(A_j=0\mid H_j)}\right]^{\mathbbm{1}(\max_{1\leq s\leq j-t}R_{t+s}(\bar{A}_{t-1},a,\bar{0})=0)}Y_{t,\Delta}(\bar{A}_{t-1},a,\bar{0})\mid A_t=a,H_t,I_t=1,C_{t\Delta}\right\} \nonumber \\
        = & E\left\{ \prod_{j=t+1}^{t+\Delta-1}\left[\frac{\mathbbm{1}(A_j=0)}{P(A_j=0\mid H_j)}\right]Y_{t,\Delta}(\bar{A}_{t-1},a,\bar{0}_{\Delta-1})\mid A_t=a,H_t,I_t=1,C_{t\Delta}\right\} \label{eq:proof11}\\
        = & E\left\{ \prod_{j=t+1}^{t+\Delta-1}\left[\frac{\mathbbm{1}(A_j=0)}{P(A_j=0\mid H_j)}\right]Y_{t,\Delta}(\bar{A}_{t-1},a,\bar{A}_{t+1:t+\Delta-1})\mid A_t=a,H_t,I_t=1,C_{t\Delta}\right\} \label{eq:proof12}\\
        = & E\left\{ \prod_{j=t+1}^{t+\Delta-1}\left[\frac{\mathbbm{1}(A_j=0)}{P(A_j=0\mid H_j)}\right]Y_{t,\Delta}\mid A_t=a,H_t,I_t=1,C_{t\Delta}\right\} ,\label{eq:proof13}
    \end{align}
    which is the right hand side of \eqref{eq:proof10} conditional on $C_{t\Delta}$.
    Here, \eqref{eq:proof11} follows from the definition of $C_{t\Delta}$;
    \eqref{eq:proof12} follows from $\prod_{j=t+1}^{t+\Delta-1}\mathbbm{1}(A_j=0) Y_{t,\Delta}(\bar{A}_{t-1}, a, \bar{0}_{\Delta-1}) = \prod_{j=t+1}^{t+\Delta-1}\mathbbm{1}(A_j=0) Y_{t,\Delta}(\bar{A}_{t-1}, a, \bar{A}_{t+1:t+\Delta-1})$, a consequence of Assumption \ref{assumption1};
    and \eqref{eq:proof13} follows from Assumption \ref{assumption1}.

    The left hand side of \eqref{eq:proof10} conditional on $C_{tu}$ for some $1\leq u\leq\Delta-1$ is
    \begin{align}
         & E\left\{ \prod_{j=t+1}^{t+\Delta-1}\left[\frac{\mathbbm{1}(A_j=0)}{P(A_j=0\mid H_j)}\right]^{\mathbbm{1}(\max_{1\leq s\leq j-t}R_{t+s}(\bar{A}_{t-1},a,\bar{0})=0)}Y_{t,\Delta}(\bar{A}_{t-1},a,\bar{0})\mid A_t=a,H_t,I_t=1,C_{tu}\right\} \nonumber \\
        = & E\left\{ \prod_{j=t+1}^{t+u-1}\left[\frac{\mathbbm{1}(A_j=0)}{P(A_j=0\mid H_j)}\right]Y_{t,\Delta}(\bar{A}_{t-1},a,\bar{0})\mid A_t=a,H_t,I_t=1,C_{tu}\right\} \label{eq:proof14}\\
        = & E\left\{ \prod_{j=t+1}^{t+u-1}\left[\frac{\mathbbm{1}(A_j=0)}{P(A_j=0\mid H_j)}\right]Y_{t,\Delta}(\bar{A}_{t-1},a,\bar{0}_{u-1},\bar{A}_{t+u:t+\Delta-1})\mid A_t=a,H_t,I_t=1,C_{tu}\right\} \label{eq:proof15}\\
        = & E\left\{ \prod_{j=t+1}^{t+u-1}\left[\frac{\mathbbm{1}(A_j=0)}{P(A_j=0\mid H_j)}\right]Y_{t,\Delta}(\bar{A}_{t+\Delta-1})\mid A_t=a,H_t,I_t=1,C_{tu}\right\} \label{eq:proof16}\\
        = & E\left\{ \prod_{j=t+1}^{t+u-1}\left[\frac{\mathbbm{1}(A_j=0)}{P(A_j=0\mid H_j)}\right]Y_{t,\Delta}\mid A_t=a,H_t,I_t=1,C_{tu}\right\}\label{eq:proof17}\\
        = & E\left\{ \prod_{j=t+1}^{t+\Delta-1}\left[\frac{\mathbbm{1}(A_j=0)}{P(A_j=0\mid H_j)}\right]^{\mathbbm{1}(\max_{1\leq s\leq j-t}R_{t+s}=0)}Y_{t,\Delta}\mid A_t=a,H_t,I_t=1,C_{tu}\right\} ,\label{eq:proof18}
    \end{align}
    which is the right hand side of \eqref{eq:proof10} conditional on $C_{tu}$.
    Here, \eqref{eq:proof14} follows from the definition of $C_{tu}$;
    \eqref{eq:proof15} follows from Lemma \ref{lem:id1}; \eqref{eq:proof16}
    follows from $\prod_{j=t+1}^{t+u-1}\mathbbm{1}(A_j=0) Y_{t,\Delta}(\bar{A}_{t-1},a,\bar{0}_{u-1},\bar{A}_{t+u:t+\Delta-1}) = \prod_{j=t+1}^{t+u-1}\mathbbm{1}(A_j=0) Y_{t,\Delta}(\bar{A}_{t-1},a,\bar{A}_{t+1:t+\Delta-1})$, a consequence of Assumption \ref{assumption1}; \eqref{eq:proof17}
    follows from Assumption \ref{assumption1}; and \eqref{eq:proof18} follows from the definition of $C_{tu}$. This completes the proof.
\end{proof}

\section{Proof of Theorem 1}\label{appen:sec:theorem1-proof}
To establish Theorem 1, we assume the following regularity conditions.

\begin{asu}
    \label{appen:thm1-assumption1}
    Suppose $(\alpha,\beta) \in \Theta$, where $\Theta$ is a compact subset of a Euclidean space. Suppose there exists unique $(\alpha',\beta') \in \Theta$ such that $E[U_i(\alpha',\beta')] = 0$.
\end{asu}

\begin{asu}
    \label{appen:thm1-assumption2}
    Suppose $S_t$, $\exp(S_t)$, $g(H_t)$ and $\exp \{g(H_t)\}$ all have finite fourth moment.
\end{asu}

\begin{asu}
    \label{appen:thm1-assumption3}
    Suppose $U_i(\alpha,\beta)$ and $U_i(\alpha,\beta)U_i(\alpha,\beta)^T$ are both bounded by integrable functions.
\end{asu}

\begin{lem}\label{lem:consistency}
    Suppose Equation \eqref{eq:linear-model} and Assumptions \ref{assumption1}-\ref{assumption4} in the main paper hold. $\beta^*$ is the value of $\beta$ corresponding to the data generating distribution, $P_0$. For an arbitrary $g(H_t)^T$ and an arbitrary $\alpha$, we have
    \begin{align}
        E\left[I_t e^{-A_t S_t^{T} \beta^{*}}\left\{Y_{t, \Delta}-e^{g\left(H_t\right)^{T} \alpha+A_t S_t^{T} \beta^{*}}\right\} M_t W_t \left\{A_t-\tilde{p}_t\left(S_t\right)\right\} S_t\right]=0. \nonumber
    \end{align}
\end{lem}

\begin{proof}[Proof of Lemma~\ref{lem:consistency}]
    We have
    \begin{adjustwidth}{-3em}{-3em}
    \begin{align}
        &E\left[I_t e^{-A_t S_t^{T} \beta^{*}}\left\{Y_{t, \Delta}-e^{g\left(H_t\right)^{T} \alpha+A_t S_t^{T} \beta^{*}}\right\} M_t W_t\left\{A_t-\tilde{p}_t\left(S_t\right)\right\} S_t\right] \nonumber \\
        =& E\left(E\left[I_t e^{-A_t S_t^{T} \beta^{*}}\left\{Y_{t, \Delta}-e^{g\left(H_t\right)^{T} \alpha+A_t S_t^{T} \beta^{*}}\right\} M_t W_t\left\{A_t-\tilde{p}_t\left(S_t\right)\right\} S_t \mid H_t\right]\right) \label{eq:B1}\\
        =& E\left(E\left[e^{-A_t S_t^{T} \beta^{*}}\left\{Y_{t, \Delta}-e^{g\left(H_t\right)^{T} \alpha+A_t S_t^{T} \beta^{*}}\right\} M_t W_t\left\{A_t-\tilde{p}_t\left(S_t\right)\right\} \mid H_t, I_t=1\right] I_t S_t\right) \label{eq:B2}\\
        =& E\left(E\bigg[e^{-S_t^{T} \beta^{*}}\left\{Y_{t, \Delta}-e^{g\left(H_t\right)^{T} \alpha+S_t^{T} \beta^{*}}\right\}\left\{1-\tilde{p}_t\left(S_t\right)\right\} \prod_{j=t+1}^{t+\Delta-1} \big[\frac{1\left(A_j=0\right)}{1-p_j\left(H_j\right)}\big]
        ^{\mathbbm{1} (\max\limits_{1 \leq s \leq j-t} R_{t+s} = 0)}  \mid H_t, I_t=1, A_t=1 \bigg] I_t \tilde{p}_t\left(S_t\right) S_t\right)  \nonumber \\ 
        &-E\left(E\bigg[\big\{Y_{t, \Delta}-e^{g\left(H_t\right)^{T} \alpha}\big\} \tilde{p}_t\left(S_t\right) \prod_{j=t+1}^{t+\Delta-1} \big[\frac{1 \left(A_j=0\right)}{1-p_j\left(H_j\right)}\big]^{\mathbbm{1} (\max\limits_{1 \leq s \leq j-t} R_{t+s} = 0)} \mid H_t, I_t=1, A_t=0\bigg] I_t\left\{1-\tilde{p}_t\left(S_t\right)\right\} S_t\right) \label{eq:B3}\\
        =& E\left[\left\{e^{-S_t^{T} \beta^{*}} E\Bigg(\prod_{j=t+1}^{t+\Delta-1} \bigg[\frac{\mathbbm{1}\left(A_j=0\right)}{1-p_j\left(H_j\right)}\bigg]^{\mathbbm{1} (\max\limits_{1 \leq s \leq j-t} R_{t+s} = 0)} Y_{t, \Delta} \mid H_t, I_t=1, A_t=1\Bigg)\right.\right. \nonumber\\ 
        &\left. \left.-E\Bigg(\prod_{j=t+1}^{t+\Delta-1} \bigg[\frac{\mathbbm{1}\left(A_j=0\right)}{1-p_j\left(H_j\right)}\bigg]^{\mathbbm{1} (\max\limits_{1 \leq s \leq j-t} R_{t+s} = 0)} Y_{t, \Delta} \mid H_t, I_t=1, A_t=0\Bigg)\right\} I_t \tilde{p}_t\left(S_t\right)\left\{1-\tilde{p}_t\left(S_t\right)\right\} S_t\right] \label{eq:B4}\\
        =& 0. \label{eq:B5}
    \end{align}
    \end{adjustwidth}
    Here, \eqref{eq:B1} follows from the law of iterated expectation; \eqref{eq:B2} and \eqref{eq:B3} follows from expanding the expectation on $I_t$ and $A_t$ respectively; \eqref{eq:B4} follows from collecting terms; and \eqref{eq:B5} follows from the identifiability result (4) and the linear model (5) in the main paper, and the fact that $\beta^*$ is the true parameter value. This completes the proof.
\end{proof}

\begin{proof}[Proof of Theorem \ref{thm:CAN}.]
    Assumption \ref{appen:thm1-assumption1} implies that $(\hat{\alpha}, \hat{\beta}) \stackrel{p}{\to} \left(\alpha', \beta'\right)$ by Theorem $5.9$ of \citet{van2000asymptotic}. Assumption \ref{appen:thm1-assumption1} (uniqueness of $\beta'$) and Lemma \ref{lem:consistency} imply that $\beta^{*}=\beta'$. Therefore, $(\hat{\alpha}, \hat{\beta}) \stackrel{p}{\to} \left(\alpha', \beta^*\right)$, and in particular $\hat\beta$ is consistent for $\beta^*$.

    Because $U_i(\alpha, \beta)$ is continuously differentiable and hence Lipschitz continuous, Theorem $5.21$ of \citet{van2000asymptotic} implies that $\sqrt{n} \big\{(\hat{\alpha}, \hat{\beta})-\left(\alpha', \beta^*\right)\big\}$ is asymptotically normal with mean zero and covariance matrix $\Sigma = \big\{E[\partial_\beta U_i(\alpha',\beta^*)] \big\}^{-1} E\left[ U_i(\alpha', \beta^*)U_i(\alpha', \beta^*)^T \right] \big\{E[ \partial_\beta U_i(\alpha', \beta^*)] \big\}^{-1,T}$. 

    \readd{
    Lastly, we show that this covariance matrix can be consistently estimated by 
    \begin{align*}
          \left\{\frac{1}{n}\sum_{i=1}^n \partial_\beta U_i(\hat{\alpha},\hat{\beta}) \right\}^{-1} \left\{\frac{1}{n}\sum_{i=1}^n U_i(\hat{\alpha}, \hat{\beta}) U_i(\hat{\alpha}, \hat{\beta})^T\right\} \left\{\frac{1}{n}\sum_{i=1}^n \partial_\beta U_i(\hat{\alpha}, \hat{\beta})\right\}^{-1,T}.
    \end{align*}
    This is achieved by proving
    \begin{align}
        \frac{1}{n}\sum_{i=1}^n \partial_\beta U_i(\hat{\alpha},\hat{\beta}) \stackrel{p}{\to} E[\partial_\beta U_i(\alpha',\beta^*)] \label{appen:eq:thm1-proof1}
    \end{align}
    and 
    \begin{align}
        \frac{1}{n}\sum_{i=1}^n U_i(\hat{\alpha}, \hat{\beta}) U_i(\hat{\alpha}, \hat{\beta})^T \stackrel{p}{\to} E\left[ U_i(\alpha', \beta^*)U_i(\alpha', \beta^*)^T \right], \label{appen:eq:thm1-proof2}
    \end{align}
    combined with Slutsky's theorem.

    We now prove \eqref{appen:eq:thm1-proof1} and \eqref{appen:eq:thm1-proof2}. Consider a function $f_i(\alpha,\beta)$, which we will later set to be either $\partial_\beta U_i(\alpha,\beta)$ or $U_i(\alpha,\beta)U_i(\alpha,\beta)^T$. We define some standard empirical process notation. Let $\mathbbm{P}_n$ denote the empirical average over $i=1,2,\ldots,n$, and let $\mathbbm{P}$ denote the expectation over a new observation, i.e., $\mathbbm{P} f_i(\hat\alpha, \hat\beta) = \int f_i(\hat\alpha, \hat\beta) d\mathbbm{P}$ where in the integral $\hat\alpha$ and $\hat\beta$ are considered fixed and the equation integrates over the distribution of a new observation. Then $E\{f_i(\alpha,\beta)\} = \mathbbm{P}\{f_i(\alpha,\beta)\}$. We have
    \begin{align}
        & ~~ \mathbbm{P}_n\{f_i(\hat\alpha,\hat\beta)\} - \mathbbm{P}\{f_i(\alpha,\beta)\} \nonumber \\
        & = [\mathbbm{P}_n\{f_i(\hat\alpha,\hat\beta)\} - \mathbbm{P}\{f_i(\hat\alpha,\hat\beta)\}] + [\mathbbm{P}\{f_i(\hat\alpha,\hat\beta)\} - \mathbbm{P}\{f_i(\alpha,\beta)\}]. \label{appen:eq:thm1-proof3}
    \end{align}
    The first term in \eqref{appen:eq:thm1-proof3} can be controlled by Glivenko-Cantelli Theorem:
    \begin{align*}
        |\mathbbm{P}_n\{f_i(\hat\alpha,\hat\beta)\}| - \mathbbm{P}\{f_i(\hat\alpha,\hat\beta)\} \leq \sup_{\alpha, \beta}|(\mathbbm{P}_n - \mathbbm{P}) f_i(\alpha,\beta)| = o_p(1),
    \end{align*}
    because whichever of $\partial_\beta U_i(\alpha,\beta)$ or $U_i(\alpha,\beta)U_i(\alpha,\beta)^T$ that we set $f_i$ to, it takes value in a Glivenko-Cantelli class as $(\alpha,\beta)$ is finite-dimensional and takes value in a compact set (Assumption \ref{appen:thm1-assumption1}). The second term in \eqref{appen:eq:thm1-proof3}, $\mathbbm{P}\{f_i(\hat\alpha,\hat\beta)\} - \mathbbm{P}\{f_i(\alpha,\beta)\}$, is $o_p(1)$ by the dominated convergence theorem because of $(\hat{\alpha}, \hat{\beta}) \stackrel{p}{\to} \left(\alpha', \beta^*\right)$ and Assumption \ref{appen:thm1-assumption3}. Therefore, \eqref{appen:eq:thm1-proof3} is $o_p(1)$ for when $f_i = \partial_\beta U_i$ and when $f_i = U_i U_i^T$. Thus \eqref{appen:eq:thm1-proof1} and \eqref{appen:eq:thm1-proof2} are proven. This completes the proof of Theorem \ref{thm:CAN}.
    }

\end{proof}

\section{More Details on the Efficiency Gain by pd-EMEE}

\label{appen:sec:explain-efficiency-gain}

\subsection{Elaboration on the Intuition Behind the Efficiency Gain of pd-EMEE}

This subsection elaborates the discussion of the efficiency gain of pd-EMEE over EMEE in Section \ref{sec:method}. Let us examine how the EMEE weighs the $(i,t)$-decision point by $W_{it}'$ and later contrast with our estimator. Case 1: suppose for some $t+1 \leq j \leq t+\Delta-1$, $A_{ij} = 1$ (Table \ref{tab:weight-comparison}, Examples 1a and 1b). Because this is different from the treatment trajectory for the excursion, the $(i,t)$-decision point is assigned weight $W_{it}' = 0$ by EMEE. Case 2: suppose $A_{i,t+1}=\cdots=A_{i,t+\Delta-1} = 0$ (Table \ref{tab:weight-comparison}, Examples 2a and 2b). In this case, $W_{it}' > 1$ and the $(i,t)$-decision point is upweighted to account for itself and similar decision points that fall into Case 1 and thus have weight 0.

The key innovation in our proposed estimator is $W_{it}$, which improves efficiency over EMEE. We call $W_{it}$ the \textit{per-decision inverse probability weight}, and $\hat\beta$ the pd-EMEE method, because the inclusion of each of the $(\Delta-1)$ IPW terms is determined by each decision point: for each $j$ from $t+1$ to $t+\Delta-1$, if $\max_{1\leq s\leq j-t}R_{i,t+s}=0$ then $\frac{\mathbbm{1}(A_{ij}=0)}{1-p_{j}\left(H_{ij}\right)}$ is included in $W_{it}$; otherwise it is not included. Therefore, $W_{it}$ includes in its product the IPW terms from $j=t+1$ to right before the first occurrence of some $R_{j^*} = 1$. By not including the IPW terms after this $R_{j^*} = 1$, the per-decision IPW does not account for the treatment assignments after $j^*$. This is appropriate because the maximum property \eqref{eq:maximum-property} implies that the treatment assignments after $j^*$ don't affect the value of $Y_{t,\Delta}$ (which has to equal 1). To see how our estimator improves efficiency over the EMEE, let us now examine how our estimator weighs the $(i,t)$-decision point by $W_{it}$. For Case 1 described above, if some $R_j = 1$ occurs before the first $A_k = 1$ (Table \ref{tab:weight-comparison}, Example 1b), the $(i,t)$-decision point will have a positive weight. For Case 2 described above, if the first $R_{j^*} = 1$ occurs for $j^* < t+\Delta-1$ (Table \ref{tab:weight-comparison}, Example 2b), then $W_{it}$ will include fewer terms than $W_{it}'$ and thus be smaller, because $ W_{it} = \prod_{j=t+1}^{j^*}\frac{1}{1-p_{j}\left(H_{ij}\right)} < \prod_{j=t+1}^{t+\Delta-1}\frac{1}{1-p_{j}\left(H_{ij}\right)} = W_{it}'$. When the proximal outcome window is shorter than the distance between two decision points so that $\Delta = 1$, $\hat\beta$ is equivalent to EMEE.

To summarize, by capitalizing on the maximum property \eqref{eq:maximum-property}, the per-decision IP weight $W_{it}$ makes use of some decision points that would be otherwise discarded because $W_{it}' = 0$ (such as Example 1b) and reduces the weight of some other decision points (such as Example 2b), and this leads to the efficiency improvement. The efficiency improvement of $\hat\beta$ over EMEE is larger when a larger proportion of the decision points belong to the above two types, which happens when the proximal outcome window length $\Delta$ is large, when the randomization probability is large, or when the baseline success probability (i.e., the success probability of the proximal outcome under no treatment) is large. 

\begin{table}
  \caption{Comparison between per-decision IP weight $W_{it}$ in our estimator and standard IP weight $W_{it}'$ in EMEE in a few examples for the case of $\Delta = 3$. Subscript $i$ for all variables are omitted.}\label{tab:weight-comparison}
  \centering
  \begin{tabular}{@{}cccccccccc@{}}
  \toprule
   & $R_{t}$ & $A_{t+1}$ & $R_{t+1}$ & $A_{t+2}$ & $R_{t+2}$ & $Y_{t,\Delta}$ & & $W_t$ & $W_t'$ \\
  \midrule
  Example 1a & 0 & 0 & 0 & 1 & 0 & 0 & & 0 & 0 \\
  Example 1b & 0 & 0 & 1 & 1 & 0 & 1 & & $\frac{1}{1-p_{t+1}\left(H_{t+1}\right)}$ & 0 \\
  Example 2a & 0 & 0 & 0 & 0 & 0 & 0 & & $\frac{1}{1-p_{t+1}\left(H_{t+1}\right)} \cdot \frac{1}{1-p_{t+2}\left(H_{t+2}\right)}$ & $\frac{1}{1-p_{t+1}\left(H_{t+1}\right)} \cdot \frac{1}{1-p_{t+2}\left(H_{t+2}\right)}$ \\
  Example 2b & 0 & 0 & 1 & 0 & 0 & 1 & & $\frac{1}{1-p_{t+1}\left(H_{t+1}\right)}$ & $\frac{1}{1-p_{t+1}\left(H_{t+1}\right)} \cdot \frac{1}{1-p_{t+2}\left(H_{t+2}\right)}$  \\
  \bottomrule
  \end{tabular}

\end{table}

\subsection{A Theoretical Analysis of the Relative Efficiency}
\label{appen:subsec:re}

To understand the drivers of the efficiency gain, here we analytically derive the relative efficiency between the proposed method (we call it ``pd-EMEE'' where ``pd'' stands for per-decision) and the EMEE method \citep{qian2021estimating} under a set of simplifying assumptions. In particular, we assume that individuals are always available ($I_{it} \equiv 1$), the randomization probability $p_{t}(H_{it})$ is a constant $p$, the moderator $S_{it} = 1$ (i.e., interested in the fully marginal effect), and $\tilde{p}_{t}(S_{it})$ is chosen to be equal to $p$. Furthermore, we consider the case $T=1$ with $\Delta \geq 1$ so for each individual there is only one proximal outcome $Y$ but $\Delta$ treatment occasions and $\Delta$ sub-outcomes $R_2,\ldots,R_{\Delta+1}$. We assume that all the sub-outcomes are independent with all the treatments, the $\Delta$ sub-outcomes for the same individual are independent with each other, and the sub-outcomes are identically distributed: $R_t$ follows Bernoulli($q$) distribution for some $q$.

We consider a simplified variation of both the pd-EMEE and the EMEE methods. In both estimators, we assume that the nuisance parameter $\alpha$ is not estimated and is set to 0. In other words, no control variables are used in either estimator. This helps to further simplify the theoretical derivation while preserving the essence of this efficiency comparison, because both estimators are consistent for any control variables.

Under these simplifying assumptions, the estimating function for pd-EMEE ($m_1$) and the estimating function for EMEE ($m_2$) for the fully marginal causal excursion effect $\beta$ are as follows:
\begin{align*}
    m_1(\beta) & = e^{-A_1 \beta} Y \prod_{j=2}^{\Delta} \frac{\mathbbm{1}(A_j = 0)}{1-p}^{\mathbbm{1} (\max_{2 \leq s \leq j}R_j = 0)} (A_1-p), \\
    m_2(\beta) & = e^{-A_1 \beta} Y \prod_{j=2}^{\Delta} \frac{\mathbbm{1}(A_j = 0)}{1-p} (A_1-p),
\end{align*}
and $\prod_{j=2}^{\Delta}(\cdot) := 1$ if $\Delta = 1$.

We derive in Section \ref{appen:subsubsec:re} that under these simplifying assumptions, the asymptotic relative efficiency between the two estimators is (AVar denoting asymptotic variance)
\begin{align}
    \text{Rel. Eff.} = \frac{\text{AVar}(\hat\beta^{\text{EMEE}})}{\text{AVar}(\hat\beta^{\text{pd-EMEE}})} =
    \begin{cases}
        \frac{1 - (1-q)^\Delta}{(1-p)^\Delta - (1-q)^\Delta} \times \frac{q-p}{q} & \text{if } p \neq q, \\
        \frac{1 - (1 - q)^\Delta}{q\Delta (1-p)^{\Delta-1}} & \text{if } p = q.
    \end{cases} \label{appen:eq:rel-eff}
\end{align}
The relative efficiency \eqref{appen:eq:rel-eff} equals 1 if $\Delta = 1$ and increases with $\Delta$. When $\Delta > 1$, the relative efficiency increases with $p$ and $q$ (Figure \ref{fig:app-re}). This confirms our intuition: When $p$ is larger, additional treatments are more likely to occur in the proximal outcome window; when $q$ is larger, sub-outcomes are more likely to be equal to 1 in the proximal outcome window; when $\Delta$ is larger, additional treatments are more likely to occur in the proximal outcome window and sub-outcomes are more likely to be equal to 1 in the proximal outcome window. In all these situations, pd-EMEE can achieve a greater reduction on the terms in the IP weight product, thus improving efficiency.

\begin{figure}[htbp]
    \centering
    \begin{subfigure}{.32\textwidth}
        \centering
        \includegraphics[width=1\linewidth]{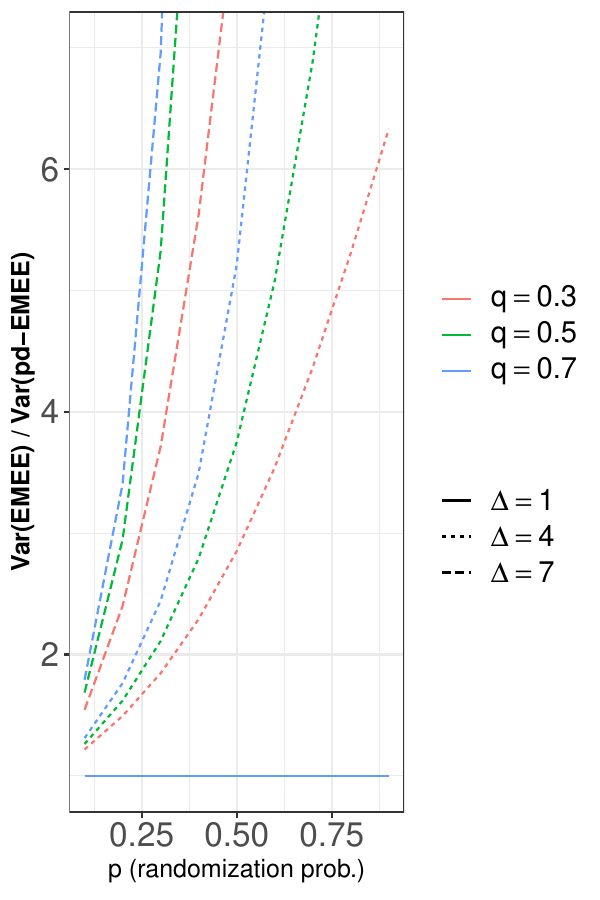}
    \end{subfigure}
    \begin{subfigure}{.32\textwidth}
        \centering
        \includegraphics[width=1\linewidth]{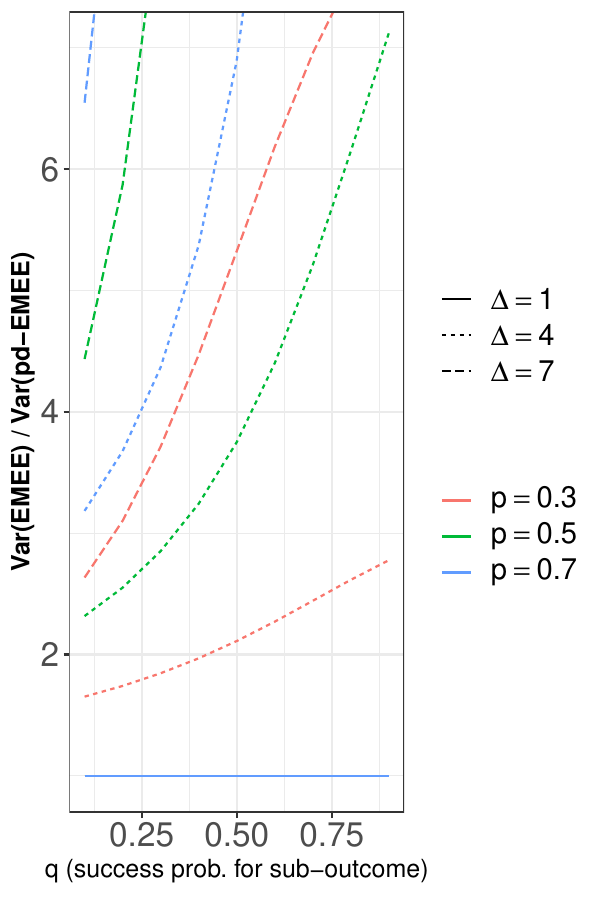}
    \end{subfigure}
    \begin{subfigure}{.32\textwidth}
        \centering
        \includegraphics[width=1\linewidth]{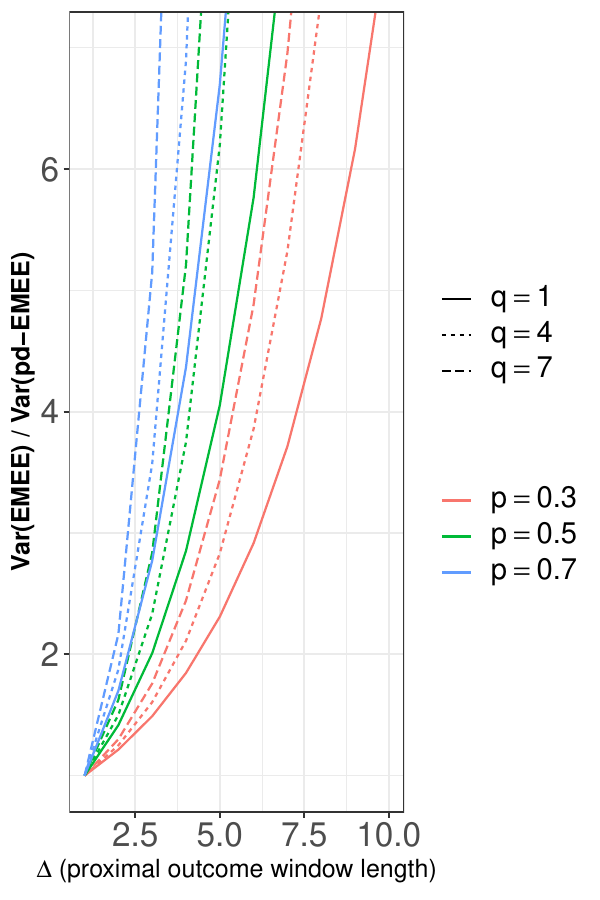}
    \end{subfigure}
    \caption{Relative efficiency \eqref{appen:eq:rel-eff} against $p$, $q$, and $\Delta$. In each of the first two plots, the three lines for $\Delta=1$ are all flat and overlap with each other.}
    \label{fig:app-re}
\end{figure}

\subsubsection{Proof of Equation \eqref{appen:eq:rel-eff}}

\label{appen:subsubsec:re}

Suppose all the simplifying assumptions stated in Section \ref{appen:subsec:re} hold.

\textbf{Step 1:} We derive the asymptotic variance of $\hat\beta^{\text{pd-EMEE}}$ that solves the following estimating equation:
\begin{align*}
    m_1(\beta) & = e^{-A_1 \beta} Y \prod_{j=2}^{\Delta} \left\{\frac{\mathbbm{1}(A_j = 0)}{1-p}\right\}^{\mathbbm{1} (\max_{2 \leq s \leq j}R_j = 0)} (A_1-p).
\end{align*}
The asymptotic variance equals
\begin{align*}
    \text{AVar}(\hat\beta^{\text{pd-EMEE}}) = \{E(\dot{m}_1)\}^{-1} E(m_1 m_1^T) \{E(\dot{m}_1)\}^{-1,T}.
\end{align*}
We derive each of the terms in the product.

\textbf{Step 1.1:} Deriving $E(\dot{m}_1)$. We have
\begin{align*}
    \dot{m}_1(\beta) = -e^{-A_1\beta} Y \prod_{j=2}^{\Delta}\left\{ \frac{\mathbbm{1}(A_j = 0)}{1-p}\right\}^{\mathbbm{1} (\max_{2 \leq s \leq j}R_j = 0)} A_1 (A_1-p).
\end{align*}
For $2 \leq u \leq \Delta$, let $B_u$ denote the event $R_2 = \cdots = R_u = 0$ and $R_{u+1} = 1$; let $B_1$ denote the event $R_2 = 1$; and let $B_0$ denote the event that $R_2 = \cdots = R_{\Delta+1} = 0$. Note that $\{B_t\}_{0 \leq t \leq \Delta}$ forms a partition of the whole sample space, and that $Y = 0$ if and only if $B_0$ holds. We have
\begin{align*}
    E(\dot{m}_1) & = - \sum_{t=0}^\Delta E\left[ e^{-A_1\beta} Y \prod_{j=2}^{\Delta}\left\{ \frac{\mathbbm{1}(A_j = 0)}{1-p}\right\}^{\mathbbm{1} (\max_{2 \leq s \leq j}R_j = 0)} A_1 (A_1-p) \mid B_t\right] P(B_t) \\
    & = - \sum_{t=1}^\Delta E\left[ e^{-A_1\beta} \prod_{j=2}^{t}\left\{ \frac{\mathbbm{1}(A_j = 0)}{1-p}\right\} A_1 (A_1-p)\right] P(B_t) \\
    & = - \sum_{t=1}^\Delta E\{ e^{-A_1\beta}  A_1 (A_1-p)\} P(B_t) \\
    & = - E\{ e^{-A_1\beta}  A_1 (A_1-p)\} \sum_{t=1}^\Delta P(B_t) \\
    & = - E\{ e^{-A_1\beta}  A_1 (A_1-p)\} \{1-P(B_0)\} \\
    & = - E\{ e^{-A_1\beta}  A_1 (A_1-p)\} \{1-(1-q)^\Delta\},
\end{align*}
where the first equality follows from the fact that $\{B_t\}_{0 \leq t \leq \Delta}$ forms a partition of the whole sample space, the second equality follows from the fact that $Y = 0$ if and only if $B_0$ holds, the third equality follows from the law of iterated expectation and the fact that $p$ is the randomization probability of $A_j$, and the last equality follows from the assumption of i.i.d. $R_j$.

\textbf{Step 1.2:} Deriving $E(m_1 m_1^T)$. Because $S_{it} = 1$, both $\beta$ and $m_1$ are scalars and thus $E(m_1 m_1^T) = E(m_1^2)$. We have
\begin{align*}
    E(m_1 m_1^T) & = E \left[ e^{-2A_1 \beta} Y \prod_{j=2}^{\Delta}\left\{ \frac{\mathbbm{1}(A_j = 0)}{(1-p)^2}\right\}^{\mathbbm{1} (\max_{2 \leq s \leq j}R_j = 0)} (A_1-p)^2 \right] \\
    & = \sum_{t=0}^\Delta E\left[ e^{-2A_1\beta} Y \prod_{j=2}^{\Delta}\left\{ \frac{\mathbbm{1}(A_j = 0)}{(1-p)^2}\right\}^{\mathbbm{1} (\max_{2 \leq s \leq j}R_j = 0)} (A_1-p)^2 \mid B_t \right] P(B_t) \\
    & = \sum_{t=1}^\Delta E\left[ e^{-2A_1\beta} \prod_{j=2}^{t}\left\{ \frac{\mathbbm{1}(A_j = 0)}{(1-p)^2}\right\} (A_1-p)^2\right] P(B_t) \\
    & = \sum_{t=1}^\Delta E\left[ e^{-2A_1\beta} (A_1-p)^2\right] P(B_t) / (1-p)^{t-1} \\
    & = \sum_{t=1}^\Delta E\{ e^{-2A_1\beta} (A_1-p)^2\} \frac{(1-q)^{t-1}q}{(1-p)^{t-1}} \\
    & = 
    \begin{cases}
        E\{ e^{-2A_1\beta} (A_1-p)^2\} q \frac{1 - \left( \frac{1-q}{1-p} \right)^\Delta}{1 - \left( \frac{1-q}{1-p} \right)} & \text{if } p\neq q \\
        E\{ e^{-2A_1\beta} (A_1-p)^2\} q \Delta  & \text{if } p = q,
    \end{cases}
\end{align*}
where the second equality follows from the fact that $\{B_t\}_{0 \leq t \leq \Delta}$ forms a partition of the whole sample space, the third equality follows from the fact that $Y = 0$ if and only if $B_0$ holds, the fourth equality follows from the law of iterated expectation and the fact that $p$ is the randomization probability of $A_j$, and the fifth equality follows from the assumption of i.i.d. $R_j$.

\textbf{Step 1.3:} Putting together the above two steps and we have 
\begin{align}
    \text{AVar}(\hat\beta^{\text{pd-EMEE}}) & = \{E(\dot{m}_1)\}^{-1} E(m_1 m_1^T) \{E(\dot{m}_1)\}^{-1,T} \nonumber \\
    & = (1 - (1-q)^\Delta)^{-2} \cdot [E\{ -e^{-A_1\beta} A_1 (A_1-p)\}]^{-2} \cdot E \{ e^{-2A_1 \beta} (A_1-p)^2 \} \cdot L, \label{eq:appen:avar-pdemee}
\end{align}
where
\begin{align*}
    L = 
    \begin{cases}
        q \frac{1 - \left( \frac{1-q}{1-p} \right)^\Delta}{1 - \left( \frac{1-q}{1-p} \right)} & \text{if } p\neq q \\
        q \Delta  & \text{if } p = q.
    \end{cases}
\end{align*}

\textbf{Step 2:} We derive the asymptotic variance of $\hat\beta^{\text{EMEE}}$ that solves the following estimating equation:
\begin{align*}
    m_2(\beta) & = e^{-A_1 \beta} Y \prod_{j=2}^{\Delta} \left\{\frac{\mathbbm{1}(A_j = 0)}{1-p}\right\} (A_1-p).
\end{align*}
The asymptotic variance equals
\begin{align*}
    \text{AVar}(\hat\beta^{\text{EMEE}}) = \{E(\dot{m}_2)\}^{-1} E(m_2 m_2^T) \{E(\dot{m}_2)\}^{-1,T}.
\end{align*}
We derive each of the terms in the product.

\textbf{Step 2.1:} Deriving $E(\dot{m}_2)$. We have
\begin{align*}
    \dot{m}_2(\beta) = -e^{-A_1\beta} Y \prod_{j=2}^{\Delta}\left\{ \frac{\mathbbm{1}(A_j = 0)}{1-p}\right\} A_1 (A_1-p).
\end{align*}
We have
\begin{align*}
    E(\dot{m}_2) & = -E(Y) E\left[ -e^{-A_1\beta} \prod_{j=2}^{\Delta}\left\{ \frac{\mathbbm{1}(A_j = 0)}{1-p}\right\} A_1 (A_1-p)\right] \\
    & = -E(Y) E\{ -e^{-A_1\beta} A_1 (A_1-p)\} \\
    & = -(1 - (1-q)^\Delta) E\{ -e^{-A_1\beta} A_1 (A_1-p)\},
\end{align*}
where the first equality follows from the assumption that all the sub-outcomes are independent with all the treatments and thus $Y$ is independent of all the $A_j$'s, the second equality follows from the law of iterated expectation and the fact that $p$ is the randomization probability of $A_j$, and the last equality follows from the assumption of i.i.d. $R_j$.

\textbf{Step 2.2:} Deriving $E(m_2 m_2^T)$. Because $S_{it} = 1$, both $\beta$ and $m_2$ are scalars and thus $E(m_2 m_2^T) = E(m_2^2)$. We have
\begin{align*}
    E(m_2 m_2^T) & = E \left[ e^{-2A_1 \beta} Y \prod_{j=2}^{\Delta}\left\{ \frac{\mathbbm{1}(A_j = 0)}{(1-p)^2}\right\} (A_1-p)^2 \right] \\
    & = E(Y) E \left[ e^{-2A_1 \beta} \prod_{j=2}^{\Delta}\left\{ \frac{\mathbbm{1}(A_j = 0)}{(1-p)^2}\right\} (A_1-p)^2 \right] \\
    & = E(Y) E \{ e^{-2A_1 \beta} (A_1-p)^2 \}  / (1-p)^{\Delta-1} \\
    & = (1 - (1-q)^\Delta) E \{ e^{-2A_1 \beta} (A_1-p)^2 \} / (1-p)^{\Delta-1},
\end{align*}
where the second equality follows from the assumption that all the sub-outcomes are independent with all the treatments and thus $Y$ is independent of all the $A_j$'s, the third equality follows from the law of iterated expectation and the fact that $p$ is the randomization probability of $A_j$, and the last equality follows from the assumption of i.i.d. $R_j$.

\textbf{Step 2.3:} Putting together the above two steps and we have 
\begin{align}
    \text{AVar}(\hat\beta^{\text{EMEE}}) & = \{E(\dot{m}_2)\}^{-1} E(m_2 m_2^T) \{E(\dot{m}_2)\}^{-1,T} \nonumber \\
    & = (1 - (1-q)^\Delta)^{-1} \cdot [E\{ -e^{-A_1\beta} A_1 (A_1-p)\}]^{-2} \cdot E \{ e^{-2A_1 \beta} (A_1-p)^2 \} / (1-p)^{\Delta-1}. \label{eq:appen:avar-emee}
\end{align}

Finally, combining \eqref{eq:appen:avar-pdemee} and \eqref{eq:appen:avar-emee} yields \eqref{appen:eq:rel-eff}. This completes the proof.

\section{Connection to the Literature}
\label{appen:sec:literature} 

\subsection{Connection between Per-Decision IPW (Our Approach) and the Per-Decision Importance Sampling}
\label{appen:subsec:IPW-literature-rl}



We describe a related IPW technique in the literature, the per-decision importance sampling (PDIS) in offline reinforcement learning \citet{precup2000eligibility}, and discuss how our per-decision IPW approach connects to and differs from PDIS. To make connections to our approach, we describe a simple example of PDIS using a setting similar to MRT. See \citet[Section 2]{liu2020understanding} for the general form of PDIS. Suppose there are $\Delta$ decision points, and at each decision point a treatment is randomized with randomization probability $p_t$ and a short-term outcome $R_{t+1}$ is observed. Let a distal outcome (an outcome of interest) be $Y := \sum_{t=1}^\Delta R_{t+1}$. Suppose the goal is to estimate $E\{Y (a, \bar{0}_{\Delta-1})\}$. Then we have
\begin{align}
    E\{Y(a, \bar{0}_{\Delta-1})\} & = E\left[ \left\{\frac{\mathbbm{1}(A_1 = a)}{ap_1 + (1-a)(1-p_1)} \prod_{s=2}^\Delta \frac{\mathbbm{1}(A_s = 0)}{1 - p_s} \right\}\sum_{t=1}^\Delta R_{t+1}\right] \label{appen:eq:ipw-1}\\
    & = E\left[ \sum_{t=1}^\Delta  \left\{\frac{\mathbbm{1}(A_1 = a)}{ap_1 + (1-a)(1-p_1)} \prod_{s=2}^t \frac{\mathbbm{1}(A_s = 0)}{1 - p_s} \right\} R_{t+1}\right], \label{appen:eq:ipw-2}
\end{align}
where \eqref{appen:eq:ipw-1} is the identification result using the usual IPW and \eqref{appen:eq:ipw-2} is the identification result using PDIS. The intuition behind \eqref{appen:eq:ipw-2} is that earlier outcomes cannot depend on later treatments. The same intuition is used in our approach to derive the identification result for proximal outcomes satisfying the maximum property; i.e.,
\begin{align}
    & E\{Y_{t,\Delta}(\bar{A}_{t-1}, a, \bar{0}_{\Delta-1}) \mid S_t(\bar{A}_{t-1}), I_t(\bar{A}_{t-1}) = 1\} \nonumber \\ & = E\left(E\left\{\prod_{j=t+1}^{t+\Delta-1}\left[\frac{\mathbbm{1}(A_{j}=0)}{1-p_{j}\left(H_{j}\right)}\right]^{\mathbbm{1}(\max_{1\leq s\leq j-t}R_{t+s}=0)} Y_{t,\Delta}\mid A_{t}=a,H_{t},I_{t}=1\right\} \mid S_t, I_t = 1 \right). \label{appen:eq:ipw-3}
\end{align}
The difference between our approach and PDIS is that (i) the PDIS literature typically considers the outcome to be a sum of sub-outcomes whereas we considers a binary outcome that is the maximum of sub-outcomes, (ii) the PDIS literature considers a single outcome at the end whereas we considered the setting where the main outcome itself is longitudinal (i.e., we consider $Y_{t,\Delta}$ for $1 \leq t \leq T$), and (iii) the PDIS identification result \eqref{appen:eq:ipw-2} is solely based on IPW, whereas our identification result \eqref{appen:eq:ipw-3} is a combination of IPW and standardization.

\subsection{Connection between Our Estimating Equation and the Literature}
\label{appen:subsec:ee-literature}

Our estimating function is defined as 
\begin{equation}
    U_i(\alpha,\beta) = \sum_{t=1}^T I_{it} e^{- A_{it} S_{it}^T \beta } \epsilon_{it} (\alpha,\beta) M_{it} W_{it} \begin{bmatrix}
        g(H_{it}) \\           
        \{A_{it} - \tilde{p}_t(S_{it})\}S_{it} 
    \end{bmatrix}. \label{appen:eq:ee}
\end{equation}
The role of each term is discussed in Section \ref{sec:method}. Per the request of a reviewer, here we explicate the derivation of our estimating function.

Besides the per-decision IP weight, $W_{it}$, which was explained in the previous subsection, our estimating function $U_i(\alpha,\beta)$ resembles several existing estimating functions in the literature: the multiplicative structural nested mean model (MSNMM) \citep{robins1994correcting}, the estimator for causal excursion effect (EMEE) \citep{qian2021estimating}, and the weighted and centered least squares (WCLS) estimator \citep{boruvka2018assessing}.

In particular, the WCLS estimating function for the case with continuous proximal outcome and $\Delta=1$ is
\begin{equation*}
    \sum_{t=1}^T I_{it} \tilde\epsilon_{it} (\alpha,\beta) M_{it} \begin{bmatrix}
        g(H_{it}) \\           
        \{A_{it} - \tilde{p}_t(S_{it})\}S_{it} 
    \end{bmatrix},
\end{equation*}
with $\tilde\epsilon_{it}(\alpha,\beta) = Y_{it,1} - g(H_{it}^T\alpha - \{A_{it} - \tilde{p}_t(S_{it})\}S_{it}^T\beta$. Our estimating function \eqref{appen:eq:ee} adopts the weighting (by $M_{it}$) and the centering of the treatment (by $\tilde{p}_t(S_{it})$), which were used to (i) estimate a marginal causal effect instead of a fully conditional causal effect, and (ii) make the estimator robust to misspecified control variables (the $g_t(H_{it})$ part).

To extend the WCLS estimating function to binary outcome, the EMEE estimator \citep{qian2021estimating} used a blipping-down technique of MSNMM developed by \citet{robins1994correcting}, i.e., pre-multiplying the residual term $\epsilon_{it}(\alpha,\beta)$ by a negative causal effect (to remove the causal effect, i.e., ``blip'' the outcome down to its counterpart under no treatment). We borrowed this idea from EMEE. This is where the term $e^{- A_{it} S_{it}^T \beta}$ comes from. Note that the original blipping-down technique \citet{robins1994correcting} depends on an exponent representing the fully conditional causal effect (such as $e^{- A_{it} f(H_{it})^T \theta}$), but we used a blipping-down factor that represents the marginal causal effect because this is the causal effect of interest in our case.

The maximum property may resemble the structure of a discrete-time survival outcome, and may raise the question of whether the proximal outcome should be treated as a survival outcome instead of a binary one. We believe the choice of the model and outcome type should depend on the scientific question. For some interventions, the goal is so that a desirable event, such as self-monitoring in Drink Less and activity bout in HeartStep, occurs within a reasonable time window but not necessarily as soon as possible. For such settings, the causal effect on the binary outcome will contain relevant information for assessing and improving the intervention. For other settings, the goal might be to reduce the risk of a dangerous or harmful behavior as soon as possible, in which case the causal effect on the survival outcome might be more relevant. An example of the latter is problem anger prevention \citep{metcalf2022anger}, where the intervention is to reduce the risk of aggressive behavior once anger is detected.




\section{Technical Details for the pd-EMEE2 Estimator}
\label{appen:sec:projection}

In Section \ref{sec:projection-estimator} we proposed an estimator $\tilde\beta$ that leverages the projection idea in semiparametric efficiency theory to further improve the efficiency. We present the technical details for the estimator derivation and the asymptotic theory here.

\subsection{Derivation of $\tilde\beta$}

Here we derive the analytic form of $\sum_{u=1}^T [E\{\xi_i(\beta) \mid H_{iu}, A_{iu}\} - E\{\xi_i(\beta) \mid H_{iu}\}]$ mentioned in Section \ref{sec:projection-estimator}, which directly leads to the estimating equation for $\tilde\beta$.

The following working assumption facilitates a closed form projection term.

\begin{asu}[Absence of delayed effect]
    \label{appen:est2_asu}
    For any $(t,u)$ pair such that $1 \leq u < t \leq T$, $\xi_{it}(\beta)$ is independent of $A_{iu}$ given $H_{iu}$. 
\end{asu}

The following theorem gives the form of $\sum_{u=1}^T [E\{\xi_i(\beta) \mid H_{iu}, A_{iu}\} - E\{\xi_i(\beta) \mid H_{iu}\}]$.

\begin{thm}
    \label{appen:thm:projection-form}
    Under Assumption \ref{appen:est2_asu} and Assumptions \ref{assumption1}--\ref{assumption4}, we have
    \begin{align}
      & ~~~~ \xi_i(\beta) - \sum_{u=1}^T [E\{\xi_i(\beta) \mid H_{iu}, A_{iu}\} - E\{\xi_i(\beta) \mid H_{iu}\}] \nonumber \\
      & = \sum_{t=1}^T I_{it} e^{- A_{it} S_{it}^T \beta}  M_{it}\{A_{it} - \tilde{p}_t(S_{it})\} S_{it} \bigg[Y_{it, \Delta} W_{it} - E(Y_{it, \Delta}W_{it}|H_{it}, A_{it}) \nonumber \\
      & ~~~~ - \sum_{u = t+1}^{t+\Delta-1} \left\{ E(Y_{it, \Delta}W_{it}|H_{iu}, A_{iu}) - E(Y_{it, \Delta}W_{it}|H_{iu}) \right\} \bigg] \nonumber \\
        &+\sum_{t=1}^T I_{it} \tilde{p}_t(S_{it}) \{1 - \tilde{p}_t(S_{it})\} S_{it} \left\{e^{- S_{it}^T \beta} E(Y_{it, \Delta}W_{it}|H_{it}, A_{it} = 1) - E(Y_{it, \Delta}W_{it}|H_{it}, A_{it} = 0) \right\}. \label{appen:eq:eqs2_star}
    \end{align}
\end{thm}

\subsubsection{Proof of Theorem \ref{appen:thm:projection-form}}

We have
\begin{align}
    & ~~~~\xi_i(\beta) - \sum_{u=1}^T [E\{\xi_i(\beta) \mid H_{iu}, A_{iu}\} - E\{\xi_i(\beta) \mid H_{iu}\}] \nonumber \\
    & = \sum_{t=1}^T \left[ \xi_{it}(\beta)- \sum_{u = t}^{t+\Delta - 1} \left\{ E(\xi_{it}(\beta)|H_{iu}, A_{iu}) - E\{\xi_{it}(\beta)|H_{iu}\} \right\} \right], \label{appen:est2_proof1}
\end{align}
because when $u < t$, $E(Y_{it, \Delta}W_{it}|H_{iu}, A_{iu}) - E(Y_{it, \Delta}W_{it}|H_{iu}) = 0$ from Assumption \ref{appen:est2_asu}; when $u \geq t + \Delta$, $\xi_t(\beta)$ is deterministic from $H_u$. 

When $u = t$, $E\{\xi_{it}(\beta)|H_{iu}\}$ is calculated as following:
\begin{align}
    E\{\xi_{it}(\beta)|H_{iu}\} &= 
    I_{it} S_{it} E\left(e^{- A_{it} S_{it}^T \beta}  M_{it}\{A_{it} - \tilde{p}_t(S_{it})\} Y_{it, \Delta}W_{it}|H_{it}\right) \notag\\
    &=I_{it} e^{-S_{it}^T \beta}\frac{\tilde{p}_t(S_{it})}{p_t(H_{it})} \{1 - \tilde{p}_t(S_{it})\} S_{it} E(Y_{it, \Delta}W_{it}|H_{it}, A_{it} = 1)p_t(H_{it}) \notag\\
    &- I_{it} \frac{1-\tilde{p}_t(S_{it})}{1-p_t(H_{it})} \tilde{p}_t(S_{it}) S_{it}  E(Y_{it, \Delta}W_{it}|H_{it}, A_{it} = 0)(1- p_t(H_{it})) \label{appen:est2_proof2} \\
    & = I_{it}S_{it}  \tilde{p}_t(S_{it}) \{1 - \tilde{p}_t(S_{it})\} \left\{e^{- S_{it}^T \beta} E(Y_{it, \Delta}W_{it}|H_{it}, A_{it} = 1) - E(Y_{it, \Delta}W_{it}|H_{it}, A_{it} = 0)\right\}. \label{appen:est2_proof2.1}
\end{align}
\eqref{appen:est2_proof2} follows from the fact that $e^{- A_{it} S_{it}^T \beta}$, $M_{it}$ and $A_{it} - \tilde{p}_t(S_{it})$ are $(H_{it}, A_{it})$-measurable. \eqref{appen:est2_proof2.1} follows from rearranging items in \eqref{appen:est2_proof2}. Therefore, we also have
\begin{align}
    E(\xi_{it}(\beta)|H_{iu}, A_{iu}) = I_{it} e^{-A_{it}S_{it}^T \beta} M_{it} \{A_{it} - \tilde{p}_t(S_{it})\} S_{it} E(Y_{it, \Delta}W_{it}|H_{it}, A_{it}) \label{appen:est2_proof3}
\end{align}

Similarly, when $t <u \leq t+\Delta - 1$, $Y_{it, \Delta}$ and $W_{it}$ are not $(H_u, A_u)$-measurable, and $E(Y_{it, \Delta}W_{it}|H_{iu}, A_{iu}) - E(Y_{it, \Delta}W_{it}|H_{iu})$ can be written as
\begin{align}
    E(Y_{it, \Delta}W_{it}|H_{iu}, A_{iu})  &= I_{it} e^{-A_{it}S_{it}^T \beta} M_{it} \{A_{it} - \tilde{p}_t(S_{it})\} S_{it} E(Y_{it, \Delta}W_{it}|H_{iu}, A_{iu}) \label{appen:est2_proof4}\\
    &\text{and} \notag\\   
    E(Y_{it, \Delta}W_{it}|H_{iu}) &= I_{it} e^{-A_{it}S_{it}^T \beta} M_{it} \{A_{it} - \tilde{p}_t(S_{it})\} S_{it} E(Y_{it, \Delta}W_{it}|H_{iu}) \label{appen:est2_proof5}
\end{align}
By plugging \eqref{appen:est2_proof2.1}, \eqref{appen:est2_proof3}, \eqref{appen:est2_proof4} and 
\eqref{appen:est2_proof5} in \eqref{appen:est2_proof1} we have
\begin{align}
    & ~~~~ \xi_i(\beta) - \sum_{u=1}^T [E\{\xi_i(\beta) \mid H_{iu}, A_{iu}\} - E\{\xi_i(\beta) \mid H_{iu}\}] \nonumber \\
    = & \sum_{t=1}^T I_{it} e^{- A_{it} S_{it}^T \beta}  M_{it}\{A_{it} - \tilde{p}_t(S_{it})\} S_{it} \left[Y_{it, \Delta} W_{it} - \sum_{u = t+1}^{t+\Delta-1} \left\{ E(Y_{it, \Delta}W_{it}|H_{iu}, A_{iu}) - E(Y_{it, \Delta}W_{it}|H_{iu}) \right\} \right] \nonumber \\
    &- \sum_{t=1}^T I_{it} e^{- A_{it} S_{it}^T \beta}  M_{it}\{A_{it} - \tilde{p}_t(S_{it})\} S_{it} E(Y_{it, \Delta}W_{it}|H_{it}, A_{it}) \nonumber  \\
    &+\sum_{t=1}^T I_{it}S_{it}  \tilde{p}_t(S_{it}) \{1 - \tilde{p}_t(S_{it})\} \{e^{- S_{it}^T \beta} E(Y_{it, \Delta}W_{it}|H_{it}, A_{it} = 1) - E(Y_{it, \Delta}W_{it}|H_{it}, A_{it} = 0)\}. \nonumber 
\end{align}
Rearranging terms and we get \eqref{appen:eq:eqs2_star}.


\subsection{Proof of Theorem \ref{thm:CAN2}}
\label{subsec:appen:proof-ee2}

Due to Theorem 5.1 in \citet{cheng2023efficient}, it suffices to show that $\tilde{U}_i(\beta,\mu)$ is globally robust with respect to the nuisance parameter $\mu$; i.e., $E\{\tilde{U}_i(\beta^*,\mu)\} = 0$ for all $\mu$.

We have
\begin{align*}
    \tilde{U}_i(\beta,\mu) = I_1(\beta) + I_2(\beta, \mu) + I_3(\beta,\mu),
\end{align*}
where
\begin{align*}
    I_1(\beta) & = \sum_{t=1}^T I_{it} e^{- A_{it} S_{it}^T \beta}  M_{it}\{A_{it} - \tilde{p}_t(S_{it})\} S_{it} Y_{it, \Delta} W_{it}, \\
    I_2(\beta, \mu) & = - \sum_{t=1}^T I_{it} e^{- A_{it} S_{it}^T \beta}  M_{it}\{A_{it} - \tilde{p}_t(S_{it})\} S_{it} \\ 
    & ~~~~ \times \sum_{u = t+1}^{t+\Delta-1} \Big\{ \mu_{u-t}(H_{iu}, A_{iu}) - p_u(H_{iu})\mu_{u-t}(H_{iu}, 1) - (1-p_u(H_{iu}))\mu_{u-t}(H_{iu}, 0) \Big\}, \\
    I_3(\beta, \mu) & = \sum_{t=1}^T I_{it} S_{it} \bigg[ - e^{-A_{it} S_{it}^T\beta} M_{it} \{A_{it} - \tilde{p}_{it}(S_{it})\} \mu_0(H_{it}, A_{it}) + \tilde{p}_t(S_{it}) \{1 - \tilde{p}_t(S_{it})\} \Big\{ e^{- S_{it}^T \beta} \mu_0(H_{it}, 1) - \mu_0(H_{it}, 0) \Big\} \bigg].
\end{align*}
That $E\{I_1(\beta^*)\} = 0$ follows from Lemma \ref{lem:consistency}. That $E\{I_2(\beta^*,\mu)\} = 0$ for all $\mu$ follows from the law of iterated expectation and the fact that 
\begin{align*}
    E\{\mu_{u-t}(H_{iu}, A_{iu}) \mid H_{iu}\} = p_u(H_{iu})\mu_{u-t}(H_{iu}, 1) - (1-p_u(H_{iu}))\mu_{u-t}(H_{iu}, 0) \quad \text{for any } \mu.
\end{align*}
That $E\{I_3(\beta^*,\mu)\} = 0$ for all $\mu$ follows from the law of iterated expectation and the fact that
\begin{align*}
    & ~~~~ E[e^{-A_{it}S_{it}^T\beta} M_{it} \{A_{it} - \tilde{p}_{it}(S_{it})\} \mu_0(H_{it}, A_{it}) \mid H_{it}] \\
    & = E[e^{-A_{it}S_{it}^T\beta} M_{it} \{A_{it} - \tilde{p}_{it}(S_{it})\} \mu_0(H_{it}, A_{it}) \mid H_{it}, A_{it} = 1] p_t(H_{it}) \\
    & ~~~ + E[e^{-A_{it}S_{it}^T\beta} M_{it} \{A_{it} - \tilde{p}_{it}(S_{it})\} \mu_0(H_{it}, A_{it}) \mid H_{it}, A_{it} = 0] \{ 1 - p_t(H_{it})\} \\
    & = \tilde{p}_t(S_{it}) \{1 - \tilde{p}_t(S_{it})\} \Big\{ e^{- S_{it}^T \beta} \mu_0(H_{it}, 1) - \mu_0(H_{it}, 0) \Big\}.
\end{align*}

This completes the proof.

\section{Generative model} 
\label{appen:sec:dgm}

\subsection{Generative Model used in Simulation}

We set $I_t \equiv 1$ and $T = 100$. The generative model depends on two parameters, $\Delta$ (length of the proximal outcome window) and $p_a$ (constant randomization probability). $A_t \in \{0,1\}$ was generated from $\text{Bernoulli}(p_a)$. A scalar time-varying covariate, $Z_t \in \{0, 1, 2\}$, was generated independently of all previous variables (i.e., $Z_t \perp \{Z_s,A_s,R_{s+1}: 1 \leq s \leq t-1\}$) with
\begin{align}
    P(Z_t = 0) = \gammaplaceholder^{-1/2\Delta}/C, \quad
    P(Z_t = 1) = 1/C, \quad
    P(Z_t = 2) = \gammaplaceholder^{1/2\Delta}/C, \label{eq:dgm-Zt}
\end{align}
where $C = \gammaplaceholder^{-1/2\Delta} + \gammaplaceholder^{1/2\Delta} + 1$.
The sub-outcome $R_{t+1}$ given $A_t$ and $H_t$ was generated as
\begin{align}
    P(R_{t+1} = 0 \mid A_t = 0, H_t) & = \gammaplaceholder^{(1.5-0.5Z_t)/\Delta}, \label{eq:dgm-R-A0}\\
    P(R_{t+1} = 0 \mid A_t = 1, H_t) & = \frac{1- \{1-\gammaplaceholder^{(1.5-0.5Z_t)/\Delta}\cdot (3/C\cdot \gammaplaceholder^{1/\Delta})^{\Delta-1}\} \cdot e^{0.1+0.2Z_t}}{(3/C \cdot \gammaplaceholder^{1/\Delta})^{\Delta-1}}. \label{eq:dgm-R-A1}
\end{align}
We set $R_{t+1} = 0$ for $ T < t \leq T + \Delta$.
The proximal outcome was generated as $Y_{t,\Delta} = \max (R_{t+1}, \ldots, R_{t+\Delta})$. 

We will show in the next subsection that for $1 \leq t \leq T - \Delta$, 
\begin{align}
    & E\{Y_{t,\Delta}(\bar{A}_{t-1}, a_t, \bar{0}_{\Delta-1}) \mid H_t(\bar{A}_{t-1}), I_t(\bar{A}_{t-1}) = 1\} = \{1 - \gammaplaceholder^{(\Delta+0.5-0.5Z_{t})/\Delta}\cdot(3/C)^{\Delta-1}\} \cdot e^{a_t(0.1+0.2Z_t)}. \label{eq:appen:dgm-Y}
\end{align}
Therefore, $\est(Z_t) = 0.1 + 0.2 Z_t$ for $1 \leq t \leq T - \Delta$ under the generative model, and this holds for all $\Delta \geq 1$ and $p_a \in (0,1)$. 
For $T - \Delta < t < T$, $\est(Z_t)$ is slightly different but its influence on the simulation results is negligible because in the simulations $\Delta \ll T$.

There is substantial variability in $R_{t+1}$ and $Y_{t,\Delta}$. When we vary $\Delta$ from 2 to 10 and vary $Z_t \in \{0,1,2\}$,
\eqref{eq:dgm-R-A0} varies from 0.59 to 0.97, \eqref{eq:dgm-R-A1} varies from 0.38 to 0.80, and \eqref{eq:appen:dgm-Y} varies from 0.41 to 0.58 when setting $a_t = 0$ and from 0.57 to 0.80 when setting $a_t = 1$. 

We constructed this generative model based on three considerations.
First, in order to fairly assess how the efficiency improvement is affected by $\Delta$ and $p_a$, we needed $\est(Z_t)$ to not depend on $\Delta$ or $p_a$. 
Thus, the distribution of the sub-outcome $R_t$ would depend on $\Delta$ because there are $\Delta$ sub-outcomes in $Y_{t,\Delta}$.
Second, we wanted to construct a complicated $E\{Y_{t,\Delta}(\bar{A}_{t-1}, 0, \bar{0}_{\Delta-1}) \mid H_t, I_t = 1\}$ to demonstrate the robustness of the estimator against misspecified working models.
Third, we wanted $Z_t$ to depend on $\Delta$ as little as possible. In a real application, $Z_t$ (a covariate) would not depend on $\Delta$ (a choice in the analysis). However, we found it difficult to achieve while keeping the form of $\est(Z_t)$ simple. The current generating distribution of $Z_t$ \eqref{eq:dgm-Zt} is the least dependent on $\Delta$ that we came up with, and the three probabilities in \eqref{eq:dgm-Zt} are all very close to 1/3 for $\Delta \geq 2$.

\readd{

\subsection{Proof of Equation \ref{eq:appen:dgm-Y}}
We show that
\begin{equation*}
    E(Y_{t,\Delta}(\bar{A}_{t-1}, a, \bar{0}_{\Delta-1}) \mid H_t(\bar{A}_{t-1}), I_t(\bar{A}_{t-1}) = 1) = \{1 -  \gamma^{(\Delta + 0.5(1-Z_t))/\Delta} \cdot (3/C)^{\Delta-1}\} \cdot e^{a(0.1+0.2Z_t)}.
\end{equation*}
\begin{proof}
    Define $\phi_0(Z_t)$ and $\phi_1(Z_t)$ as follows:
    \begin{align}
         \phi_0(Z_t) := \gamma^{(1.5-0.5Z_t)/\Delta} = P(R_{t+1} = 0 \mid A_t = 0, H_t) \nonumber
    \end{align}
    and
    \begin{align}
         \phi_1(Z_t) := \frac{1-\{1-\gamma^{(1.5-0.5Z_t)/\Delta}\cdot (3/C\cdot \gamma^{1/\Delta})^{\Delta-1}\} \cdot \exp\{0.1+0.2Z_t\}}{(3/C \cdot \gamma^{1/\Delta})^{\Delta-1}} = P(R_{t+1} = 0 \mid A_t = 1, H_t) \nonumber
    \end{align}
    where $C = \gamma^{-1/2\Delta} + \gamma^{1/2\Delta} + 1$. (In the generative model, the feasible value of $\gamma$ is subject to the constraint that $\phi_0(Z_t)$ and $\phi_1(Z_t)$ are bounded by $(0,1)$.)
    Then we have the following equation for $E(Y_{t,\Delta}(\bar{A}_{t-1}, 1, \bar{0})\mid H_t)$:
    \begin{align}
        & E(Y_{t,\Delta}(\bar{A}_{t-1}, 1, \bar{0})\mid H_t) = P(Y_{t,\Delta}(\bar{A}_{t-1}, 1, \bar{0}) = 1 \mid H_t) \nonumber\\
        & = 1 - P(R_{t+1}(\bar{A}_{t-1}, 1, \bar{0}) = 0, R_{t+2}(\bar{A}_{t-1}, 1, \bar{0}) = 0, \ldots, R_{t+\Delta}(\bar{A}_{t-1}, 1, \bar{0}) = 0 \mid H_t)
        \label{eq:dm1} \\
        & = 1 - P(R_{t+1}(\bar{A}_{t-1}, 1, \bar{0}) = 0 \mid H_t) \cdot P(R_{t+2}(\bar{A}_{t-1}, 1, \bar{0}) = 0 \mid H_t) \cdots P(R_{t+\Delta}(\bar{A}_{t-1}, 1, \bar{0}) = 0 \mid H_t), \label{eq:dm2}
    \end{align}
    where \eqref{eq:dm1} follows from Assumption \ref{assumption4}, and \eqref{eq:dm2} follows from the fact that $R_{t+1}, \ldots, R_{t+\Delta}$ are independent of each other in the generative model. Similarly, for $E(Y_{t,\Delta}(\bar{A}_{t-1}, 0, \bar{0})$ we have
    \begin{align}
        & E(Y_{t,\Delta}(\bar{A}_{t-1}, 0, \bar{0})\mid H_t) \nonumber\\
        & = 1 - P(R_{t+1}(\bar{A}_{t-1}, 0, \bar{0}) = 0 \mid H_t) \cdot P(R_{t+2}(\bar{A}_{t-1}, 0, \bar{0}) = 0 \mid H_t) \cdots P(R_{t+\Delta}(\bar{A}_{t-1}, 0, \bar{0}) = 0 \mid H_t). \label{eq:dm2.1}
    \end{align}

    For any $1 \leq j \leq \Delta - 1$ and for any $a \in \{0,1\}$ we have
    \begin{align}
        & P\left(R_{t+j+1}\left(A_{t-1}, a, \bar{0}\right)=0 \mid H_t\right)\nonumber \\
        \mbox{(Assumption \ref{assumption2})} & = P\left(R_{t+j+1}\left(A_{t-1}, a, \bar{0}\right)=0 \mid H_t, A_t = a\right)\nonumber \\
        \mbox{(law of iterated expectation)} & =E\left\{E\left(\mathbbm{1}\left(R_{t+j+1}\left(\bar{A}_{t-1},a, \overline{0}\right)=0\right) \mid H_{t+j}\right) \mid H_t, A_t = a\right\} \nonumber\\
        & =E\{E\left(\mathbbm{1}\left(R_{t+j+1}=0\right)\left|H_{t+j}, \left.A_{t+j}=0\right)\right| H_t, A_t = a\right\} \label{eq:dm2.2}\\
        \mbox{(definition of $\phi_0$)} & = E\{\phi_0(Z_{t+j}) \mid H_t , A_t = a\} \nonumber\\
        \mbox{($\phi_0(Z_{t+j})$ is independent of $H_t,A_t$)} & = E\{\phi_0(Z_{t+j})\} \\
        & = \gamma^{(1.5-0.5\cdot 0)/\Delta} \cdot \gamma^{-1/2\Delta}/C + \gamma^{(1.5-0.5\cdot 1)/\Delta} \cdot 1/C + \gamma^{(1.5-0.5\cdot 2)/\Delta} \cdot \gamma^{1/2\Delta}/C \nonumber\\
        & = 3/C \cdot  \gamma^{1/\Delta}, \label{eq:dm2.5}
    \end{align}
    where \eqref{eq:dm2.2} follows from Assumption \ref{assumption1} and the fact that $R_{t+j+1}$ only depends on $(Z_{t+j},A_{t+j})$ in the generative model.
    Plugging \eqref{eq:dm2.5} into \eqref{eq:dm2} and we have
    \begin{align}
        & E(Y_{t,\Delta}(\bar{A}_{t-1}, 1, \bar{0})\mid H_t) \nonumber\\
        & = 1 - \phi_1(Z_t) \cdot \{3/C \cdot \gamma^{1/\Delta}\}^{\Delta-1}\nonumber \\
        & = \{1- \gamma^{(1.5-0.5Z_t)/\Delta}\cdot (3/C\cdot \gamma^{1/\Delta})^{\Delta-1}\} \cdot \exp\{0.1+0.2Z_t\} \nonumber\\
        & = \{1 - \gamma^{(\Delta+0.5-0.5Z_{t})/\Delta}\cdot(3/C)^{\Delta-1}\} \cdot \exp\{0.1+0.2Z_t\} \label{eq:dm3}
    \end{align}
    Plugging \eqref{eq:dm2.5} into \eqref{eq:dm2.1} and we have
    \begin{align}
        & E(Y_{t,\Delta}(\bar{A}_{t-1}, 0, \bar{0})\mid H_t) \nonumber\\
        & = 1 - \phi_0(Z_t) \cdot \{3/C \cdot 0.5^{1/\Delta}\}^{\Delta-1}\nonumber \\
        & = 1 -  \gamma^{(1.5-0.5Z_t)/\Delta} \cdot \{3/C \cdot \gamma^{1/\Delta}\}^{\Delta-1} \nonumber\\
        & = 1 - \gamma^{(\Delta+0.5-0.5Z_{t})/\Delta}\cdot(3/C)^{\Delta-1} \label{eq:dm4}
    \end{align}
    Based on \eqref{eq:dm3} and \eqref{eq:dm4}, we have:
    \begin{align*}
        E(Y_{t,\Delta}(\bar{A}_{t-1}, A_t, \bar{0})\mid H_t, A_t) = \{1 - \gamma^{(\Delta+0.5-0.5Z_{t})/\Delta}\cdot(3/C)^{\Delta-1}\} \cdot \exp\{A_t(0.1+0.2Z_t)\}
    \end{align*}
\end{proof}

\section{Generalized Estimating Equations (GEE) Used in Simulation}
\label{appen:sec:gee}

In the simulations, we considered two estimators based on generalized estimating equations (GEE): GEE with independent correlation structure (GEE.ind), and GEE with exchangeable correlation structure (GEE.exch). Here for concreteness we write down the mean and the variance models for both GEEs, and write down the estimating equation for GEE.ind.

Because we are considering causal effects on the relative risk scale, the GEE assumes the following mean model using a log link: $E(Y_{t,\Delta}\mid H_{t},A_{t})=e^{g(H_{t})^{T}\alpha+A_{t}S_{t}^{T}\beta}=:\mu_{t}(\alpha,\beta)$. Because $Y_{t,\Delta}$ is a binary variable, its variance is associated with the mean model: $\text{var}(Y_{t,\Delta}\mid H_{t},A_{t}) = \mu_{t}(\alpha,\beta)\{1 - \mu_{t}(\alpha,\beta)\}$. The estimating equation can be constructed following the original GEE proposal in \citet{liang1986longitudinal}. Below we write down the estimating equation for GEE.ind:
\begin{align*}
    \mathbbm{P}_{n} \sum_t^T \frac{\partial\mu_{t}(\alpha,\beta)}{\partial(\alpha,\beta)^T} \text{var} (Y_{t,\Delta}\mid H_{t},A_{t})^{-1}\{Y_{t,\Delta}-\mu_{t}(\alpha,\beta)\} = 0,
\end{align*}
which, by the form of $\mu_{t}(\alpha,\beta)$ and $\text{var}(Y_{t,\Delta}\mid H_{t},A_{t})$, simplifies to
\begin{align*}
    \mathbbm{P}_{n}\sum_{t} \left\{1-e^{g(H_{t})^{T}\alpha+A_{t}S_{t}^{T}\beta} \right\}^{-1} \left\{Y_{t,\Delta}-e^{g(H_{t})^{T}\alpha+A_{t}S_{t}^{T}\beta}\right\}\begin{bmatrix}g(H_{t})\\
        A_{t}S_{t}
        \end{bmatrix} = 0.
\end{align*}
Here we used $\mathbbm{P}_{n}$ to denote the empirical average over all individuals. In the simulation, the particular mean model used is $E(Y_{t,\Delta}\mid H_{t},A_{t}) = e^{\alpha_0 + \alpha_1 Z_t + \beta A_t}$.

}

\readd{

\section{Generalization to Other Reference Regimes}
\label{appen:sec:reference-regime}

In the main paper, we considered the causal excursion effect contrasting treatment trajectories $(\bar{A}_{t-1}, 1, \bar{0}_{\Delta-1})$ and $(\bar{A}_{t-1}, 0, \bar{0}_{\Delta-1})$, where the future reference regime ($A_{t+1},\ldots,A_{t+\Delta-1}$) in the two trajectories are all set to 0. That is, we consider excursions starting from decision point $t$ that receives treatment (or no treatment) at $t$, and then receive no treatment for the next $\Delta-1$ decision points.

As pointed out by a reviewer, the choice of the reference treatment regime depends on the scientific question, and in some applications other reference regimes may be of interest. For example, in \citet{boruvka2018assessing} they considered causal excursion effect contrasting treatment trajectories $(\bar{A}_{t-1}, 1, \bar{A}_{t+1:t+\Delta-1})$ and $(\bar{A}_{t-1}, 1, \bar{A}_{t+1:t+\Delta-1})$, where $\bar{A}_{t+1:t+\Delta-1}:= (A_{t+1}, A_{t+2}, \ldots, A_{t+\Delta-1})$. That is, they considered excursions starting from decision point $t$ that receive treatment (or no treatment) at $t$, and then switch back to follow the MRT policy for the next $\Delta-1$ decision points.

We generalize our method to reference regimes that are in between the reference regime we considered in the main paper and the reference regime considered in \citet{boruvka2018assessing}. That is, excursions starting from decision point $t$ that receive treatment (or no treatment) at $t$, then receive no treatment for the next $K$ decision points (with some $0\leq K \leq \Delta-1$), then follow the MRT policy for the next $\Delta - 1 - K$ decision points.

Formally, consider the following causal excursion effect for $0 \leq K \leq \Delta - 1$:
\begin{equation}
    \est^{(K)} \left\{S_t(\bar{A}_{t-1}) \right\}=\log \frac{E\left\{Y_{t, \Delta}(\bar{A}_{t-1}, 1, \bar{0}_K, \bar{A}_{t+K+1:t+\Delta - 1}) \mid S_t(\bar{A}_{t-1}), I_t(\bar{A}_{t-1})=1\right\}}{E\left\{Y_{t, \Delta}(\bar{A}_{t-1}, 0, \bar{0}_K, \bar{A}_{t+K+1:t+\Delta - 1}) \mid S_t(\bar{A}_{t-1}), I_t(\bar{A}_{t-1})=1\right\}}.
    \label{appen:eq:CEE-other-reference-regime}
\end{equation}
Under the same set of causal assumptions as in Section 2.4, we have the identification result:
\begin{align}
    \est^{(K)}\left\{S_t(\bar{A}_{t-1})\right\} 
    = \log \frac{E \bigg[ E\left\{\prod_{j=t+1}^{t+K}\left[\frac{\mathbbm{1}(A_{j}=0)}{1-p_{j}\left(H_{j}\right)}\right]^{\mathbbm{1}(\max_{1\leq s\leq j-t}R_{t+s}=0)} Y_{t,\Delta}\mid A_{t}=1,H_{t},I_{t}=1\right\}  \mid S_t, I_t = 1 \bigg]} {E \bigg[ E\left\{ \prod_{j=t+1}^{t+K}\left[\frac{\mathbbm{1}(A_{j}=0)}{1-p_{j}\left(H_{j}\right)}\right]^{\mathbbm{1}(\max_{1\leq s\leq j-t}R_{t+s}=0)} Y_{t,\Delta}\mid A_{t}=0,H_{t},I_{t}=1\right\}  \mid S_t, I_t = 1 \bigg]}.
    \label{appen:eq:identification-other-reference-regime}
\end{align}
The difference between the two identification results \eqref{appen:eq:identification-other-reference-regime} and (4) is that in \eqref{appen:eq:identification-other-reference-regime} the product of weights is only up to decision point $t+K$ instead of up to $t+\Delta-1$, because after decision point $t+K$ the treatment trajectory comes back to the MRT protocol and thus no additional weight is needed.

Suppose we impose a linear model analogous to (5): $\est^{(K)}(S_t) = S_t^T \beta^{(K)}$. To estimate $\beta^{(K)}$, an estimating function analogous to (6) can be used by revising the weights:
\begin{equation}
    U_i^{(K)}(\alpha,\beta^{(K)}) = \sum_{t=1}^T I_{it} e^{- A_{it} S_{it}^T \beta^{(K)} } \epsilon_{it} (\alpha,\beta^{(K)}) M_{it} W_{it}^{(K)} \begin{bmatrix}
        g(H_{it}) \\           
        \{A_{it} - \tilde{p}_t(S_{it})\}S_{it} 
    \end{bmatrix},
    \label{appen:eq:estimatingeq-other-reference-regime}
\end{equation}
with
\begin{equation*}
    W_{it}^{(K)} = \prod_{j=t+1}^{t+K} \left\{\frac{\mathbbm{1}(A_{ij}=0)}{1-p_{j}(H_{ij})}\right\}^{\mathbbm{1}(\max_{1\leq s\leq j-t}R_{i,t+s}=0)}.
\end{equation*}
Using the same proof for Theorem 1, it is straightforward to show that consistency and asymptotic normality for $\hat\beta^{(K)}$ holds with the revised estimating function.

Below we present simulation results to compare this generalized version of our method (generalized pd-EMEE) and a generalized version of EMEE by \citet{qian2021estimating}, and see how the relative efficiency depends on the choice of reference regime. Here the EMEE by \citet{qian2021estimating} needs to be generalized to handle the reference regimes we consider here. In particular, for the generalized version of EMEE, the estimating function would be the same as \ref{appen:eq:estimatingeq-other-reference-regime} except that $W_{it}^{(K)}$ would be replaced by
\begin{equation*}
    \tilde{W}_{it}^{(K)} = \prod_{j=t+1}^{t+K} \left\{\frac{\mathbbm{1}(A_{ij}=0)}{1-p_{j}(H_{ij})}\right\}.
\end{equation*}
We used the same generative model as in Section \ref{sec:simulation} and set $\Delta = 10$. When $K$ varies from 0 to 9, the relative efficiency between the generalized pd-EMEE and the generalized EMEE increases from 1 to over 1.4 (Figure \ref{fig:appen-simulation-ref-regime}). This confirms the intuition that the farther the reference regime is from the MRT policy, the more efficiency gain there is when we use pd-EMEE over EMEE.

\begin{figure}[htbp]
    \begin{center}
    \caption{Relative efficiency between the generalized pd-EMEE and the generalized EMEE with various $K$, where $K$ parameterizes the reference regime. A relative efficiency of greater 1 means that the generalized pd-EMEE is more efficient.}
    \includegraphics[width = 0.5\textwidth]{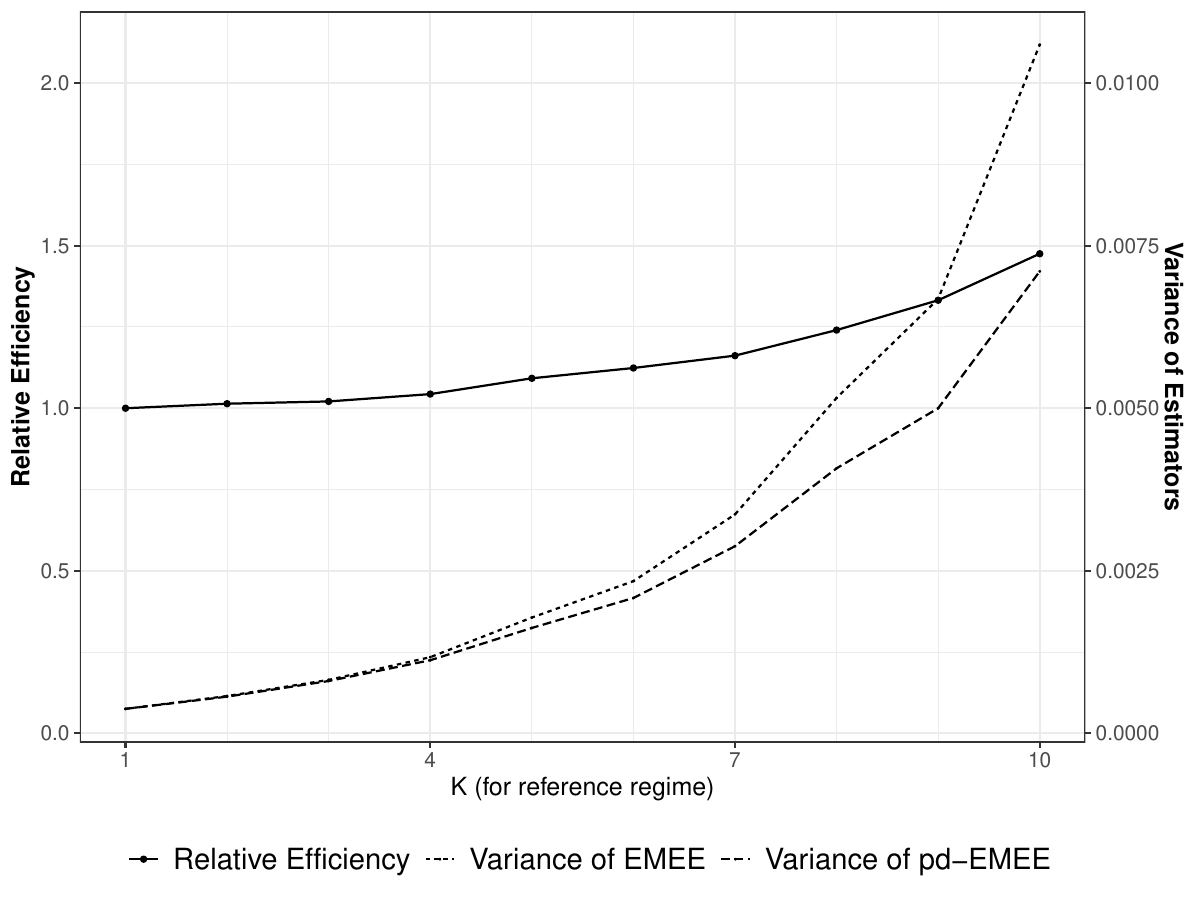}
    \label{fig:appen-simulation-ref-regime}
    \end{center}
\end{figure}
}


\end{document}